\providecommand{\tabularnewline}{\\}
\newcolumntype{d}[1]{D{.}{.}{#1}}
\newcolumntype{t}[1]{D{,}{,}{#1}}
\newcolumntype{i}[1]{D{.}{}{#1}}
\newtheorem{theorem}{Theorem}[section]
\newtheorem{corollary}{Corollary}[section]
\newtheorem{definition}{Definition}[section]
\newtheorem{example}{Example}
\newtheorem{lemma}{Lemma}[section]
\newtheorem{proposition}{Proposition}[section]
\newtheorem{remark}{Remark}[section]
\theoremstyle{plain} 
\newtheorem*{asSEL}{Assumption SEL}
\newtheorem*{asEX}{Assumption EX}
\newtheorem*{asR}{Assumption R}
\newtheorem*{asU}{Assumption U}
\newtheorem*{asU2}{Assumption U$^{*}$}
\newtheorem*{asU3}{Assumption U$^{0}$}
\newtheorem*{asM}{Assumption M}
\newtheorem*{asC}{Assumption C}
\newtheorem*{asMTS}{Assumption MTS}
\newtheorem*{asMAT}{Assumption MAT}
\newtheorem*{asEC}{Assumption EC}
\newtheorem*{asI}{Assumptions I}
\numberwithin{equation}{section}
\begin{document}

\title{A Computational Approach to Identification of Treatment Effects for
Policy Evaluation\thanks{The authors are grateful to Jason Abrevaya, Brendan Kline, Xun Tang,
Alex Torgovitsky, Ed Vytlacil, Haiqing Xu, and participants in the
Royal Economic Society 2021 Annual Conference, the 2021 North American
Summer Meeting, the 2021 Asian Meeting, the 2020 North American Winter
Meeting of the Econometric Society, the 2020 Texas Econometrics Camp,
and the workshop at UT Austin for helpful comments and discussions.
We also thank Elie Tamer, the Associate Editor, and the three anonymous
referees for valuable suggestions.}}

\author{Sukjin Han\\
 School of Economics\\
 University of Bristol\\
 \href{mailto:sukjin.han\%40gmail.com}{sukjin.han@gmail.com}\and
Shenshen Yang\\
Ma Yinchu School of Economics \\
 Tianjin University \\
\href{mailto: shenshenyang@tju.edu.cn}{ shenshenyang@tju.edu.cn}}

\date{This Draft: July 31, 2023}
\maketitle
\begin{abstract}
For counterfactual policy evaluation, it is important to ensure that
treatment parameters are relevant to policies in question. This is
especially challenging under unobserved heterogeneity, as is well
featured in the definition of the local average treatment effect (LATE).
Being intrinsically local, the LATE is known to lack external validity
in counterfactual environments. This paper investigates the possibility
of extrapolating local treatment effects to different counterfactual
settings when instrumental variables are only binary. We propose a
novel framework to systematically calculate sharp nonparametric bounds
on various policy-relevant treatment parameters that are defined as
weighted averages of the marginal treatment effect (MTE). Our framework
is flexible enough to fully incorporate statistical independence (rather
than mean independence) of instruments and a large menu of identifying
assumptions beyond the shape restrictions on the MTE that have been
considered in prior studies. We apply our method to understand the
effects of medical insurance policies on the use of medical services.

\vspace{0.1in}

\noindent \textit{JEL Numbers:} C14, C32, C33, C36

\noindent \textit{Keywords:} Heterogeneous treatment effects, local
average treatment effects, marginal treatment effects, extrapolation,
partial identification, linear programming.
\end{abstract}

\section{Introduction\label{sec:Introduction}}

For counterfactual policy evaluation, it is important to ensure that
treatment parameters are relevant to the policies in question. This
is especially challenging in the presence of unobserved heterogeneity.
This challenge is well featured in the definition of the local average
treatment effect (LATE). The LATE has been one of the most popular
treatment parameters used by empirical researchers since it was introduced
by \citet{imbens1994identification}. It induces a straightforward
linear estimation method that requires only a binary instrumental
variable (IV), and yet, allows for unrestricted treatment heterogeneity.
The unfortunate feature of the LATE is that, as the name suggests,
the parameter is intrinsically local, recovering the average treatment
effect (ATE) for a specific subgroup of population called compliers.
This feature leads to two major challenges in making the LATE a reliable
parameter for counterfactual policy evaluation. First, the subpopulation
for which the effect is measured (e.g., via randomized experiments)
may not be the population of policy interest. Second, the definition
of the subpopulation depends on the IV chosen, rendering the parameter
even more difficult to extrapolate to targeted environments.

Dealing with the lack of external validity of the LATE has been an
important theme in the literature. One approach in theoretical work
(\citet{angrist2010extrapolate,bertanha2019external}) and empirical
research (\citet{dehejia2019local,muralidharan2019disrupting}) has
been to show the similarity between complier and non-complier groups
based on observables. This approach, however, cannot attend to possible
unobservable discrepancies between these groups. \citet{heckman2005structural}
unify well-known treatment parameters by expressing them as weighted
averages of what they define as the marginal treatment effect (MTE).
This MTE framework has a great potential for extrapolation because
a class of treatment parameters that are policy-relevant can also
be generated as weighted averages of the MTE.\footnote{See \citet{heckman2010building} for elaboration of this point.}
The only obstacle is that the MTE is identified via a method called
local IV (\citet{heckman1999local}), which requires the continuous
variation of the IV that is sometime large depending on the target
parameter. This in turn reflects the intrinsic difficulty of extrapolation
when available exogenous variation is only discrete. Acknowledging
this nature of the challenge, previous studies in the literature have
proposed imposing shape restrictions on the MTE, which is a function
of the treatment-selection unobservable, while allowing for binary
instruments in the framework of \citet{heckman2005structural}. \citet{brinch2017beyond}
introduce shape restrictions (e.g., linearity) on the MTE functions
in an attempt to identify the LATE extrapolated to different subpopulations
or to test for its external validity. In interesting recent work,
\citet{mogstad2018using} propose a general partial identification
framework where bounds on various policy-relevant treatment parameters
can be obtained from a set of ``IV-like estimands'' that are directly
identified from the data and routinely obtained in empirical work.
\citet{kowalski2021reconciling} applies an approach similar to these
studies to extrapolate the results from one health insurance experiment
to an external setting.

This paper continues this pursuit and investigates the possibility
of extrapolating local treatment parameters to different policy settings
in the MTE framework when IVs are possibly only binary. We propose
a computational approach to calculate sharp nonparametric bounds on
various extrapolated treatment parameters for discrete and continuous
outcomes. We use IVs that satisfy the statistical independence assumption
conditional on covariates. The parameters are defined as weighted
averages of the MTE. Examples include the ATE, the treatment effect
on the treated, the LATE for subgroups induced by new policies, and
the policy-relevant treatment effect (PRTE). We also show how to place
in this procedure restrictions from a large menu of identifying assumptions
beyond the shape restrictions considered in earlier work. 

In this paper, we make four main contributions. First, we propose
a novel framework for systematically calculating bounds on policy-relevant
treatment parameters. We introduce the distribution of the latent
state of the outcome-generating process conditional on the treatment-selection
unobservable. This latent conditional distribution is the key ingredient
for our analysis, as both the target parameter and the distribution
of the observables can be written as linear functionals of it. Because
the latent distribution is a fundamental quantity in the data-generating
process, it is convenient to impose identifying assumptions. Having
the latent distribution as a decision variable, we can formulate infinite-dimensional
linear programming (LP) that produces bounds on a targeted treatment
parameter. Our approach is reminiscent of \citet{balke1997bounds}
and can be viewed as its generalization to the MTE framework. \citet{balke1997bounds}
characterize bounds on the ATE using a binary outcome, treatment and
instrument by introducing a LP approach with the latent response vector
as the decision variable. The main distinction of our approach is
that the latent distribution is conditioned on the selection unobservable,
which makes the program infinite-dimensional, but is important for
our extrapolation purpose. We also allow for both discrete and continuous
$Y$. To make it feasible to solve the resulting infinite-dimensional
program, we use a sieve-like approximation of the program and produce
a finite-dimensional LP. We also develop a method to rescale the LP
to resolve computational issues that arise with a large sieve dimension.

The use of approximation to construct an LP is similar to \citet{mogstad2018using}'s
approach. However, the approach we take differs from theirs in the
following way. While they use the MTE function (more precisely, each
term in the MTE) as the main ingredient to relate IV-like estimands
to target parameters, we use the latent distribution as our main building
block to relate the full distribution of the data to target parameters.
The main consequence of this difference is that we can exhaust the
identifying power of statistical independence of IVs, while their
approach can exploit mean independence. The two approaches are complementary.
For example, when IVs are generated from randomized experiments, one
can comfortably assume full independence, in which case our approach
can be applied to enjoy the tighter bounds than those under mean independence.
This can be useful when the external validity of experimental results
is in question, making the extrapolation of the LATE desirable. 

Second, we introduce identifying assumptions that have not been used
in the context of the MTE framework or the LATE extrapolation. They
include assumptions that there exist exogenous variables other than
IVs. One of the main messages we hope to deliver in this paper is
that, given the challenge of extrapolation, additional exogenous variation
can be useful to conduct informative policy evaluation. We propose
two types of exogenous variables that have been used in the literature
in the context of identifying the ATE: \citet{SV11}, \citet{mourifie2015sharp},
\citet{han2017identification}, \citet{vuong2017counterfactual},
and \citet{han2019estimation} use the first type (entering the outcome
and selection equations), and \citet{VY07}, \citet{liu2020two},
and \citet{balat2020multiple} use the second type (only entering
the outcome equation). For example, the existence of the second type
can be plausible when the agent has imperfect foresight when making
the treatment selection decision. We utilize these variables in the
context of the MTE framework. Moreover, while the existing papers
on the ATE make use of these variables in combination with rank similarity,
rank invariance, or additive separability, we show that they independently
have identifying power for treatment parameters, including the ATE.
In general, it may not be always easy to find such exogenous variables.
But when the researcher does find it, it can be a more reliable source
of identification than assumptions on counterfactual quantities, as
the identifying power comes from the data rather than the researcher's
prior.

We also propose identifying assumptions that restrict treatment effect
heterogeneity. In particular, we propose a range of uniformity assumptions
that relate to rank similarity or rank invariance (\citet{chernozhukov2005iv})
and monotone treatment response in \citet{MP00}, including a novel
identifying assumption, called \textit{rank dominance}. The direction
of endogeneity can also be incorporated in this MTE framework. This
assumption is sometimes imposed in empirical work to characterize
selection bias and has been shown to have identifying power for the
ATE (\citet{MP00}).

Third, we show that our approach yields straightforward proof of the
sharpness of the resulting bounds, no matter whether the outcome is
discrete or continuous and whether additional identifying assumptions
are imposed or not. This feature stems from the use of the latent
conditional distribution in the linear programming and the convexity
of the feasible set in the program. When the MTE itself is the target
parameter, we distinguish between the notions of pointwise and uniform
sharpness and argue why uniform sharpness is often difficult to achieve.

Fourth, as an application, we study the effects of insurance on medical
service utilization by considering various counterfactual policies
related to insurance coverage. The LATE for compliers and the bounds
on the LATE for always-takers and never-takers reveal that possessing
private insurance tend to have the largest effect on medical visits
for never-takers, i.e., those who face higher insurance cost. This
provides a policy implication that lowering the cost of private insurance
is important, because the high cost might hinder people with most
need from receiving adequate medical services.

The linear programming approach to partial identification of treatment
effects was pioneered by \citet{balke1997bounds} and recently gained
attention in the literature; see, e.g., \citet{Chi10}, \citet{mogstad2018using},
\citet{machado2018instrumental}, \citet{kamat2017identification},
\citet{gunsilius2019bounds}, \citet{han2019optimal}, \citet{russell2021sharp},
\citet{han2023quantile}.\footnote{There are also studies that use linear programming for partial identification
of parameters that are not necessarily treatment effects; see e.g.,
\citet{honore2006bounds}, \citet{honore2006bounds2}, \citet{freyberger2015identification},
\citet{torgovitsky2019partial}, and \citet{gu2022partial}.} As these papers suggest, there are many settings, including ours,
where analytical derivation of bounds is cumbersome or nearly impossible
due to the complexity of the problems. Also, the computational approach
can streamline the sensitivity analysis of a researcher without needing
to analytically derive bounds and prove their sharpness whenever changing
the set of identifying assumptions.

As concurrent work to ours, \citet{marx2020sharp} also considers
partial identification of policy-relevant treatment parameters in
the MTE framework. In his paper, sharp analytical bounds are derived
for treatment parameters for the subset of compliers, and the identifying
power of rank similarity and covariates is explored for general treatment
parameters. The current paper is similar to his in that we also fully
exhaust statistical independence (rather than mean independence) and
produce sharp bounds. However, our approach differs in a few important
ways. First, we provide a computational framework that enables the
systematic calculation of bounds. Second, with the computational approach,
we produce bounds for a range of treatment parameters under various
identifying assumptions that have not been previously explored in
this context. Also, the computational approach makes it convenient
to conduct sensitivity analyses with different sets of assumptions.

This paper proceeds as follows. The next section introduces the main
observables, maintained assumptions, and parameters of interest. Section
\ref{sec:Distribution-of-State} defines the latent conditional probability
and formulates the infinite-dimensional LP, and Section \ref{sec:Sieve-Approximation-and}
introduces sieve approximation to the program. Section \ref{sec:Identifying-Power-of}
establishes the connection between this paper and \citet{mogstad2018using}.
Section \ref{sec:Additional-Assumptions} introduces additional identifying
assumptions and shows how they can easily be incorporated in the LP.
So far, the analysis is given with discrete $Y$, which is extended
to the case with continuous $Y$ in Section \ref{sec:Extension:-Continuous}.
Section \ref{sec:Simulation} provides numerical illustrations, and
Section \ref{sec:Empirical-Application} contains an empirical application.
In the Appendix, Section \ref{sec:Examples-of-the} lists other examples
of target parameters. Section \ref{sec:Discussions} discusses (i)
rescaling of the LP, (ii) the pointwise and uniform sharpness for
the MTE bounds, (iii) the extension with continuous covariates, and
(iv) estimation and inference. All proofs are contained in Section
\ref{sec:Proofs}. Additional numerical results can be found in Section
\ref{sec:Additional-Numerical-Exercise}.

\section{Assumptions and Target Parameters\label{sec:Preliminaries:-Observables,-Assu}}

Assume that we observe a discrete or continuous outcome $Y\in\mathcal{Y}\subseteq\mathbb{R}$,
binary treatment $D\in\{0,1\}$, and possibly discrete instrument
$Z\in\mathcal{Z}\subseteq\mathbb{R}$. The leading case is binary
$Z$, which is common especially in randomized experiments. We may
additionally observe an exogenous variable $W\in\mathcal{W}\subseteq\mathbb{R}$
and possibly endogenous covariates $X\in\mathcal{X}\subseteq\mathbb{R}^{d_{X}}$.

Let $Y(d)$ be the counterfactual outcome given $d$ and $Y(d,w)$
be the extended counterfactual outcome given $(d,w)$, which are consistent
with the observed outcome: $Y=\sum_{d\in\{0,1\}}1\{D=d\}Y(d)=\sum_{d\in\{0,1\},w\in\mathcal{W}}1\{D=d,W=w\}Y(d,w)$.

\begin{asEX}For given $(d,w)\in\{0,1\}\times\mathcal{W}$, $(Y(d,w),U)\perp(Z,W)|X$.\end{asEX}

When there is no $W$ this assumption and all below are understood
as $W$ being degenerate. Assumption EX imposes the exclusion restriction
and conditional statistical independence for $Z$ (and $W$). One
of the contributions of this paper is to propose a framework that
can make use of full independence instead of mean independence (\citet{mogstad2018using})
and show the identifying power of the former relative to the latter.
This feature arises regardless of the existence of $W$. We formally
discuss the identifying power of full independence relative to mean
independence in Section \ref{sec:Identifying-Power-of}. Also, note
that Assumption EX imposes marginal independence rather than joint
independence of $\{Y(d,w),U\}_{(d,w)\in\{0,1\}\times\mathcal{W}}\perp(Z,W)|X$.

We consider two different scenarios related to $W$: (a) $W$ directly
affects $Y$ but not $D$ and (b) $W$ directly affects both $Y$
and $D$. Accordingly, we maintain the following assumptions.

\begin{asSEL}(a) $D=1\{U\le P(Z,X)\}$ where $P(Z,X)\equiv\Pr[D=1|Z,X]$;
\\
(b) $D=1\{U\le P(Z,X,W)\}$ where $P(Z,X,W)\equiv\Pr[D=1|Z,X,W]$.\end{asSEL}

We introduce $W$ as an additional exogenous variable researchers
may be equipped with in addition to the instrument $Z$. Given the
challenge of extrapolation with minimal variation in $Z$, it would
be important to search for additional exogenous variables. In the
case of (a), such variables can be motivated by exogenous shocks that
agents cannot fully anticipate at the time of making treatment choices.
For example, let $Y$ be the earning and $D$ be the college attendance.
In this example, $W$ can be a randomized job training program that
directly affects $Y$ but whose lottery outcome cannot be foreseen
when making the college decision. As another example, when $Y$ is
the health outcome and $D$ is getting an insurance, $W$ can be random
health or policy shocks that cannot be fully anticipated when making
the insurance decision. In the context of generalized Roy models,
(a) is consistent with agents' limited information when comparing
the potential outcomes of the two treatment states. We show the identifying
power of $W$ even with its minimal variation. This is the first paper
that formally uses this type of variable for identification in the
MTE framework.\footnote{Relatedly, \citet{eisenhauer2015generalized} allows variables of
type (a) in the context of generalized Roy models. However, they consider
these variables only as a feature of agent's limited information but
\emph{not} as a source of identification. Specifically, their identification
of the MTE and cost parameters only relies on the exogenous variables
that are known to the agent at the time of selection into treatment
(i.e., using their notation, they rely on $Z$ to identify the MTE
and $X_{I}$ to identify the cost parameters); see pp. 430-431 of
their paper.} Note that the requirement of reverse exclusion of $W$ in (a) can
be tested from the data by inspecting whether $\Pr[D=1|Z,X,W]=\Pr[D=1|Z,X]$.

Assumption SEL imposes a selection model for $D$, which is important
in motivating and interpreting marginal treatment effects later. This
assumption is also equivalent to \citet{imbens1994identification}'s
monotonicity assumption (\citet{vytlacil2002independence}). We introduce
the standard normalization that $U\sim Unif[0,1]$ conditional on
$X=x$.\footnote{Note that for any index function $g(z,x)$ and an unobservable $\varepsilon$
with any distribution, the selection model satisfies $D=1\{\varepsilon\le g(Z,X)\}=1\{F_{\varepsilon|X}(\varepsilon|X)\le F_{\varepsilon|X}(g(Z,X)|X)\}=1\{U\le P(Z,X)\}$,
since $P(z,x)=\Pr[\varepsilon\le g(z,x)|X=x]=\Pr[U\le F_{\varepsilon|X}(g(z,x)|x)|X=x]=F_{\varepsilon|X}(g(z,x)|x)$
and $F_{\varepsilon|X}(\varepsilon|X)=U$ is uniformly distributed
conditional on $X$.} 

In Assumption SEL, Case (a) is where $W$ is a reversely excluded
exogenous variable, which we call \textit{reverse IV}. This type of
exogenous variables was considered by \citet{VY07}, \citet{balat2020multiple},
and \citet{liu2020two}. In Case (b), we show that a reverse IV is
not necessary, and $W$ can be present in the selection equation.
This type of exogenous variables was considered by \citet{SV11},
\citet{mourifie2015sharp}, \citet{han2017identification}, \citet{vuong2017counterfactual},
and \citet{han2019estimation}. In both scenarios, however, we show
that we can use $W$ for identification without necessarily invoking
rank similarity, rank invariance, or additive separability in contrast
to the above studies. We show this is possible due to the computational
approach we take. Below, we combine the existence of $W$ with assumptions
that are related to rank similarity. Another distinct feature of our
approach in comparison to the prior studies is that we consider a
broad class of the generalized LATEs as our target parameter, including
the ATE considered in those studies. For notational simplicity, we
focus on Case (a) henceforth; it is straightforward to draw analogous
results for Case (b).

We aim to establish sharp bounds on various treatment parameters.
Following \citet{heckman2005structural}, we express treatment parameters
as integral equations of the MTE. The MTE is defined in our setting
as
\begin{align*}
E[Y(1)-Y(0)|U=u,X=x],
\end{align*}
where $Y(d)=Y(d,W)$. Similar to \citet{mogstad2018using}, it is
convenient to introduce the marginal treatment response (MTR) function
\begin{align*}
m_{d}(u,w,x) & \equiv E[Y(d,w)|U=u,X=x]
\end{align*}
where $W$ does not appear as a conditioning variable due to Assumption
EX. Now, we define the target parameter $\tau$ to be the difference
of the weighted averages of the MTRs:
\begin{align}
\tau & =E[\tau_{1}(Z,W,X)-\tau_{0}(Z,W,X)],\label{eq:target_para}
\end{align}
where
\begin{align}
\tau_{d}(z,w,x) & =\int m_{d}(u,w,x)\omega_{d}(u,z,x)du\label{eq:tau_zx}
\end{align}
by using $F_{U|X}(u|x)=u$, and $\omega_{d}(u,z,x)$ is a known weight
specific to the parameter of interest. This definition agrees with
the insight of \citet{heckman2005structural}. The target parameter
includes a wide range of policy-relevant treatment parameters. We
list a few examples of the target parameter here; other examples can
be found in Table \ref{tab:GLATE} in the Appendix.

\begin{example}With a Dirac delta function for a given value $u$
as the weight, the MTE itself can be a target parameter.
\begin{align*}
\tau_{MTE} & =E[m_{1}(u,W,X)-m_{0}(u,W,X)]
\end{align*}

\end{example}

\begin{example}\label{ex:ATE}The ATE can be a target parameter with
$\omega_{d}(u,z,x)=1$ for any $(u,z,x)$.

\[
\tau_{ATE}=E\left[\int_{0}^{1}m_{1}(u,W,X)du-\int_{0}^{1}m_{0}(u,W,X)du\right]
\]

\end{example}

\begin{example}\label{ex:LATE}The generalized LATE is also a target
parameter. Suppose we are interested in the LATE for individuals lying
in $[\underline{u},\overline{u}]$. We assign the weight $\omega_{d}(u,z,x)=\frac{1(u\in[\underline{u},\overline{u}])}{\overline{u}-\underline{u}}$
for any $(u,z,x)$ , where the numerator excludes people outside this
range and the denominator gives a weight to people within $[\underline{u},\overline{u}]$
according to their fraction in the whole population. The generalized
LATE is expressed as:

\[
\tau_{GLATE}=E\left[\int_{0}^{1}m_{1}(u,W,X)\frac{1(u\in[\underline{u},\overline{u}])}{\overline{u}-\underline{u}}du-\int_{0}^{1}m_{0}(u,W,X)\frac{1(u\in[\underline{u},\overline{u}])}{\overline{u}-\underline{u}}du\right]
\]

\end{example}

\begin{example}\label{ex:PRTE}The policy relevant treatment effect
(PRTE) is a target parameter that is particularly useful for policy
evaluation. It is defined as the welfare difference between two different
policies. Let $Z$ and $Z'$ be two instrument variables under two
policies and $P(Z,X)$ and $P'(Z',X)$ be propensity scores under
the two policies.

\begin{align*}
\tau_{PRTE} & =E\Bigg[\int_{0}^{1}m_{1}(u,W,X)\frac{\Pr\left[u\leq P'(Z',X)\right]-\Pr\left[u\leq P(Z,X)\right]}{E\left[P'(Z',X)\right]-E\left[P(Z,X)\right]}du\\
 & -\int_{0}^{1}m_{0}(u,W,X)\frac{\text{\ensuremath{\Pr}}\left[u\leq P'(Z',X)\right]-\Pr\left[u\leq P(Z,X)\right]}{E\left[P'(Z',X)\right]-E\left[P(Z,X)\right]}du\Bigg]
\end{align*}
\end{example} 

To define a broader class of parameters beyond these examples, the
weights $\omega_{0}$ and $\omega_{1}$ can be set asymmetrically.
All the parameters we consider in this paper can be defined conditional
on $X$ and $W$, although we omit them for succinctness.

Typically, a binary instrument is not sufficient in producing informative
bounds on the target parameters. This is because a binary instrument
has no extrapolative power for general non-compliers, e.g., always-takers
and never-takers, but only identifies the effect for compliers. Prior
studies have tried to overcome this challenge by imposing shape restrictions
on the MTE (\citet{cornelissen2016late}, \citet{brinch2017beyond},
\citet{mogstad2018using}, \citet{kowalski2021reconciling}), although
these restrictions are not always empirically justified. Evidently,
it would be useful to provide empirical researchers with a larger
variety of assumptions so that it is easier to find justifiable assumptions
that suit their specific examples.

The existence of additional exogenous variables embodied in Assumptions
SEL and EX may be appealing as it can be warranted by data with less
arbitrariness. We accompany Assumptions SEL and EX with an assumption
that $W$ and $Z$ are relevant variables, which make the role of
these variables more explicit.

\begin{asR}For given $x\in\mathcal{X}$, (i) $\Pr[Y(d,w)\neq Y(d,w')|X=x]>0$
for some $d$ and $w\neq w'$; (ii) either (a) $P(z,x)\neq P(z',x)$
for $z\neq z'$ and $0<P(z,x)<1$ for all $z$ or (b) $P(z,x,w)\neq P(z',x,w)$
for $z\neq z'$ and $0<P(z,x,w)<1$ for all $(z,w)$.

\end{asR}

Assumption R(i) is a relevance condition for $W$ in determining $Y$.
R(ii) is the standard relevance assumption for the instrument and
the positivity assumption. We later show that under Assumptions SEL,
EX and R, the variation of $W$ (in addition to $Z$) is a useful
source for extrapolation and narrowing the bounds on target parameters.

\section{Distribution of Latent State and Infinite-Dimensional Linear Program\label{sec:Distribution-of-State}}

Our goal is to provide a systematic framework to calculate bounds
on the target parameters, which is easy to incorporate various identifying
assumptions. To this end and as a crucial first step of our analysis,
we define a state variable that determines a specific mapping of
\begin{align}
(d,w) & \mapsto y\label{eq:map}
\end{align}
for discrete $y\in\mathcal{Y}=\{y_{1},...,y_{L}\}$. We discuss the
extension with continuously distributed $Y$ in Section \ref{sec:Extension:-Continuous}.
Given that $d$ is binary and assuming $w$ is also binary, there
are $L^{4}$ possible states or maps from $(d,w)$ onto $y$. Using
the extended counterfactual outcome $Y(d,w)$, define a latent vector
$\epsilon$ as
\begin{align*}
\epsilon & \equiv(Y(0,0),Y(0,1),Y(1,0),Y(1,1))
\end{align*}
and its realized value as $e\equiv(y(0,0),y(0,1),y(1,0),y(1,1))\in\mathcal{E}=\mathcal{Y}^{4}$.
Then, each value of $e$ represents each possible state. Table \ref{tab:16maps}
lists all 16 maps in a leading case of binary $y$ with $\mathcal{Y}=\{0,1\}$.
\begin{table}
\begin{centering}
{\scriptsize{}}%
\begin{tabular}{|c|c|c|c|}
\hline 
{\scriptsize{}\#} & {\scriptsize{}$d$} & {\scriptsize{}$w$} & {\scriptsize{}$Y(d,w)$}\tabularnewline
\hline 
\hline 
\multirow{4}{*}{{\scriptsize{}1}} & {\scriptsize{}0} & {\scriptsize{}0} & {\scriptsize{}0}\tabularnewline
\cline{2-4} 
 & {\scriptsize{}0} & {\scriptsize{}1} & {\scriptsize{}0}\tabularnewline
\cline{2-4} 
 & {\scriptsize{}1} & {\scriptsize{}0} & {\scriptsize{}0}\tabularnewline
\cline{2-4} 
 & {\scriptsize{}1} & {\scriptsize{}1} & {\scriptsize{}0}\tabularnewline
\hline 
\multirow{4}{*}{{\scriptsize{}2}} & {\scriptsize{}0} & {\scriptsize{}0} & {\scriptsize{}1}\tabularnewline
\cline{2-4} 
 & {\scriptsize{}0} & {\scriptsize{}1} & {\scriptsize{}0}\tabularnewline
\cline{2-4} 
 & {\scriptsize{}1} & {\scriptsize{}0} & {\scriptsize{}0}\tabularnewline
\cline{2-4} 
 & {\scriptsize{}1} & {\scriptsize{}1} & {\scriptsize{}0}\tabularnewline
\hline 
\multirow{4}{*}{{\scriptsize{}3}} & {\scriptsize{}0} & {\scriptsize{}0} & {\scriptsize{}0}\tabularnewline
\cline{2-4} 
 & {\scriptsize{}0} & {\scriptsize{}1} & {\scriptsize{}1}\tabularnewline
\cline{2-4} 
 & {\scriptsize{}1} & {\scriptsize{}0} & {\scriptsize{}0}\tabularnewline
\cline{2-4} 
 & {\scriptsize{}1} & {\scriptsize{}1} & {\scriptsize{}0}\tabularnewline
\hline 
\multirow{4}{*}{{\scriptsize{}4}} & {\scriptsize{}0} & {\scriptsize{}0} & {\scriptsize{}1}\tabularnewline
\cline{2-4} 
 & {\scriptsize{}0} & {\scriptsize{}1} & {\scriptsize{}1}\tabularnewline
\cline{2-4} 
 & {\scriptsize{}1} & {\scriptsize{}0} & {\scriptsize{}0}\tabularnewline
\cline{2-4} 
 & {\scriptsize{}1} & {\scriptsize{}1} & {\scriptsize{}0}\tabularnewline
\hline 
\end{tabular}{\scriptsize{} }%
\begin{tabular}{|c|c|c|c|}
\hline 
{\scriptsize{}\#} & {\scriptsize{}$d$} & {\scriptsize{}$w$} & {\scriptsize{}$Y(d,w)$}\tabularnewline
\hline 
\hline 
\multirow{4}{*}{{\scriptsize{}5}} & {\scriptsize{}0} & {\scriptsize{}0} & {\scriptsize{}0}\tabularnewline
\cline{2-4} 
 & {\scriptsize{}0} & {\scriptsize{}1} & {\scriptsize{}0}\tabularnewline
\cline{2-4} 
 & {\scriptsize{}1} & {\scriptsize{}0} & {\scriptsize{}1}\tabularnewline
\cline{2-4} 
 & {\scriptsize{}1} & {\scriptsize{}1} & {\scriptsize{}0}\tabularnewline
\hline 
\multirow{4}{*}{{\scriptsize{}6}} & {\scriptsize{}0} & {\scriptsize{}0} & {\scriptsize{}1}\tabularnewline
\cline{2-4} 
 & {\scriptsize{}0} & {\scriptsize{}1} & {\scriptsize{}0}\tabularnewline
\cline{2-4} 
 & {\scriptsize{}1} & {\scriptsize{}0} & {\scriptsize{}1}\tabularnewline
\cline{2-4} 
 & {\scriptsize{}1} & {\scriptsize{}1} & {\scriptsize{}0}\tabularnewline
\hline 
\multirow{4}{*}{{\scriptsize{}7}} & {\scriptsize{}0} & {\scriptsize{}0} & {\scriptsize{}0}\tabularnewline
\cline{2-4} 
 & {\scriptsize{}0} & {\scriptsize{}1} & {\scriptsize{}1}\tabularnewline
\cline{2-4} 
 & {\scriptsize{}1} & {\scriptsize{}0} & {\scriptsize{}1}\tabularnewline
\cline{2-4} 
 & {\scriptsize{}1} & {\scriptsize{}1} & {\scriptsize{}0}\tabularnewline
\hline 
\multirow{4}{*}{{\scriptsize{}8}} & {\scriptsize{}0} & {\scriptsize{}0} & {\scriptsize{}1}\tabularnewline
\cline{2-4} 
 & {\scriptsize{}0} & {\scriptsize{}1} & {\scriptsize{}1}\tabularnewline
\cline{2-4} 
 & {\scriptsize{}1} & {\scriptsize{}0} & {\scriptsize{}1}\tabularnewline
\cline{2-4} 
 & {\scriptsize{}1} & {\scriptsize{}1} & {\scriptsize{}0}\tabularnewline
\hline 
\end{tabular}{\scriptsize{} }%
\begin{tabular}{|c|c|c|c|}
\hline 
{\scriptsize{}\#} & {\scriptsize{}$d$} & {\scriptsize{}$w$} & {\scriptsize{}$Y(d,w)$}\tabularnewline
\hline 
\hline 
\multirow{4}{*}{{\scriptsize{}9}} & {\scriptsize{}0} & {\scriptsize{}0} & {\scriptsize{}0}\tabularnewline
\cline{2-4} 
 & {\scriptsize{}0} & {\scriptsize{}1} & {\scriptsize{}0}\tabularnewline
\cline{2-4} 
 & {\scriptsize{}1} & {\scriptsize{}0} & {\scriptsize{}0}\tabularnewline
\cline{2-4} 
 & {\scriptsize{}1} & {\scriptsize{}1} & {\scriptsize{}1}\tabularnewline
\hline 
\multirow{4}{*}{{\scriptsize{}10}} & {\scriptsize{}0} & {\scriptsize{}0} & {\scriptsize{}1}\tabularnewline
\cline{2-4} 
 & {\scriptsize{}0} & {\scriptsize{}1} & {\scriptsize{}0}\tabularnewline
\cline{2-4} 
 & {\scriptsize{}1} & {\scriptsize{}0} & {\scriptsize{}0}\tabularnewline
\cline{2-4} 
 & {\scriptsize{}1} & {\scriptsize{}1} & {\scriptsize{}1}\tabularnewline
\hline 
\multirow{4}{*}{{\scriptsize{}11}} & {\scriptsize{}0} & {\scriptsize{}0} & {\scriptsize{}0}\tabularnewline
\cline{2-4} 
 & {\scriptsize{}0} & {\scriptsize{}1} & {\scriptsize{}1}\tabularnewline
\cline{2-4} 
 & {\scriptsize{}1} & {\scriptsize{}0} & {\scriptsize{}0}\tabularnewline
\cline{2-4} 
 & {\scriptsize{}1} & {\scriptsize{}1} & {\scriptsize{}1}\tabularnewline
\hline 
\multirow{4}{*}{{\scriptsize{}12}} & {\scriptsize{}0} & {\scriptsize{}0} & {\scriptsize{}1}\tabularnewline
\cline{2-4} 
 & {\scriptsize{}0} & {\scriptsize{}1} & {\scriptsize{}1}\tabularnewline
\cline{2-4} 
 & {\scriptsize{}1} & {\scriptsize{}0} & {\scriptsize{}0}\tabularnewline
\cline{2-4} 
 & {\scriptsize{}1} & {\scriptsize{}1} & {\scriptsize{}1}\tabularnewline
\hline 
\end{tabular}{\scriptsize{} }%
\begin{tabular}{|c|c|c|c|}
\hline 
{\scriptsize{}\#} & {\scriptsize{}$d$} & {\scriptsize{}$w$} & {\scriptsize{}$Y(d,w)$}\tabularnewline
\hline 
\hline 
\multirow{4}{*}{{\scriptsize{}13}} & {\scriptsize{}0} & {\scriptsize{}0} & {\scriptsize{}0}\tabularnewline
\cline{2-4} 
 & {\scriptsize{}0} & {\scriptsize{}1} & {\scriptsize{}0}\tabularnewline
\cline{2-4} 
 & {\scriptsize{}1} & {\scriptsize{}0} & {\scriptsize{}1}\tabularnewline
\cline{2-4} 
 & {\scriptsize{}1} & {\scriptsize{}1} & {\scriptsize{}1}\tabularnewline
\hline 
\multirow{4}{*}{{\scriptsize{}14}} & {\scriptsize{}0} & {\scriptsize{}0} & {\scriptsize{}1}\tabularnewline
\cline{2-4} 
 & {\scriptsize{}0} & {\scriptsize{}1} & {\scriptsize{}0}\tabularnewline
\cline{2-4} 
 & {\scriptsize{}1} & {\scriptsize{}0} & {\scriptsize{}1}\tabularnewline
\cline{2-4} 
 & {\scriptsize{}1} & {\scriptsize{}1} & {\scriptsize{}1}\tabularnewline
\hline 
\multirow{4}{*}{{\scriptsize{}15}} & {\scriptsize{}0} & {\scriptsize{}0} & {\scriptsize{}0}\tabularnewline
\cline{2-4} 
 & {\scriptsize{}0} & {\scriptsize{}1} & {\scriptsize{}1}\tabularnewline
\cline{2-4} 
 & {\scriptsize{}1} & {\scriptsize{}0} & {\scriptsize{}1}\tabularnewline
\cline{2-4} 
 & {\scriptsize{}1} & {\scriptsize{}1} & {\scriptsize{}1}\tabularnewline
\hline 
\multirow{4}{*}{{\scriptsize{}16}} & {\scriptsize{}0} & {\scriptsize{}0} & {\scriptsize{}1}\tabularnewline
\cline{2-4} 
 & {\scriptsize{}0} & {\scriptsize{}1} & {\scriptsize{}1}\tabularnewline
\cline{2-4} 
 & {\scriptsize{}1} & {\scriptsize{}0} & {\scriptsize{}1}\tabularnewline
\cline{2-4} 
 & {\scriptsize{}1} & {\scriptsize{}1} & {\scriptsize{}1}\tabularnewline
\hline 
\end{tabular}{\scriptsize\par}
\par\end{centering}
\caption{All Possible Maps from $(d,w)$ to $y\in\{0,1\}$}
\label{tab:16maps}
\end{table}

Now, as a key component of our LP, we define the probability mass
function of $\epsilon$ conditional on $(U,X)$: for $e\in\mathcal{E}$,
\begin{align}
q(e|u,x) & \equiv\Pr[\epsilon=e|U=u,X=x]\label{eq:q}
\end{align}
with $\sum_{e\in\mathcal{E}}q(e|u,x)=1$ for any $(u,x)$. The quantity
$q(e|u,x)$ captures the joint distribution of $(\epsilon,U)$ and
thus reflects endogenous treatment selection. It is shown below that
this latent conditional probability is a building block for various
treatment parameters and thus serves as the decision variable in the
LP. The introduction of $q(e|u,x)$ distinguishes our approach from
those in \citet{balke1997bounds} and \citet{mogstad2018using}. Since
the probability is conditional on continuously distributed $U$, the
simple finite-dimensional linear programming approach of \citet{balke1997bounds}
is no longer applicable. Instead, we use an approximation method similar
to \citet{mogstad2018using}. However, \citet{mogstad2018using} uses
the MTR function as a building block for treatment parameters and
introduces the ``IV-like'' estimands as a means of funneling the
information from the data. Unlike in \citet{mogstad2018using}, $q(e|u,x)$
can be directly related to the distribution of data. This allows us
to (i) fully exhaust the full independence assumption, (ii) facilitate
proving sharpness and (iii) incorporating a large menu of additional
identifying assumptions.

In the remaining section and the next section, we focus on binary
$Y$ for simplicity; the extension to general discrete $Y$ is straightforward
and is discussed in Section \ref{sec:Identifying-Power-of}; continuous
$Y$ is considered in Section \ref{sec:Extension:-Continuous}. Also,
we will mostly focus on binary $W$ and discrete $X$ for expositional
simplicity. Section \ref{subsec:Linear-Programming-with} in the Appendix
extends the framework to incorporate continuously distributed $X$;
it is also straightforward to extend to allow for general discrete
variable $W$. By \eqref{eq:q}, note that
\begin{align*}
\Pr[Y(d,w)=1|U=u,X=x] & =\Pr[\epsilon\in\{e\in\mathcal{E}:y(d,w)=1\}|U=u,X=x]\\
 & =\sum_{e\in\mathcal{E}:y(d,w)=1}q(e|u,x).
\end{align*}
Therefore, the MTR can be expressed as
\begin{align}
m_{d}(u,w,x) & =\sum_{e:y(d,w)=1}q(e|u,x).\label{eq:MTR}
\end{align}
Combining \eqref{eq:tau_zx} and \eqref{eq:MTR}, we have $\tau_{d}(z,w,x)=\sum_{e:y(d,w)=1}\int q(e|u,x)\omega_{d}(u,z,x)du$,
and thus the target parameter $\tau=E[\tau_{1}(Z,W,X)]-E[\tau_{0}(Z,W,X)]$
in \eqref{eq:target_para} can be written as
\begin{align}
\tau & =E\left[\sum_{e:y(1,W)=1}\int q(e|u,X)\omega_{1}(u,Z,X)du-\sum_{e:y(0,W)=1}\int q(e|u,X)\omega_{0}(u,Z,X)du\right]\label{eq:target}
\end{align}
for some $q$ that satisfies the properties of probability.

The goal of this paper is to (at least partially) infer the target
parameter $\tau$ based on the data, i.e., the distribution of $(Y,D,Z,W,X)$.
The key insight is that there are observationally equivalent $q(e|u,x)$'s
that are consistent with the data, which in turn produces observationally
equivalent $\tau$'s that define the identified set.

Let $p(y,d|z,w,x)\equiv\Pr[Y=y,D=d|Z=z,W=w,X=x]$ be the observed
conditional probability. This data distribution imposes restrictions
on $q(e|u,x)$. For instance, for $D=1$,
\begin{align*}
p(y,1|z,w,x) & =\Pr[Y(1,w)=y,U\le P(z,x)|Z=z,W=w,X=x]\\
 & =\Pr[Y(1,w)=y,U\le P(z,x)|X=x]
\end{align*}
by Assumption EX, but
\begin{align}
\Pr[Y(1,w)=y,U\le P(z,x)|X=x] & =\int_{0}^{P(z,x)}\Pr[Y(1,w)=y|U=u,X=x]du\nonumber \\
 & =\sum_{e:y(1,w)=y}\int_{0}^{P(z,x)}q(e|u,x)du,\label{eq:data}
\end{align}
where the second equality is by $\Pr[Y(d,w)=y|U=u,X=x]=\sum_{e:y(d,w)=y}q(e|u,x)$.

To define the identified set for $\tau$, we introduce some simplifying
notation. Let $q(u)\equiv\{q(e|u,x)\}_{e\in\mathcal{\mathcal{E}},x\in\mathcal{X}}$
and 
\[
\mathcal{Q}\equiv\{q(\cdot):\sum_{e\in\mathcal{E}}q(e|u,x)=1\text{ and }q(e|u,x)\ge0\text{ }\forall(e,u,x)\}
\]
be the class of $q(u)$, and let
\[
p\equiv\{p(1,d|z,w,x)\}_{(d,z,w,x)\in\{0,1\}\times\mathcal{Z}\times\mathcal{W}\times\mathcal{X}}.
\]
Also, let $R_{\omega}:\mathcal{Q}\rightarrow\mathbb{R}$ and $R_{0}:\mathcal{Q}\rightarrow\mathbb{R}^{d_{p}}$
(with $d_{p}$ being the dimension of $p$) denote the linear operators
of $q(\cdot)$ that satisfy
\begin{align*}
R_{\omega}q & \equiv E\left[\sum_{e:y(1,W)=1}\int q(e|u,X)\omega_{1}(u,Z,X)du-\sum_{e:y(0,W)=1}\int q(e|u,X)\omega_{0}(u,Z,X)du\right],\\
R_{0}q & \equiv\left\{ \sum_{e:y(d,w)=1}\int_{\mathcal{U}_{z,x}^{d}}q(e|u,x)du\right\} _{(d,z,w,x)\in\{0,1\}\times\mathcal{Z}\times\mathcal{W}\times\mathcal{X}},
\end{align*}
where $\mathcal{U}_{z,x}^{d}$ denotes the intervals $\mathcal{U}_{z,x}^{1}\equiv[0,P(z,x)]$
and $\mathcal{U}_{z,x}^{0}\equiv(P(z,x),1]$. Then, we can characterize
the baseline identified set for $\tau$ where we only impose modeling
primitives. Later, we show how to characterize the identified set
with additional assumptions introduced in Section \ref{sec:Additional-Assumptions}.

\begin{definition}Suppose Assumptions EX and SEL(a) hold. The identified
set of $\tau$ is defined as
\begin{align*}
\mathcal{T}^{*} & \equiv\{\tau\in\mathbb{R}:\tau=R_{\omega}q\text{ for some }q\in\mathcal{Q}\text{ such that }R_{0}q=p\}.
\end{align*}

\end{definition}

In what follows, we formulate the infinite-dimensional LP ($\infty$-LP)
that characterizes $\mathcal{T}^{*}$. This program conceptualizes
sharp bounds on $\tau$ from the data and the maintained assumptions
(Assumptions SEL and EX). The upper and lower bounds on $\tau$ are
defined as
\begin{align}
\overline{\tau}= & \sup_{q\in\mathcal{Q}}R_{\omega}q,\tag{{\text{\ensuremath{\infty}-LP1}}}\label{eq:upper}\\
\underline{\tau}= & \inf_{q\in\mathcal{Q}}R_{\omega}q,\tag{{\text{\ensuremath{\infty}-LP2}}}\label{eq:lower}
\end{align}
subject to
\begin{align}
R_{0}q & =p.\tag{{\text{\ensuremath{\infty}-LP3}}}\label{eq:constr}
\end{align}
Observe that the set of constraints \eqref{eq:constr} does not include
\begin{align}
\sum_{e:y(d,w)=0}\int_{\mathcal{U}_{z,x}^{d}}q(e|u,x)du & =p(0,d|z,w,x)\qquad\forall(d,z,w,x).\label{eq:redun_constraint}
\end{align}
This is because we know a priori that they are redundant in the sense
that they do not further restrict the \textit{feasible set}, namely,
the set of $q(e|u,x)$'s that satisfy all the constraints ($q\in\mathcal{Q}$
and \eqref{eq:constr}).

\begin{lemma}\label{lem:redun}In the linear program \eqref{eq:upper}--\eqref{eq:constr},
the feasible set defined by $q\in\mathcal{Q}$ and \eqref{eq:constr}
is identical to the feasible set defined by $q\in\mathcal{Q}$, \eqref{eq:constr},
and \eqref{eq:redun_constraint}.\end{lemma}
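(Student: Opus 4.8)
The plan is to establish the lemma by showing that the equations in \eqref{eq:redun_constraint} are automatically satisfied by every $q$ that already lies in the feasible set defined by $q\in\mathcal{Q}$ and \eqref{eq:constr}. One inclusion is immediate: imposing the additional equations \eqref{eq:redun_constraint} can only shrink the feasible set, so the set defined by $q\in\mathcal{Q}$, \eqref{eq:constr}, and \eqref{eq:redun_constraint} is contained in the one defined by $q\in\mathcal{Q}$ and \eqref{eq:constr}. The entire content of the lemma is therefore the reverse inclusion, namely that $q\in\mathcal{Q}$ together with \eqref{eq:constr} already forces \eqref{eq:redun_constraint}.

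To obtain this, I would fix an arbitrary tuple $(d,z,w,x)$ and exploit the fact that, for each $e\in\mathcal{E}$, the value $g_{e}(d,w)$ is either $0$ or $1$, so the index sets $\{e:g_{e}(d,w)=1\}$ and $\{e:g_{e}(d,w)=0\}$ partition $\mathcal{E}$. Integrating the normalization $\sum_{e\in\mathcal{E}}q(e|u,x)=1$ over $u\in\mathcal{U}_{z,x}^{d}$ and interchanging the finite sum with the integral then yields the adding-up identity
\begin{align}
\sum_{e:g_{e}(d,w)=1}\int_{\mathcal{U}_{z,x}^{d}}q(e|u,x)\,du+\sum_{e:g_{e}(d,w)=0}\int_{\mathcal{U}_{z,x}^{d}}q(e|u,x)\,du=\bigl|\mathcal{U}_{z,x}^{d}\bigr|,\label{eq:addup}
\end{align}
where $\bigl|\mathcal{U}_{z,x}^{1}\bigr|=P(z,x)$ and $\bigl|\mathcal{U}_{z,x}^{0}\bigr|=1-P(z,x)$. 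The first sum on the left equals $p(1,d|z,w,x)$ by \eqref{eq:constr}, so subtracting it from both sides leaves the second sum equal to $\bigl|\mathcal{U}_{z,x}^{d}\bigr|-p(1,d|z,w,x)$.

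It remains to identify this quantity with $p(0,d|z,w,x)$, and this is the only step that uses the model rather than pure algebra. Under Assumptions SEL(a) and EX(a), $W$ is excluded from the selection equation, so $\Pr[D=d|Z=z,W=w,X=x]=\bigl|\mathcal{U}_{z,x}^{d}\bigr|$ (that is, $P(z,x)$ for $d=1$ and $1-P(z,x)$ for $d=0$), while by definition $p(1,d|z,w,x)+p(0,d|z,w,x)=\Pr[D=d|Z=z,W=w,X=x]$. Combining these gives $\bigl|\mathcal{U}_{z,x}^{d}\bigr|-p(1,d|z,w,x)=p(0,d|z,w,x)$, which is exactly \eqref{eq:redun_constraint} for the chosen $(d,z,w,x)$; since the tuple was arbitrary, the reverse inclusion follows. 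The argument is essentially a book-keeping identity and I do not anticipate a genuine obstacle; the only point requiring care is verifying that the observed cell probabilities respect the adding-up relation $p(1,d|z,w,x)+p(0,d|z,w,x)=\bigl|\mathcal{U}_{z,x}^{d}\bigr|$, which is precisely where the exclusion of $W$ from the propensity score in Case (a) is indispensable.
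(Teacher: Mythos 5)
Your proof is correct and follows essentially the same route as the paper's: both integrate the normalization $\sum_{e\in\mathcal{E}}q(e|u,x)=1$ over $\mathcal{U}_{z,x}^{d}$, use that $\bigl|\mathcal{U}_{z,x}^{d}\bigr|=\Pr[D=d|Z=z,W=w,X=x]$ under Assumptions SEL(a)/EX(a), and use $p(1,d|z,w,x)+p(0,d|z,w,x)=\Pr[D=d|Z=z,W=w,X=x]$ to show that the $p(0,d|\cdot)$ constraints are implied by (in the paper, equivalent to) the $p(1,d|\cdot)$ constraints. If anything, your write-up is slightly more explicit than the paper's about the trivial inclusion and about where the exclusion of $W$ from the propensity score is used.
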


\begin{theorem}\label{thm:sharp_bds}Under Assumptions SEL and EX,
suppose $\mathcal{T}^{*}$ is non-empty. Then, the bounds $[\underline{\tau},\overline{\tau}]$
in \eqref{eq:upper}--\eqref{eq:constr} are sharp for the target
parameter $\tau$, that is, $cl(\mathcal{T}^{*})=[\underline{\tau},\overline{\tau}]$,
where $cl(\cdot)$ is the closure of a set.\end{theorem}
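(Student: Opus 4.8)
The plan is to exploit the two structural facts that the construction in Section~\ref{sec:Distribution-of-State} was engineered to deliver: (i) both the target parameter and the data distribution are \emph{linear} in the decision variable $q$, through the operators $R_\tau$ and $R_0$; and (ii) the feasible set is \emph{convex}. Once these are in place, sharpness follows almost mechanically, because the image of a convex set under a linear functional is an interval whose infimum and supremum are exactly the optimal values of the two programs \eqref{eq:lower} and \eqref{eq:upper}.

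First I would argue that the feasible set $\mathcal{F}\equiv\{q\in\mathcal{Q}:R_0q=p\}$ is precisely the collection of latent conditional distributions that are observationally equivalent under Assumptions SEL and EX. By \eqref{eq:Y(d)} and \eqref{eq:q}, Assumption EX(a) is equivalent to $(\epsilon,U)\perp(Z,W)\mid X$, which is already encoded by letting $q$ depend only on $(u,x)$; and by \eqref{eq:data} a candidate $q$ reproduces the observed probabilities $p(y,d\mid z,w,x)$ if and only if it satisfies $R_0q=p$ \emph{together with} the $y=0$ analogue \eqref{eq:redun_constraint}. Lemma~\ref{lem:redun} shows the latter block is redundant, so $\mathcal{F}$ coincides with the full set of data-consistent $q$'s, and hence $\mathcal{T}^{*}=R_\tau(\mathcal{F})$ is exactly the identified set. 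Conversely, every $q\in\mathcal{F}$ is a valid probability kernel and thus genuinely realizes a data-generating process satisfying all maintained assumptions, so no value in $R_\tau(\mathcal{F})$ is spurious.

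Next I would establish convexity. The set $\mathcal{Q}$ is cut out by the affine equalities $\sum_{e}q(e\mid u,x)=1$ and the inequalities $q(e\mid u,x)\ge0$, hence is convex; intersecting with the linear constraint $R_0q=p$ preserves convexity, so $\mathcal{F}$ is convex. Because $R_\tau$ is linear and $\mathcal{T}^{*}$ is assumed non-empty, $\mathcal{T}^{*}=R_\tau(\mathcal{F})$ is a non-empty convex subset of $\mathbb{R}$, i.e.\ an interval. Its infimum and supremum are, by definition of \eqref{eq:lower}--\eqref{eq:upper}, $\underline{\tau}$ and $\overline{\tau}$, both finite since for binary $Y$ each $m_d$ lies in $[0,1]$ and the weights $\omega_d$ are integrable. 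For any $t\in(\underline{\tau},\overline{\tau})$, the definitions of infimum and supremum supply $q',q''\in\mathcal{F}$ with $R_\tau q'<t<R_\tau q''$; writing $t$ as the appropriate convex combination of $R_\tau q'$ and $R_\tau q''$ and using convexity of $\mathcal{F}$ together with linearity of $R_\tau$ produces a feasible $q$ with $R_\tau q=t$. Hence $(\underline{\tau},\overline{\tau})\subseteq\mathcal{T}^{*}\subseteq[\underline{\tau},\overline{\tau}]$, and taking closures yields $cl(\mathcal{T}^{*})=[\underline{\tau},\overline{\tau}]$.

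The step I expect to be most delicate is not the convexity argument, which is routine, but the first one: verifying that $R_0q=p$, shorn of the omitted $y=0$ block, genuinely captures \emph{all} the restrictions the data and Assumptions SEL and EX impose, and that every feasible $q$ lifts back to a bona fide model. This is exactly what Lemma~\ref{lem:redun} secures. It is also the reason the statement is phrased through $cl(\cdot)$ rather than $\mathcal{T}^{*}$ itself: in an infinite-dimensional program without a compactness/continuity argument, the optimal values $\underline{\tau}$ and $\overline{\tau}$ need not be attained by any $q\in\mathcal{F}$, so $\mathcal{T}^{*}$ may be an open or half-open interval whose closure is nonetheless the compact interval claimed.
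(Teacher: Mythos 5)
Your proof is correct and follows exactly the route the paper itself takes: the paper disposes of this theorem in one sentence, noting that the result is immediate from the convexity of the feasible set $\{q:q\in\mathcal{Q}\}\cap\{q:R_{0}q=p\}$ and the linearity of $R_{\tau}q$ in $q$, which makes $\mathcal{T}^{*}$ a (possibly non-closed) interval with endpoints $\underline{\tau}$ and $\overline{\tau}$. You have simply filled in the routine details — the interval argument via convex combinations, the role of Lemma~\ref{lem:redun}, and why the closure is needed when the optima are not attained — all of which the paper leaves implicit.
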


The result of this theorem is immediate due to the convexity of the
feasible set $\{q:q\in\mathcal{Q}\}\cap\{q:R_{0}q=p\}$ in the LP
and the linearity of $R_{\omega}q$ in $q$, which implies that $[\underline{\tau},\overline{\tau}]$
is convex.

\section{Sieve Approximation and Finite-Dimensional Linear Programming\label{sec:Sieve-Approximation-and}}

Although conceptually useful, the LP \eqref{eq:upper}--\eqref{eq:constr}
is not feasible in practice because $\mathcal{Q}$ is an infinite-dimensional
space. In this section, we approximate \eqref{eq:upper}--\eqref{eq:constr}
with a finite-dimensional LP via a sieve approximation of the conditional
probability $q(e|u,x)$. We use Bernstein polynomials as the sieve
basis. Bernstein polynomials are useful in imposing restrictions on
the original function (\citet{joy2000bernstein}; \citet{chen2011sensitivity};
\citet{chen2017approximation}) and therefore have been introduced
in the context of linear programming (\citet{mogstad2018using}; \citet{mogstad2021causal};
\citet{masten2021salvaging}). Bernstein approximation based on those
polynomials possesses the property of being ``shape-preserving,''
which effectively prevent undesired distortions from the original
function's shape properties (\citet{goodman1989shape}; \citet{carnicer1993shape};
\citet{goodman1999convexity}).

Consider the following sieve approximation of $q(e|u,x)$ using Bernstein
polynomials of order $K$
\begin{align*}
q(e|u,x) & \approx\sum_{k=1}^{K}\theta_{k}^{e,x}b_{k}(u),
\end{align*}
where $b_{k}(u)\equiv b_{k,K}(u)\equiv\tbinom{K}{k}u^{k}(1-u)^{K-k}$
is a univariate Bernstein basis, $\theta_{k}^{e,x}\equiv\theta_{k,K}^{e,x}\equiv q(e|k/K,x)$
is its coefficient, and $K$ is finite. By using this approximation,
we implicitly impose smoothness in $q(e|\cdot,x)$. Note that discrete
$x$ can index $\theta$, because $q(e|u,x)$ is a saturated function
of $x$. By the definition of the Bernstein coefficient, for any $(e,x)$,
it satisfies $q(e|u,x)\ge0$ for all $u$ if and only if $\theta_{k}^{e,x}\ge0$
for all $k$. Also, $\sum_{e\in\mathcal{E}}q(e|u,x)=1$ for all $(u,x)$
is approximately equivalent to $\sum_{e\in\mathcal{E}}\theta_{k}^{e,x}=1$
for all $(k,x)$. To see this, first, $\sum_{e\in\mathcal{E}}q(e|u,x)=1$
for all $(u,x)$ implies $\sum_{e\in\mathcal{E}}\theta_{k}^{e,x}=\sum_{e\in\mathcal{E}}q(e|k/K,x)=1$
for all $(k,x)$. Conversely, when $\sum_{e\in\mathcal{E}}\theta_{k}^{e,x}=1$
for all $(k,x)$,
\begin{align*}
\sum_{e\in\mathcal{E}}q(e|u,x) & \approx\sum_{e\in\mathcal{E}}\sum_{k=1}^{K}\theta_{k}^{e,x}b_{k}(u)=\sum_{k=1}^{K}b_{k}(u)=1
\end{align*}
by the binomial theorem (\citet{coolidge1949story}). Motivated by
this approximation, we formally define the following sieve space for
$\mathcal{Q}$:
\begin{align}
\mathcal{Q}_{K} & \equiv\left\{ \Big\{\sum_{k=1}^{K}\theta_{k}^{e,x}b_{k}(u)\Big\}_{e\in\mathcal{E}}:\sum_{e\in\mathcal{E}}\theta_{k}^{e,x}=1\text{ and }\theta_{k}^{e,x}\ge0\text{ }\forall(e,k,x)\right\} \subseteq\mathcal{Q}.\label{eq:sieve}
\end{align}
Let $\mathcal{K}\equiv\{1,...,K\}$ and $p(z,w,x)\equiv\Pr[Z=z,W=w,X=x]$.
For $q\in\mathcal{Q}_{K}$, by \eqref{eq:target} and \eqref{eq:sieve},
the target parameter $\tau=E[\tau_{1}(Z,W,X)]-E[\tau_{0}(Z,W,X)]$
can be expressed with
\begin{align}
E[\tau_{d}(Z,W,X)] & =\sum_{(w,x)\in\mathcal{W\times\mathcal{X}}}\sum_{e:y(d,w)=1}\sum_{k\in\mathcal{K}}\theta_{k}^{e,x}\gamma_{k}^{d}(w,x),\label{eq:derive_target}
\end{align}
where $\gamma_{k}^{d}(w,x)\equiv\sum_{z\in\{0,1\}}p(z,w,x)\int b_{k}(u)\omega_{d}(u,z,x)du$.
Also, for $q\in\mathcal{Q}_{K}$ and $D=1$, by \eqref{eq:data},
we have 
\begin{align}
p(y,1|z,w,x) & =\sum_{e:y(1,w)=y}\sum_{k\in\mathcal{K}}\theta_{k}^{e,x}\delta_{k}^{1}(z,x),\label{eq:derive_constr}
\end{align}
where $\delta_{k}^{d}(z,x)\equiv\int_{\mathcal{U}_{z,x}^{d}}b_{k}(u)du$.

From \eqref{eq:derive_target} and \eqref{eq:derive_constr}, we can
expect that a finite-dimensional LP can be obtained with respect to
$\theta_{k}^{e,x}$. Let $\theta\equiv\{\theta_{k}^{e,x}\}_{(e,k,x)\in\mathcal{E}\times\mathcal{K}\times\mathcal{X}}$
and let
\begin{align*}
\Theta_{K} & \equiv\left\{ \theta:\sum_{e\in\mathcal{E}}\theta_{k}^{e,x}=1\text{ and }\theta_{k}^{e,x}\ge0\text{ }\forall(e,k,x)\in\mathcal{E}\times\mathcal{K}\times\mathcal{X}\right\} .
\end{align*}
Then, we can formulate the following finite-dimensional LP that corresponds
to the $\infty$-LP in \eqref{eq:upper}--\eqref{eq:constr}:
\begin{align}
\overline{\tau}_{K} & =\max_{\theta\in\Theta_{K}}\sum_{(k,w,x)\in\mathcal{K\times\mathcal{W}\times\mathcal{X}}}\Big\{\sum_{e:y(1,w)=1}\theta_{k}^{e,x}\gamma_{k}^{1}(w,x)-\sum_{e:y(0,w)=1}\theta_{k}^{e,x}\gamma_{k}^{0}(w,x)\Big\}\tag{{\text{LP1}}}\label{eq:upper2}\\
\underline{\tau}_{K} & =\min_{\theta\in\Theta_{K}}\sum_{(k,w,x)\in\mathcal{K\times\mathcal{W}\times\mathcal{X}}}\Big\{\sum_{e:y(1,w)=1}\theta_{k}^{e,x}\gamma_{k}^{1}(w,x)-\sum_{e:y(0,w)=1}\theta_{k}^{e,x}\gamma_{k}^{0}(w,x)\Big\}\tag{{\text{LP2}}}\label{eq:lower2}
\end{align}
subject to
\begin{align}
\sum_{e:y(d,w)=1}\sum_{k\in\mathcal{K}}\theta_{k}^{e,x}\delta_{k}^{d}(z,x) & =p(1,d|z,w,x)\qquad\forall(d,z,w,x)\in\{0,1\}\times\mathcal{Z}\times\mathcal{W}\times\mathcal{X}.\tag{{\text{LP3}}}\label{eq:constr2}
\end{align}
One of the advantages of LP is that it is computationally very easy
to solve using standard algorithms, such as the simplex algorithm.
Assuming binary $W$ and $X$ and setting $K=50$, we have $\dim(\theta)=1632$,
and it takes less than 3 seconds to calculate $\overline{\tau}_{K}$
and $\underline{\tau}_{K}$ with moderate computing power. The increase
in the support of $\mathcal{W}$ (and thus the number of maps \eqref{eq:map})
only linearly increases the computation time.

The important remaining question is how to choose $K$ in practice.
We discuss this issue in Section \ref{sec:Simulation}. Also, when
$K$ is too large, coefficients on $\theta_{k}^{e,x}$ in \eqref{eq:constr2}
tend to take values with incomparable orders of magnitude, which may
cause common optimization algorithms to arbitrarily drop small values.
To address this, we propose a rescaling method in Section \ref{subsec:Rescaling-of-Linear}.
Finally, extending Proposition 4 in \citet{mogstad2018using}, we
may exactly calculate $\overline{\tau}$ and $\underline{\tau}$ (i.e.,
$\overline{\tau}=\overline{\tau}_{K}$ and $\underline{\tau}=\underline{\tau}_{K}$)
under the assumptions that (i) the weight function $\omega_{d}(u,z,x)$
is piece-wise constant in $u$ and (ii) the constant spline that provides
the best mean squared error approximation of $q(e|u,x)$ satisfies
all the maintained assumptions (possibly including the identifying
assumptions introduced later) that $q(e|u,x)$ itself satisfies; see
\citet{mogstad2018using} for details.

\section{The Use of Statistical Independence\label{sec:Identifying-Power-of}}

The framework proposed in this paper allows us to systematically calculate
sharp bounds of treatment parameters using the (conditional) statistical
independence of $Z$ and $W$ (Assumption EX). This is possible because
the full distribution from the data enters the LP. This is in contrast
to the approach in \citet{mogstad2018using} who utilize the first
moment information of the data and exploit mean independence of $Z$
in their bound analysis. The full independence assumption is common
in the nonparametric treatment effect literature and can be justified
by, for example, IVs generated by randomized experiments. This section
formally show the identifying power of full independence and, more
importantly, how our framework allows us to make use of it to produce
sharp bounds. To this end, we compare our approach with \citet{mogstad2018using}'s
and show when the identified sets coincide and when they do not. We
first focus on a simple case where $Z$ is the only exogenous variable.
Then, we discuss how  $W$ can add further identifying power.

To help the reader, we restate the maintained assumption in \citet{mogstad2018using}
in terms of our notation:

\begin{asI}[Mogstad et al. (2019)]I.1. $U\perp Z|X$; I.2. $E[Y(d)|Z,X,U]=E[Y(d)|X,U]$
and $E[Y(d)^{2}]<\infty$ for $d\in\{0,1\}$; I.3. $U$ is continuously
distributed conditional on $X$.\end{asI}

Assumption I.2 imposes mean independence of the instrument $Z$. On
the other hand Assumption EX (after suppressing $W$ for clear comparison)
implies $Y(d)\perp Z|X,U$, namely, statistical independence of $Z$.
Therefore, we can expect that the latter has stronger identifying
power. Below, we show that the LP constructed with $q$ being the
choice variable instead of the MTR function $m_{d}$ enables us to
exploit this extra identifying power.

In \citet{mogstad2018using} the IV-like estimands are the channel
through which the data enters to restrict the set of $m_{d}$. They
show (their Proposition 3) that if the IV-like estimands are carefully
chosen, then the resulting set of $m_{d}$ is equivalent to the set
of $m_{d}$ that are consistent with
\begin{align}
 & E[Y|D=0,Z,X]=E[Y_{0}|U>p(Z,X),Z,X]=\frac{1}{1-P(Z,X)}\int_{p(Z,X)}^{1}m_{0}(u,X)du,\label{eq: infoset0}\\
 & E[Y|D=1,Z,X]=E[Y_{1}|U\leq p(Z,X),Z,X]=\frac{1}{P(Z,X)}\int_{0}^{p(Z,X)}m_{1}(u,X)du.\label{eq: infoset1}
\end{align}
Therefore, assuming such IV-like estimands are chosen, we can define
their identified set as 
\[
\mathcal{M}_{id}=\Big\{ m=(m_{0},m_{1}),m_{0},m_{1}\in L^{2}:m_{0},m_{1}\text{ satisfies equation (\ref{eq: infoset0}) and (\ref{eq: infoset1}) a.s.}\Big\}.
\]
Note that this set is the result of funneling the information from
data via the conditional means. 

When $Y$ is binary, the mean and full independence assumptions are
equivalent. Therefore we expect to find no difference in the resulting
identified sets of treatment parameters between the two methods. We
show this by comparing the identified set of the MTR functions $\mathcal{M}_{id}$
used in \citet{mogstad2018using} and the set of MTR functions derived
from the feasible set in the LP proposed in the current paper (denoted
as $\mathcal{M}_{f}$ below). More importantly, as soon as $Y$ departs
from binarity, we show that $\mathcal{M}_{f}$ is a proper subset
$\mathcal{M}_{id}$.

For comparison, define the feasible set $\mathcal{Q}_{f}$ of our
proposed LP as
\[
\mathcal{Q}_{f}=\Big\{ q\in L^{2}([0,1]):q\in\mathcal{Q}\text{ and satisfies equation }(\ref{eq:constr})\Big\},
\]
where $w$ is suppressed from \eqref{eq:constr}. To establish the
connection with $\mathcal{M}_{id}$, we construct the set of MTR functions
based on $\mathcal{Q}_{f}$. For this purpose, we drop $W$ from the
setting and redefine $\epsilon\equiv(Y(0),Y(1))$ and $e\equiv(y(0),y(1))$:

\[
\mathcal{M}_{f}=\Big\{ m=(m_{0},m_{1}):m_{d}(u,x)=\sum_{y\in\mathcal{Y}}y\sum_{e:y(d)=y}q(e|u,x),d=\{0,1\},q\in\mathcal{Q}_{f}\Big\}.
\]
Then the following holds:

\begin{theorem}\label{thm:equivalence_with_mogstad}Suppose $\mathcal{Y}=\{0,1\}$.
Under Assumptions SEL and EX, $\mathcal{M}_{f}=\mathcal{M}_{id}$.
\end{theorem}

Theorem \ref{thm:equivalence_with_mogstad} demonstrates that, when
$Y$ is binary, the constraint used in our LP approach captures the
same information as the constraint defined by the IV-like estimands
that are appropriately selected. Consistent with Proposition 3 in
\citet{mogstad2018using}, $\mathcal{M}_{f}$ and $\mathcal{M}_{id}$
are sharp in this case.

When $Y$ is non-binary and we continue to impose statistical independence,
the above equivalence breaks down. That is, unlike $\mathcal{M}_{id}$,
$\mathcal{Q}_{f}$ exploits the full data distribution and the statistical
independence assumption. We formally show this with $\mathcal{Y}=\{y_{1},...,y_{L}\}$.
We accordingly redefine $p$ and $R_{0}$ in \eqref{eq:constr} (i.e.,
$R_{0}q=p$) in the definition of $\mathcal{Q}_{f}$ as
\begin{align*}
p & \equiv\{p(y,d|z,x)\}_{(y,d,z,x)\in\mathcal{Y}\times\{0,1\}\times\mathcal{Z}\times\mathcal{X}},\\
R_{0}q & \equiv\left\{ \sum_{e:y(d)=y}\int_{\mathcal{U}_{z,x}^{d}}q(e|u,x)du\right\} _{(y,d,z,x)\in\mathcal{Y}\times\{0,1\}\times\mathcal{Z}\times\mathcal{X}},
\end{align*}
without excluding redundant constraints and suppressing $w$ for simplicity.
We introduce a technical assumption.

\begin{asEC}(i) $P(z,x)$ is upper semicontinuous on compact $\mathcal{Z}\times\mathcal{X}$;
(ii) $\mathcal{Q}_{f}\subset L^{2}([0,1])$ is equicontinuous.\end{asEC}

\begin{theorem}\label{thm:better_than_mogstad}Suppose $\mathcal{Y}=\{y_{1},...,y_{L}\}$
with arbitrary $L\ge3$. Under Assumptions SEL, EX and CT, $\mathcal{M}_{f}\subsetneq\mathcal{M}_{id}$.
\end{theorem}

When $Y$ takes more than two values,  $\mathcal{Q}_{f}$ continues
to exploit the full independence assumption and as shown in Theorem
\ref{thm:sharp_bds}, $\mathcal{M}_{f}$ is sharp. Note $\mathcal{M}_{id}$
is a strict outer set of $\mathcal{M}_{f}$ because the former only
exploits the first moment implication of full independence. This intuition
also suggests that the case of continuous $Y$ would deliver a similar
result as in Theorem \ref{thm:better_than_mogstad}. Theorems \ref{thm:equivalence_with_mogstad}
and \ref{thm:better_than_mogstad} corroborate the simulation results
in Section \ref{subsec:Identifying-Power-of}. The simulation results
for continuous $Y$ can be found in Section \ref{subsec:More-results-on}.

Finally, consider the case where $W$ is also present and satisfies
Assumptions EX and SEL. After proper modification of their assumptions
(i.e., $U\perp(Z,W)|X$ and $E[Y(d)|Z,W,X,U]=E[Y(d)|X,U]$), \citet{mogstad2018using}'s
framework may be able to use the exogenous variation of $W$. This
can be done by using the moments $E[Y|D=d,Z,W,X]$ ($d=0,1$) as inputs
in \eqref{eq: infoset0} and \eqref{eq: infoset1}.\footnote{Alternatively, the variation of $W$ can be reflected via IV-like
estimands as inputs. In this case, the identification of $\tau_{LATE}(w,w')$
is useful. We show its identifiability in Lemma \ref{lem:direction}
below.} However, by the same argument as the one above for $Z$, the full
independence assumption with respect to $W$ will have a stronger
identifying power than mean independence. When $Y$ is beyond binary,
our framework can exploit this. The two frameworks also differ in
how other additional identifying assumptions can be incorporated in
the procedures. This point appears as Remark \ref{rem:U_in_Mogstad}
in the next section.

\begin{remark}[Manski's Bounds]When the ATE is the target parameter,
the findings of this section has implications on the comparison of
\citet{manski1990nonparametric}'s bounds and our bounds (with no
additional assumptions). \citet{manski1990nonparametric} assumes
the mean independence of exogenous variables. It is shown by \citet{heckman2001instrumental}
that Manski's bounds coincide with their bounds, which should also
be equivalent to \citet{mogstad2018using}'s bounds (with no additional
assumptions).\footnote{Although \citet{heckman2001instrumental} impose full independence,
what they actually require in the derivation of their bounds is the
mean independence as in \citet{mogstad2018using}.} Therefore, Theorem \ref{thm:better_than_mogstad} implies that, under
the full independence assumption, our procedure produces the ATE bounds
that are narrower than Manski's bounds.\end{remark}

\section{Incorporating Additional Identifying Assumptions\label{sec:Additional-Assumptions}}

\subsection{Additional Identifying Assumptions\label{subsec:Additional-Identifying-Assumptio}}

In addition to Assumptions SEL, EX and R, researchers may be willing
to restrict the degree of treatment heterogeneity, the direction of
endogeneity, or the shape of the MTR functions. Although not necessary,
such restrictions play significant roles in yielding informative bounds,
especially given the challenge of extrapolating the LATE with minimal
variation of the exogenous variables. Among the proposed assumptions,
researchers want to use those that they deem plausible in given applications.
Also, we provide only a few examples of assumptions here; we believe
our proposed framework may open a venue for other assumptions we have
not explored.

\subsubsection{Restrictions on Treatment Heterogeneity\label{subsec:Restrictions-on-Treatment Heterogeneity}}

We present a range of restrictions on treatment heterogeneity in the
order of stringency starting from the strongest.

\begin{asU2}For every $w,w'\in\mathcal{W}$ and $x\in\mathcal{X}$,
either $\Pr[Y(1,w)\ge Y(0,w')|X=x]=1$ or $\Pr[Y(1,w)\le Y(0,w')|X=x]=1$.\end{asU2}

The following assumption is weaker than Assumption U$^{*}$.

\begin{asU}For every $w\in\mathcal{W}$ and $x\in\mathcal{X}$, either
$\Pr[Y(1,w)\ge Y(0,w)|X=x]=1$ or $\Pr[Y(1,w)\le Y(0,w)|X=x]=1$.\end{asU}

When $W$ is not available at all, this assumption can be understood
with $\text{\ensuremath{\mathcal{W}}}$ being degenerate. In Assumption
U$^{*}$, $w$ and $w'$ may be the same or different, i.e., the uniformity
is for all combinations of $(w,w')\in\{(0,0),(1,1),(1,0),(0,1)\}$.
Therefore, Assumption U$^{*}$ implies Assumption U. Assumptions U
and U$^{*}$ posit that individuals present uniformity in the sense
that the treatment either weakly increases the outcome for all individuals
or decreases it for all individuals. Intuitively, Assumption U$^{*}$
is stronger because the uniformity remains to hold even under an outcome
shift via $w\neq w'$; this may hold when the treatment effect is
strong across individuals. Assumptions U and U$^{*}$ share insights
with the monotone treatment response assumption that is introduced
to bound the ATE in \citet{manski1997monotone} and \citet{MP00}.
Assumptions U and U$^{*}$ are also related to rank invariance considered
in the literature (e.g., \citet{chernozhukov2005iv}, \citet{marx2020sharp}).
To see this, consider binary $Y$ and a structural model $Y=1[s(D,W)\ge V_{D}]$
with $V_{D}\equiv DV_{1}+(1-D)V_{0}$ (suppressing $X$). When rank
invariance ($V_{1}=V_{0}\equiv V$) holds, Assumption U$^{*}$ holds
(and thus Assumption U) because either $P[s(1,w)<V,s(0,w')\ge V]=0$
or $P[s(1,w)\ge V,s(0,w')<V]=0$. On the other hand, Assumption U$^{*}$
can still hold even if $V_{1}\neq V_{0}$ when, for example, the distribution
of $(V_{1},V_{0})$ is concentrated around $45^{\circ}$ line. Therefore
the converse is not true.\footnote{On the other hand, Assumption U$^{*}$ and rank similarity ($F_{V_{1}|U}=F_{V_{0}|U}$)
are not nested.} It is important to note that Assumptions U and U$^{*}$ still allow
treatment heterogeneity in terms of $X$ (and similarly of $W$).
For instance, Assumption U allows that $Y(1,w)\ge Y(0,w)$ a.s. for
$X=x$ but $Y(1,w)\le Y(0,w)$ a.s. for $X=x'$.\footnote{A related idea of conditional rank preservation appears in \citet{han2021identification}
and also used in \citet{marx2020sharp}.} Assumption U is also used, for example, in \citet{chiburis2010semiparametric}
although his focus is bounds on the ATE with binary $Y$.

In fact, it is possible to further weaken Assumption U. To motivate
this with binary $Y$, we state Assumption U as follows (suppressing
$X$): for every $w$, either $P[Y(1,w)=0,Y(0,w)=1]=0$ or $P[Y(1,w)=1,Y(0,w)=0]=0$.
In other words, all individuals respond weakly monotonically to the
treatment in the sense that there is no (strictly) negative-treatment-response
type or positive-treatment-response type. By iterated expectation,
Assumption U holds if and only if either $P[Y(1,w)=0,Y(0,w)=1|U]=0$
a.s. or $P[Y(1,w)=1,Y(0,w)=0|U]=0$ a.s. Then this assumption can
be relaxed by allowing for the existence of both types in the population
and instead assuming the dominance of one of the types over the other:
for example, $P[Y(1,w)=1,Y(0,w)=0|U]>P[Y(1,w)=0,Y(0,w)=1|U]$ a.s.
For general $Y$, such an assumption is written as follows (ignoring
a measure zero set of $U$):

\begin{asU3}For every $w\in\mathcal{W}$ and $x\in\mathcal{X}$,
either $P[Y(1,w)\ge Y(0,w)|U=u,X=x]\ge P[Y(1,w)\le Y(0,w)|U=u,X=x]$
for all $u$ or $P[Y(1,w)\ge Y(0,w)|U=u,X=x]\le P[Y(1,w)\le Y(0,w)|U=u,X=x]$
for all $u$.

\end{asU3}

Assumption U$^{0}$ is weaker than Assumption U, because when $P[Y(1,w)\le Y(0,w)|X=x]=0$,
Assumption U$^{0}$ trivially holds.\footnote{One can come up with assumptions which strength is between Assumption
U$^{0}$ and Assumption U. For example with binary $Y$, one can assume
that, in addition to Assumption U$^{0}$, $P[Y(1,w)=1,Y(0,w)=1|X=x]\ge P[Y(1,w)=0,Y(0,w)=1|X=x]$
and $P[Y(1,w)=0,Y(0,w)=0|X=x]\ge P[Y(1,w)=0,Y(0,w)=1|X=x]$ hold.
We do not explore these assumptions for succinctness.} Compared to Assumption U, Assumption U$^{0}$ allows further treatment
heterogeneity in that positive- and negative-treatment-response types
can both present in the population, but we impose a uniform order
on the probabilities of effect sign across individuals. Researchers
may be more comfortable with this assumption than complete uniformity.
This paper is the first to propose to use this restriction in the
identification of treatment effects. We call assumptions of this type
\textit{rank dominance}. 

We show that the directions in Assumptions U$^{*}$--U$^{0}$ may
be learned from the data. Suppress $X$ for simplicity and let $Z\in\{0,1\}$.
Let $\tau_{LATE}(w,w')\equiv E[Y(1,w)-Y(0,w')|P(0)\le U\le P(1)]$
be the LATE defined using the extended counterfactual outcome $Y(d,w)$
assuming Assumption SEL(a) and $P(1)>P(0)$; the definition of $\tau_{LATE}(w,w')$
under Assumption SEL(b) is analogous.

\begin{lemma}\label{lem:direction}Suppose Assumptions SEL and EX
hold and $Z\in\{0,1\}$. Then, (i) $\tau_{LATE}(w,w')$ is identified
for any $w,w'\in\mathcal{W}$. Based on this, (ii) the directions
in Assumptions U$^{*}$ and U are identified; (iii) additionally,
when $Y\in\{0,1\}$, the direction in Assumption U$^{0}$ is identified.\end{lemma}

The practical implication of this lemma is that the direction of each
assumption will be detected by the feasibility of relevant LP upon
imposing it; the next subsection has more details. The identification
of $\tau_{LATE}(w,w')$ extends the result in \citet{imbens1994identification}.
In particular,
\begin{align*}
\tau_{LATE}(w,w') & =\frac{1}{P(1)-P(0)}\left\{ \left(E[Y|Z=1,W=w]-E[Y|Z=0,W=w]\right.\right.\\
 & \quad+(1-P(0))(E[Y|D=0,Z=0,W=w]-E[Y|D=0,Z=0,W=w'])\\
 & \quad-\left.(1-P(1))(E[Y|D=0,Z=1,W=w]-E[Y|D=0,Z=1,W=w'])\right\} 
\end{align*}
The intuition behind (ii) and (iii) is as follows. The directions
of monotonicity in Assumptions U$^{*}$ and U can be identified by
the signs of relevant LATEs. On the other hand, the direction of inequality
in Assumption U$^{0}$ can only be identified with binary $Y$, because
otherwise the proportions of positive and negative treatment effects
can be offset by the magnitude of the effects. This limited testability
reflects the weak identifying power of Assumption U$^{0}$. The detail
of the intuition can be found in the formal proof of Lemma \ref{lem:direction}
in the Appendix. Previous work has discussed the role of the rank
similarity assumption on determining the sign of the ATE (\citet{bhattacharya2008treatment,SV11,han2019optimal}),
and the result above shows that Assumptions U$^{*}$--U$^{0}$ play
a similar role in the LP approach.

\subsubsection{Direction of Endogeneity}

In some applications, researchers are relatively confident about the
direction of treatment endogeneity. The idea of imposing the direction
of the selection bias as an identifying assumption appears in \citet{MP00},
who introduce monotone treatment selection (MTS), in addition to the
monotone treatment response assumption mentioned above.

\begin{asMTS}For every $w\in\mathcal{W}$ and $x\in\mathcal{X}$,
$E[Y(d,w)|D=1,X=x]\ge E[Y(d,w)|D=0,X=x]$ for $d\in\{0,1\}$.\end{asMTS}

\subsubsection{Shape Restrictions\label{subsec:Shape-Restrictions}}

Monotonicity and concavity are common shape restrictions used in the
context of MTR and MTE framework.

\begin{asM}For $(w,x)\in\mathcal{W}\times\mathcal{X}$, $m_{d}(u,w,x)$
is weakly increasing in $u\in[0,1]$.\end{asM}

\begin{asC}For $(w,x)\in\mathcal{W}\times\mathcal{X}$, $m_{d}(u,w,x)$
is weakly concave in $u\in[0,1]$.\end{asC}

Assumption M assumes monotonicity and appears in \citet{brinch2017beyond}
and \citet{mogstad2018using}; while Assumption C assumes concavity
and appears in \citet{mogstad2018using}. Another shape restriction
introduced in the literature is separability: $m_{d}(u,w,x)=m_{1d}(w,x)+m_{2d}(w,u)$
is weakly concave in $u\in[0,1]$.

\subsection{Incorporating Additional Assumptions in the LP}

In this section, we show how identifying assumptions introduced in
Section \ref{subsec:Additional-Identifying-Assumptio} can be easily
translated into assumptions on the mapping defined in \eqref{eq:map}.
This allows us to incorporate the additional assumptions in the formulation
of the LP, so that one does not need to manually derive analytical
bounds every time she imposes a new assumption (and prove their sharpness).

Before proceeding, we revisit Assumptions SEL, EX, and R in the context
of the LP. First, we formally show that the existence and relevance
of $W$ (as well as $Z$) embodied in Assumptions SEL, EX, and R can
be a useful source in narrowing the bounds.

\begin{theorem}\label{lem:nonredundancy}Under Assumptions SEL, EX,
and R, the variation of $Z$ and $W$ respectively poses non-redundant
constraints on $\theta\in\Theta_{K}$ in \eqref{eq:upper2}--\eqref{eq:constr2}
in that the rows of the constraint matrix in \eqref{eq:constr2} are
linearly independent.

\end{theorem}

Heuristically, the improvement occurs because, with R(i), the constraint
matrix (i.e., the matrix multiplied to the vector $\theta$ in \eqref{eq:constr2})
has greater rank with the variation of $W$ than without. See the
proof of the theorem for a formal argument. Note that non-redundant
constraints on $\theta$ do not always guarantee an improvement of
the bounds in \eqref{eq:upper2}--\eqref{eq:constr2}, because these
constraints may still be non-binding. Nevertheless, non-redundancy
is a necessary condition for the improvement.

We now show how to incorporate Assumptions U$^{*}$, U, U$^{0}$,
MTS, M, and C as additional equality and inequality restrictions in
the LP: Given the LP \eqref{eq:upper}--\eqref{eq:constr}, identifying
assumptions can be imposed by appending
\begin{align}
R_{1}q & =a_{1},\tag{{\text{\ensuremath{\infty}-LP4}}}\label{eq:LP4}\\
R_{2}q & \le a_{2},\tag{{\text{\ensuremath{\infty}-LP5}}}\label{eq:LP5}
\end{align}
where $R_{1}$ and $R_{2}$ are linear operators on $\mathcal{Q}$
that correspond to equality and inequality constraints, respectively,
and $a_{1}$ and $a_{2}$ are some vectors in Euclidean space. Then,
analogous constraints on $\theta$ can be added to the finite-dimensional
LP \eqref{eq:upper2}--\eqref{eq:constr2}. When an assumption violates
the true data-generating process, then the identified set will be
empty. This corresponds to the situation where the LP does not have
a feasible solution. When we reflect sampling errors, this corresponds
to the case where the confidence set is empty.\footnote{In order to verify whether the identified set is empty, we need to
check whether the feasible set of $\theta$ is empty. An efficient
way to do this is to identify vertices of the feasible polytope, if
any. This process is no simpler than the simplex algorithm that we
use to solve the LP. Therefore, we recommend that one first solves
the LP and check if infeasibility is reported.}

Assumptions U and U$^{*}$ are imposed in the LP by ``deactivating''
relevant maps. Motivating by Lemma \ref{lem:direction}, suppose the
researcher knows that $Y(1,w)\ge Y(0,w)$ almost surely for all $w\in\{0,1\}$
under Assumption U. This assumption can be imposed as equality constraints
\eqref{eq:LP4}, namely, in the form of $R_{1}q=a_{1}$: Suppressing
$x$ for simplicity and recalling $\epsilon\equiv(Y(0,0),Y(0,1),Y(1,0),Y(1,1))$,
\begin{align}
q(0,1,0,0|u) & =q(1,1,0,0|u)=q(0,1,1,0|u)=q(1,1,1,0|u)=0,\label{eq:U_restr1}\\
q(1,0,0,0|u) & =q(1,1,0,0|u)=q(1,0,0,1|u)=q(1,1,0,1|u)=0,\label{eq:U_restr2}
\end{align}
respectively, corresponding for $w=1$ and $w=0$ in $Y(1,w)\ge Y(0,w)$.
Therefore, the corresponding $\theta_{k}^{e}=0$. Then, the effective
dimension of $\theta$ will be reduced in \eqref{eq:upper2}--\eqref{eq:constr2}
and thus yields narrower bounds. As another example, suppose the researcher
knows that the following holds almost surely under Assumption U$^{*}$:
$Y(1,1)\ge Y(0,0)$, $Y(1,0)\ge Y(0,1)$, $Y(1,1)\ge Y(0,1)$, and
$Y(1,0)\ge Y(0,0)$. These inequalities respectively imply
\begin{align}
q(1,0,0,0|u) & =q(1,1,0,0|u)=q(1,0,1,0|u)=q(1,1,1,0|u)=0,\label{eq:U_restr3}\\
q(0,0,1,0|u) & =q(1,0,1,0|u)=q(0,0,1,1|u)=q(1,0,1,1|u)=0,\label{eq:U_restr4}
\end{align}
and \eqref{eq:U_restr1}--\eqref{eq:U_restr2}. Recall the discussion
that, in Assumption U (Assumption U$^{*}$), the direction of monotonicity
is allowed to be different for different $w$ ($(w,w')$ pairs). This
direction will be identified from the data (Lemma \ref{lem:direction}).
Specifically, the direction can be automatically determined from the
LP by inspecting whether the LP has a feasible solution; when wrong
maps are removed, there is no feasible solution. This result holds
regardless of the existence of $W$. A similar argument applies to
Assumption U$^{0}$ with binary $Y$. Finally, suppose $P[Y(1,w)\ge Y(0,w)|U=u]\ge P[Y(1,w)\le Y(0,w)|U=u]$
for all $w\in\{0,1\}$ and $u$ under Assumption U$^{0}$. Then,
we can generate the following inequality restrictions:
\begin{align*}
 & q(0,0,1,0|u)+q(0,1,1,0|u)+q(0,0,1,1|u)+q(0,1,1,1|u)\\
 & \ge q(1,0,0,0|u)+q(1,1,0,0|u)+q(1,0,0,1|u)+q(1,1,0,1|u),\\
 & q(0,0,0,1|u)+q(1,0,0,1|u)+q(0,0,1,1|u)+q(1,0,1,1|u)\\
 & \ge q(0,1,0,0|u)+q(1,1,0,0|u)+q(0,1,1,0|u)+q(1,1,1,0|u).
\end{align*}

Next, consider Assumption MTS. This assumption can be imposed in the
form of $R_{2}q\le a_{2}$. To see this, Assumption MTS is equivalent
to
\begin{align*}
\sum_{e:y(d,w)=1}E\left[\left.\int_{P(Z,X)}^{1}q(e|u,X)du-\int_{0}^{P(Z,X)}q(e|u,X)du\right|X=x\right] & \le0
\end{align*}
for all $(d,w,x)$. As is clear from this expression, Assumption MTS
imposes restrictions on the joint distribution of $(\epsilon,U)$.

Finally, consider Assumptions M and C. It is straightforward to incorporate
the shape restrictions on the MTR or MTE function. They can be imposed
via inequality constraints \eqref{eq:LP4}, namely, in the form of
$R_{2}q\le a_{2}$. For implications on the finite-dimensional LP
\eqref{eq:upper2}--\eqref{eq:constr2}, recall that for $q\in\mathcal{Q}_{K}$,
the MTR satisfies
\begin{align*}
m_{d}(u,w,x) & =\sum_{e:y(d,w)=1}q(e|u,x)=\sum_{k\in\mathcal{K}}\sum_{e:y(d,w)=1}\theta_{k}^{e,x}b_{k}(u).
\end{align*}
According to the property of the Bernstein polynomial, Assumption
M implies that $\sum_{e:y(d,w)=1}\theta_{k}^{e,x}$ is weakly increasing
in $k$, i.e.,
\begin{align*}
\sum_{e:y(d,w)=1}\theta_{1}^{e,x} & \le\sum_{e:y(d,w)=1}\theta_{2}^{e,x}\le\cdots\le\sum_{e:y(d,w)=1}\theta_{K}^{e,x}.
\end{align*}
Assumption C implies that
\begin{align*}
\sum_{e:y(d,w)=1}\theta_{k}^{e,x}-\sum_{e:y(d,w)=1}2\theta_{k+1}^{e,x}+\sum_{e:y(d,w)=1}\theta_{k+2}^{e,x} & \le0\qquad\text{for }k=0,...,K-2.
\end{align*}
One can obtain analogous assumptions and their implications in the
presence of $W$.

\begin{remark}\label{rem:U_in_Mogstad}In terms of incorporating
additional identifying assumptions, some implications of Assumptions
U$^{*}$--U$^{0}$ (but not these assumptions directly) may be imposed
via the MTR function of \citet{mogstad2018using}'s framework. Nevertheless,
Assumptions U and U$^{0}$ cannot play distinctive roles in their
framework. To see this, consider $P[Y(1)\ge Y(0)]=1$, which is consistent
with Assumption U (suppressing $(W,X)$). This implies that $m_{1}(u)\ge m_{0}(u)$,
which then can be imposed as a restriction in \citet{mogstad2018using}.
However, $P[Y(1)\ge Y(0)|U=u]\ge P[Y(1)\le Y(0)|U=u]$ $\forall u$,
which is consistent with Assumption U$^{0}$, also implies $m_{1}(u)\ge m_{0}(u)$.\end{remark}

\section{Extension: Continuous $Y$\label{sec:Extension:-Continuous}}

\subsection{Identified Set and Infinite-Dimensional Linear Programming}

The analogous approach of LP can be applied to the case of continuous
outcome variable. We consider the continuous outcome with support
$\mathcal{Y}=[0,1]$ without loss of generality.\footnote{Note that $\mathcal{Y}=\mathbb{R}$ is homeomorphic to the open interval
$(0,1)$. We use the closure of the latter as $\mathcal{Y}$ for notational
convenience later.} As a key component of our LP, we define the following conditional
distribution:
\begin{align*}
\tilde{q}(e|u,x) & \equiv\Pr\left[\epsilon\le e|U=u,X=x\right],
\end{align*}
where $\epsilon\equiv(Y(0,0),Y(0,1),Y(1,0),Y(1,1))$ and $e\equiv(y(0,0),y(0,1),y(1,0),y(1,1))$
as before, and ``$\epsilon\le e$'' is understood as an element-wise
inequality. First, we show how the data distribution imposes restrictions
on $\tilde{q}(e|u,x)$. From the data, we observe
\[
\pi(y,d|z,w,x)\equiv\Pr\left[Y\leq y,D=d|Z=z,W=w,X=x\right]
\]
for all $(y,d,z,w,x)$. Then, for example, consider the case with
$d=1$. The conditional distribution can be written as
\begin{align*}
\pi(y,1|z,w,x) & \equiv\Pr\left[Y\leq y,D=1|Z=z,W=w,X=x\right]\\
 & =\Pr\left[Y(1,w)\leq y,U\leq P(z,x)|X=x\right]\\
 & =\int_{0}^{P(z,x)}\Pr\left[Y(1,w)\leq y|U=u,X=x\right]du\\
 & =\int_{0}^{P(z,x)}\int_{\{e:y(1,w)\le y\}}\tilde{q}(e|u,x)dedu,
\end{align*}
where the second equality follows  by Assumption EX and the inner
integral is the shorthand for a multiple integral with respect to
the vector $e$.\footnote{It is worth noting that even though we use the joint distribution
of $\epsilon\equiv(Y(0,0),Y(0,1),Y(1,0),Y(1,1))$ as the building
block, we do not require joint independence between $\epsilon$ and
$(Z,W)$ for the derivation. This is because the observed data distribution
is a marginal distribution in $Y$, and thus the marginal independence
is enough. The same explanation applies to the analogous derivation
\eqref{eq:data} in Section \ref{sec:Distribution-of-State}. } 

Similarly for the target parameters, the MTR function can be expressed
as follows. For example, for $D=0$,
\begin{align*}
m_{0}(u,w,x) & =E\left[Y(0,w)|U=u,X=x\right]\\
 & =\int_{0}^{1}(1-\Pr\left[Y(0,w)\leq y|U=u,X=x\right])dy\\
 & \equiv1-\int_{0}^{1}\int_{\{e:y(0,w)\le y\}}\tilde{q}(e|u,x)dedy.
\end{align*}
We now define the identified set of the target parameters. Let $\tilde{q}(u)\equiv\{\tilde{q}(e|u,x)\}_{e\in\mathcal{E},x\in\mathcal{X}}$
be the vector of $\tilde{q}(e|u,x)$'s. We introduce the class of
$\tilde{q}(\cdot)$ to be
\begin{align*}
\tilde{\mathcal{Q}}\equiv & \left\{ \tilde{q}(\cdot):0\leq\tilde{q}(\cdot|u,x\text{)}\leq1,\tilde{q}(\cdot|u,x)\text{ is 4-increasing},\footnotemark\right.\\
 & \qquad\quad\left.\tilde{q}(1,1,1,1|u,x)=1,\tilde{q}(e|u,x)=0\text{ for }e\text{ that contains zero }\forall(u,x)\right\} .
\end{align*}
\footnotetext{A function $\tilde{q}:[0,1]^{4}\rightarrow[0,1]$ is 4-increasing if, for any 4-box $[a_{1},b_{1}]\times\cdots\times[a_{4},b_{4}]\subseteq[0,1]^{4}$, it satisfies \[ \Delta_{a_{4}}^{b_{4}}\Delta_{a_{3}}^{b_{3}}\Delta_{a_{2}}^{b_{2}}\Delta_{a_{1}}^{b_{1}}\tilde{q}(e)\geq0, \] where $\Delta_{a_{1}}^{b_{1}}\tilde{q}(e)\equiv\tilde{q}(b_{1},e_{2},e_{3},e_{4})-\tilde{q}(a_{1},e_{2},e_{3},e_{4})$, $\Delta_{a_{2}}^{b_{2}}\tilde{q}(e)\equiv\tilde{q}(e_{1},b_{2},e_{3},e_{4})-\tilde{q}(e_{1},a_{2},e_{3},e_{4})$, and so on.}Define
the vector of CDFs
\begin{align*}
\textbf{\ensuremath{\pi}}(y) & \equiv\left\{ \pi(y,d|z,w,x)\right\} _{(d,z,w,x)\in\{0,1\}\times\mathcal{Z}\times\mathcal{W}\times\mathcal{X}}\\
 & \equiv\left\{ (\pi(y,0|z,w,x),\pi(y,1|z,w,x))'\right\} _{(z,w,x)\in\mathcal{Z}\times\mathcal{W}\times\mathcal{X}}
\end{align*}
and the linear operators $\tilde{R}_{\omega}:\tilde{\mathcal{Q}}\mathcal{\to\mathbb{R}}$
and $\tilde{R}_{0}:\tilde{\mathcal{Q}}\to\mathbb{R}^{d_{\textbf{\ensuremath{\pi}}}}$
(with $d_{\textbf{\ensuremath{\pi}}}$ being the dimension of $\textbf{\ensuremath{\pi}}$)
of $\tilde{q}(\cdot)$ that satisfy:
\begin{align*}
\tilde{R}_{\omega}\tilde{q} & \equiv E\left[\int\left(1-\int_{0}^{1}\int_{\{e:y(0,W)\le y\}}\tilde{q}(e|u,x)dedy\right)\omega_{1}(u,Z,X)du\right.\\
 & \left.-\int\left(1-\int_{0}^{1}\int_{\{e:y(1,W)\le y\}}\tilde{q}(e|u,x)dedy\right)\omega_{0}(u,Z,X)du\right],\\
(\tilde{R}_{0}\tilde{q})(y) & \equiv\left\{ \left(\begin{array}{c}
\int_{\mathcal{U}_{z,x}^{0}}\int_{\{e:y(0,w)\le y\}}\tilde{q}(e|u,x)dedu\\
\int_{\mathcal{U}_{z,x}^{1}}\int_{\{e:y(1,w)\le y\}}\tilde{q}(e|u,x)dedu
\end{array}\right)\right\} _{(z,w,x)\in\mathcal{Z}\times\mathcal{W}\times\mathcal{X}},
\end{align*}
where the expectation is taken over $(W,Z,X)$ and $\text{\ensuremath{\mathcal{U}_{z,x}^{1}=[0,P(z,x)]}}$
and $\mathcal{U}_{z,x}^{0}=(P(z,x),1]$.

\begin{definition}The identified set of $\tau$ is defined as
\[
\mathcal{T}^{*}\equiv\{\tau\in\mathbb{R}:\tau=\tilde{R}_{\omega}\tilde{q}\text{ for some }\tilde{q}\in\tilde{\mathcal{Q}}\text{ such that }(\tilde{R}_{0}\tilde{q})(y)=\textbf{\ensuremath{\pi}}(y)\text{ for all }y\in\mathcal{Y}\}.
\]
\end{definition}

Then the $\infty$-LP is formulated as:
\begin{align}
\overline{\tau} & =\sup_{\tilde{q}\in\tilde{\mathcal{Q}}}\tilde{R}_{\omega}\tilde{q}\label{eq:LP1_y}\\
\underline{\tau} & =\inf_{\tilde{q}\in\tilde{\mathcal{Q}}}\tilde{R}_{\omega}\tilde{q}\label{eq:LP2_y}
\end{align}
subject to 
\begin{equation}
(\tilde{R}_{0}\tilde{q})(y)=\textbf{\ensuremath{\pi}}(y)\qquad\text{for all }y\in\mathcal{Y}.\label{eq:LP3_y}
\end{equation}
Note that the LP is infinite dimensional not only because of $\tilde{q}$
but also \eqref{eq:LP3_y}, which consists of a continuum of constraints.

\subsection{Finite-Dimensional Linear Programming}

Analogous to Section \ref{sec:Sieve-Approximation-and}, we approximate
the unknown function $\tilde{q}(\cdot)$ using multivariate Bernstein
polynomials:
\[
\tilde{q}(e|u,x)\approx\sum_{\boldsymbol{k}=1}^{K}\theta_{\boldsymbol{k}}^{x}b_{\boldsymbol{k}}(e,u),
\]
where $b_{\boldsymbol{k}}(e,u)\equiv b_{\boldsymbol{k},K}(e,u)$ is
a 5-variate Bernstein polynomials with $\boldsymbol{k}\equiv(\boldsymbol{k}_{e},k_{u})$
and $\boldsymbol{k}_{e}\equiv(k_{00},k_{01},k_{10},k_{11})$ and its
coefficient $\theta_{\boldsymbol{k}}^{x}\equiv\theta_{\boldsymbol{k},K}^{x}\equiv\tilde{q}(\boldsymbol{k}_{e}/K|k_{u}/K,x)$.
Note that ``$\sum_{\boldsymbol{k}=1}^{K}$'' stands for ``$\sum_{k_{00},k_{01},k_{10},k_{11},k_{u}=1}^{K}$.''
Then the constraint can be written as a linear combination of the
unknown parameters $\left\{ \theta_{\boldsymbol{k}}^{x}\right\} _{(\boldsymbol{k},x)\in\mathcal{K}^{5}\times\mathcal{X}}$.
For example,
\begin{align}
\pi(y,1|z,w,x) & =\int_{0}^{P(z,x)}\int_{\{e:y(1,w)\le y\}}\tilde{q}(e|u,x)dedu\nonumber \\
 & =\sum_{\boldsymbol{k}=1}^{K}\theta_{\boldsymbol{k}}^{x}\int_{0}^{P(z,x)}\int_{\{e:y(1,w)\le y\}}b_{\boldsymbol{k}}(e,u)dedu\nonumber \\
 & \equiv\sum_{\boldsymbol{k}=1}^{K}\theta_{\boldsymbol{k}}^{x}\sigma_{\boldsymbol{k}}^{1}(y,z,x),\label{eq:deriv_constraint_y}
\end{align}
where $\sigma_{\boldsymbol{k}}^{d}(y,z,x)\equiv\int_{\mathcal{U}_{z,x}^{d}}\int_{\{e:y(1,w)\le y\}}b_{\boldsymbol{k}}(e,u)dedu$.
Similarly, the target parameter can be written as, for example,
\begin{align}
E\left[\tau_{0}(Z,W,X)\right] & =\sum_{(z,w,x)\in\{0,1\}\times\mathcal{W}\times\mathcal{X}}p(z,w,x)\int\left(1-\int_{0}^{1}\int_{\{e:y(0,w)\le y\}}\tilde{q}(e|u,x)dedy\right)\omega_{0}(u,z,x)du\nonumber \\
 & =\sum_{z,w,x}p(z,w,x)\int\omega_{0}(u,z,x)du\nonumber \\
 & \;\;\;\;\;\;-\sum_{w,x}\sum_{\boldsymbol{k}=1}^{K}\theta_{\boldsymbol{k}}^{x}\sum_{z\in\{0,1\}}p(z,w,x)\int\left(\int_{0}^{1}\int_{\{e:y(0,w)\le y\}}b_{\boldsymbol{k}}(e,u)dedy\right)\omega_{0}(u,z,x)du\nonumber \\
 & \equiv c_{0}-\sum_{w,x}\sum_{\boldsymbol{k}=1}^{K}\theta_{\boldsymbol{k}}^{x}\zeta_{\boldsymbol{k}}^{0}(w,x),\label{eq:deriv_target_y}
\end{align}
where $\zeta_{\boldsymbol{k}}^{d}(w,x)\equiv\sum_{z\in\{0,1\}}p(z,w,x)\int\left(\int_{0}^{1}\int_{\{e:y(d,w)\le y\}}b_{\boldsymbol{k}}(e,u)dedy\right)\omega_{d}(u,z,x)du$.

To address the challenge that the constraint \eqref{eq:deriv_constraint_y}
is indexed by continuous $y$, we proceed as follows. Note that, for
any measurable function $h:\mathcal{Y}\rightarrow\mathbb{R}$, $E|h(Y)|=0$
if and only if $h(y)=0$ almost everywhere in $\mathcal{Y}$ (\citet{beresteanu2011sharp}).
Therefore, the constraint (with general $d$) can be replaced by:
\[
E\left|\sum_{\boldsymbol{k}=1}^{K}\theta_{\boldsymbol{k}}^{x}\sigma_{\boldsymbol{k}}^{d}(Y,z,x)-\pi(Y,d|z,w,x)\right|=0,
\]
which is now a single constraint given $(d,z,w,x)$. In estimation,
the population mean can be replaced with the sample mean; see Section
\ref{subsec:Inference}. Now, redefine $\theta\equiv\{\theta_{\boldsymbol{k}}^{x}\}_{(x,\boldsymbol{k})\in\mathcal{X}\times\mathcal{K}^{5}}$
and 
\begin{align*}
\Theta_{K}\equiv & \left\{ \theta:0\leq\theta_{\boldsymbol{k}}^{x}\leq1\text{ }\forall(x,\boldsymbol{k}),\text{\ensuremath{\theta_{\boldsymbol{k}_{e},k_{u}}^{x}}}\text{ is 4-increasing in }\boldsymbol{k}_{e},\right.\\
 & \qquad\left.\theta_{\boldsymbol{K}_{e},k_{u}}^{x}=1,\theta_{\boldsymbol{k}_{e},k_{u}}^{x}=0\text{ for any }\boldsymbol{k}_{e}\text{ that contains }1\text{ }\forall(x,k_{u})\right\} ,
\end{align*}
where $\boldsymbol{K}_{e}\equiv(K,K,K,K)$.Then, the LP can be formulated
as
\begin{align}
\underline{\tau}_{K} & =\min_{\theta\in\Theta_{K}}\sum_{(w,x)\in\mathcal{W}\times\mathcal{X}}\sum_{\boldsymbol{k}=1}^{K}\theta_{\boldsymbol{k}}^{x}\left\{ -\zeta_{\boldsymbol{k}}^{1}(w,x)+\zeta_{\boldsymbol{k}}^{0}(w,x)\right\} \label{eq:LP1_y2}\\
\overline{\tau}_{K} & =\max_{\theta\in\Theta_{K}}\sum_{(w,x)\in\mathcal{W}\times\mathcal{X}}\sum_{\boldsymbol{k}=1}^{K}\theta_{\boldsymbol{k}}^{x}\left\{ -\zeta_{\boldsymbol{k}}^{1}(w,x)+\zeta_{\boldsymbol{k}}^{0}(w,x)\right\} \label{eq:LP2_y2}
\end{align}
subject to 
\begin{equation}
E\left|\sum_{\boldsymbol{k}=1}^{K}\theta_{\boldsymbol{k}}^{x}\sigma_{\boldsymbol{k}}^{d}(Y,z,x)-\pi(Y,d|z,w,x)\right|=0\quad\forall(d,z,w,x)\in\{0,1\}\times\mathcal{Z}\times\mathcal{W}\times\mathcal{X}.\label{eq:LP3_y2}
\end{equation}

\section{Simulation\label{sec:Simulation}}

This section provides numerical results to illustrate our theoretical
framework and to show the role of different identifying assumptions
in improving bounds on the target parameters. For target parameters,
we consider the ATE and the LATEs for always-takers (LATE-AT), never-takers
(LATE-NT), and compliers (LATE-C). We calculate the bounds on them
based only on the information from the data and then show how additional
assumptions (e.g., the existence of additional exogenous variables,
uniformity, and shape restrictions) tighten the bounds.

One important question we want to answer in the exercise is how the
current LP approach compares to that in \citet{mogstad2018using}.
This question is theoretically explored in Section \ref{sec:Identifying-Power-of},
where we showed that our LP approach can capture full independence's
stronger identifying power than mean independence. We show that the
simulation results are consistent with the theoretical finding.

\subsection{Data-Generating Process}

We generate the observables $(Y,D,Z,X,W)$ from the following data-generating
process (DGP). We assume that $W$ is a reverse IV, i.e., we maintain
Assumptions EX and SEL(a). We allow covariate $X$ to be endogenous.
All the variables are set to be binary with $\Pr\left[Z=1\right]=0.5$,
$\Pr\left[X=1\right]=0.6$ and $\Pr\left[W=1\right]=0.4$. The treatment
$D$ is determined by $Z$ and $X$ through the threshold crossing
model specified in Assumption SEL(a), where the propensity scores
$P(z,x)$ are specified as follows: $P(0,0)=0.1$, $P(1,0)=0.4$,
$P(0,1)=0.4$, and $P(1,1)=0.7$. The outcome $Y$ is generated from
$(D,X,W)$ through $Y=DY_{1}+(1-D)Y_{0}$. For the case of binary
$Y$, we generate $Y_{d}$ from
\begin{equation}
Y_{d}=1\left[m_{d}(U,X,W)\geq\epsilon\right]\label{eq:Y_sim}
\end{equation}
with the MTR functions are defined as
\begin{align*}
m_{0}(u,0,0) & =0.01b_{0}^{4}(u)+0.02b_{1}^{4}(u)+0.02b_{2}^{4}(u)+0.02b_{3}^{4}(u)+0.02b_{4}^{4}(u),\\
m_{1}(u,0,0) & =0.03b_{0}^{4}(u)+0.06b_{1}^{4}(u)+0.09b_{2}^{4}(u)+0.12b_{3}^{4}(u)+0.12b_{4}^{4}(u),\\
m_{0}(u,0,1) & =0.05b_{0}^{4}(u)+0.54b_{1}^{4}(u)+0.73b_{2}^{4}(u)+0.84b_{3}^{4}(u)+0.86b_{4}^{4}(u),\\
m_{1}(u,0,1) & =0.94b_{0}^{4}(u)+0.95b_{1}^{4}(u)+0.96b_{2}^{4}(u)+0.96b_{3}^{4}(u)+0.96b_{4}^{4}(u),\\
m_{0}(u,1,0) & =0.01b_{0}^{4}(u)+0.02b_{1}^{4}(u)+0.03b_{2}^{4}(u)+0.04b_{3}^{4}(u)+0.04b_{4}^{4}(u),\\
m_{1}(u,1,0) & =0.01b_{0}^{4}(u)+0.05b_{1}^{4}(u)+0.09b_{2}^{4}(u)+0.13b_{3}^{4}(u)+0.13b_{4}^{4}(u),\\
m_{0}(u,1,1) & =0.26b_{0}^{4}(u)+0.61b_{1}^{4}(u)+0.84b_{2}^{4}(u)+0.93b_{3}^{4}(u)+0.94b_{4}^{4}(u),\\
m_{1}(u,1,1) & =0.95b_{0}^{4}(u)+0.96b_{1}^{4}(u)+0.97b_{2}^{4}(u)+0.98b_{3}^{4}(u)+0.99b_{4}^{4}(u),
\end{align*}
where $b_{k}^{K}$ stands for the $k$-th basis function in the Bernstein
approximation of degree $K$. These MTR functions are consistent with
Assumptions M and C, i.e., to be weakly monotone and concave in $u$
for all $(d,x,w)\in\left\{ 0,1\right\} ^{3}$. Also, the DGP in \eqref{eq:Y_sim}
satisfies Assumption U$^{*}$ because $\epsilon$ does not depend
on $d=0,1$ and the MTR functions satisfy $m_{1}(u,x,w)>m_{0}(u,x,w)$
for all $(d,x,w)\in\left\{ 0,1\right\} ^{3}$. Therefore, the DGP
also satisfies Assumptions U and U$^{0}$. Following the second example
in Section \ref{subsec:Restrictions-on-Treatment Heterogeneity},
the DGP satisfies the following uniform order for the counterfactual
outcomes $Y(d,w)$: $Y(1,1)\ge Y(0,1)\geq Y(1,0)\ge Y(0,0)$ a.s.
The case with non-binary $Y$ has  DGPs with similar structure, which
we omit for succinctness. We generate a sample containing 1,000,000
observations and choose $K=50$. We choose the large sample size to
mimic the population. Our choice of $K$ is discussed below. The number
of unknown parameters $\theta$ in the linear programming is equal
to $\dim(\theta)=\left|\mathcal{E}\right|\times\left|\mathcal{X}\right|\times(K+1)$. 

\subsection{Comparison to \citet{mogstad2018using}\label{subsec:Identifying-Power-of}}

To illustrate the usefulness the current approach in incorporating
full independence of the IV, we make comparisons with \citet{mogstad2018using}.
Motivated by the theoretical results in Section \ref{sec:Identifying-Power-of},
we consider a range of cases for the support $\mathcal{Y}$ of $Y$.
Specifically, $Y$ takes values in $\{0,1\}$, $\{0,0.5,1\}$, $\{0,0.25,0.5,0.75,1\}$,
and $\{0,0.1,0.2,0.3,0.4,0.5,0.6,0.7,0.8,0.9,1\}$. We additionally
consider different supports $\mathcal{Z}$ of $Z$ and allow $Z$
taking values from $\{0,1\}$, $\{0,0.5,1\}$ and $\{0,0.25,0.5,0.75,1\}$.
Note that we intentionally fix the endpoints of $\mathcal{Y}$ and
$\mathcal{Z}$ to remove the effect of increased variation of the
variables. According to Section \ref{sec:Identifying-Power-of}, it
is conceivable that $Y$ departing from binary will deliver narrower
bounds in our approach than \citet{mogstad2018using}'s. The gain
from $Z$ departing from binary can be more subtle as the endpoints
are fixed. For different combinations of $\mathcal{Y}$ and $\mathcal{Z}$,
we derive the bounds on the ATE.
\begin{figure}
\centering{}\includegraphics[scale=0.35]{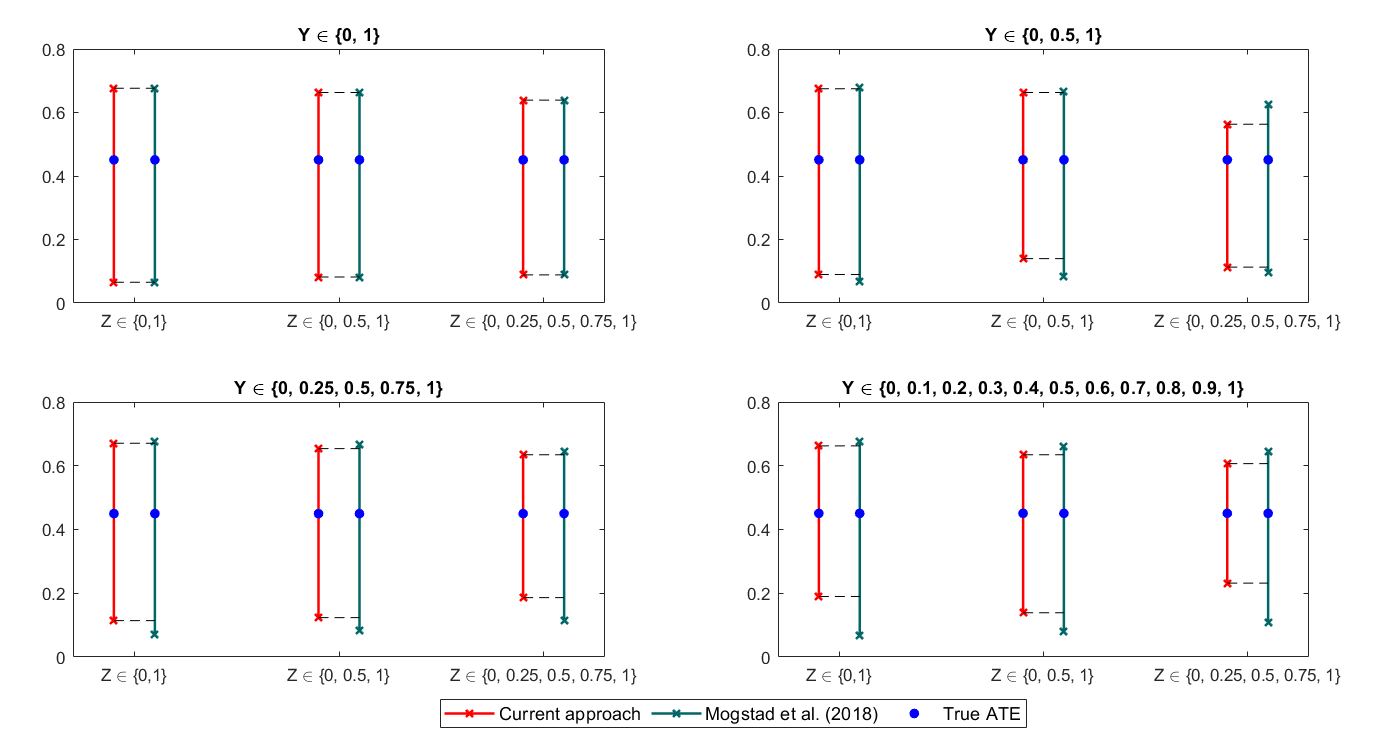}\caption{Full vs. Mean Independence: Bounds on ATE with Discrete $Z$ and $Y$}
\label{fig: MTE diffK-1}\label{fig: compare}
\end{figure}

Figure \ref{fig: compare} compares bounds that are calculated by
using the current approach with those that are replicated by using
\citet{mogstad2018using}'s approach. The first subfigure on the top-left
shows that the bounds are identical between the two methods, because
$Y$ is binary and thus full independence between $Y(d)$ and $Z$
is equivalent to mean independence. This equivalence is maintained
no matter how many values $Z$ takes. In the next three subfigures,
our bounds are tighter than \citet{mogstad2018using}'s as we expect.
They show a pattern that, as $\mathcal{Y}$ contains more values,
the improvement from the current approach is more substantial. This
is because, as $Y$ takes more values, with greater degree, full independence
contains richer structure than mean independence. Again, this pattern
remains to hold regardless of the choice of $\mathcal{Z}$. It would
also be interesting to investigate (i) the case with continuous $Y$
and (ii) the impact of $\mathcal{Z}$ when we move its endpoints further
apart. They are explored in Section \ref{subsec:More-results-on}
of the Appendix.

\subsection{Bounds under Different Assumptions}

\subsubsection{ATE}

Focusing on binary $Y$, Table \ref{tab:ATE} contains the bounds
on the ATE under different assumptions, and these bounds are illustrated
in Figure \ref{fig: ATE-1} and \ref{fig: ATE-2}. The true ATE value
is $0.15$, depicted as the solid red line in the figure. From \ref{fig: ATE-1},
the worst-case bounds on the ATE with no additional assumptions (and
without using variation from $W$) are $[-0.25,0.45]$. Since the
mappings do not involve $W$, we have $|\mathcal{E}|=4$, and the
linear programming is solved with $\dim(\theta)=\left|\mathcal{E}\right|\times\left|\mathcal{X}\right|\times(K+1)=4\times2\times51=408$.

For comparison, we calculate the bounds that incorporate the existence
of $W$. We express the target parameters with mappings involving
$W$ and use data distribution conditional on $W=0$ and $W=1$ as
the constraints. With binary $W$, we have $|\mathcal{E}|=16$, which
gives $\dim(\theta)=\left|\mathcal{E}\right|\times\left|\mathcal{X}\right|\times(K+1)=16\times2\times51=1,632$.
The resulting bounds are depicted in the dotted greenish-blue line.
When the variation from $W$ is used, the bounds on the ATE are $[-0.21,0.42]$,
which is narrower than without using $W$. This result is consistent
with our theoretical finding presented in Theorem \ref{lem:nonredundancy}
that $W$ can help tighten the bounds as long as it is a relevant
variable. Nonetheless, these worst-case bounds are not that informative,
e.g., they do not determine the sign of the ATE. 

Next, we impose Assumption U$^{0}$ without $W$ and with $W$.\footnote{Assumption U and U$^{0}$ give the same bounds in our exercise, therefore,
we use the weaker assumption and present the results.} Under Assumption U$^{0}$, the bounds on the ATE are tightened as
we incorporate extra inequality constraints according to the direction
of monotonicity. As mentioned in Section \ref{subsec:Restrictions-on-Treatment Heterogeneity},
the direction of monotonicity in Assumption U$^{0}$ is determined
by the LPs. We solve the LPs with different directions imposed, then
choose the one with a feasible solution. This means that the corresponding
direction of monotonicity is consistent with the DGP. Under Assumption
U$^{0}$, we obtain a bound $[0.05,0.45]$, which is narrower comparing
with the worst-case bound. With $W$, under Assumption U$^{0}$, the
bounds become $[0.05,0.42]$. In Figure \ref{fig: ATE-1}, these bounds
under Assumptions U$^{0}$ without and with $W$ are depicted as violet
and green dashed lines, respectively. Both sets of bounds identify
the sign of the ATE, consistent with the theoretical discussion. The
improvement is mainly on the lower bounds and the upper bounds coincide
with the corresponding worst-case upper bound without and with $W$.
These improvements come from the ability to identify the sign under
the uniformity assumptions. 

Next, we impose the shape restrictions (Assumptions M and C). As discussed
in Section \ref{subsec:Shape-Restrictions}, these assumptions can
be easily incorporated in the linear programming by directly imposing
inequality constraints on $\theta$. Under these assumptions (and
the existence of $W$), the bounds on the ATE shrink to $[0.12,0.19]$,
which is displayed with the pink line in Figure \ref{fig: ATE-1}.
We find that shape restrictions are powerful assumptions and yield
narrower bounds compared to those with uniformity assumptions. They
function differently in the linear programming: unlike the uniformity
assumption, which maintains the ranking of individuals across counterfactual
groups, shape restrictions directly control the MTR functions.

\begin{table}[H]
\caption{\label{tab:ATE}Bounds on ATE under Various Assumptions}

\bigskip{}

\begin{centering}
{\small{}}%
\begin{tabular}{c|cc|c}
\hline 
Assumptions & Lower Bound & Upper Bound & True Value\tabularnewline
\hline 
\hline 
{\small{}No Assumption} & -0.25 & {\small{}0.45} & {\small{}0.15}\tabularnewline
\hline 
Incorporating $W$ & -0.21 & {\small{}0.42} & {\small{}0.15}\tabularnewline
\hline 
Assumption U$^{0}$ & {\small{}0.05} & {\small{}0.45} & {\small{}0.15}\tabularnewline
\hline 
Incorporating $W$+Assumption U$^{0}$ & {\small{}0.05} & {\small{}0.42} & {\small{}0.15}\tabularnewline
\hline 
Incorporating $W$+Monotonicity+Concavity & {\small{}0.12} & {\small{}0.19} & {\small{}0.15}\tabularnewline
\hline 
Incorporating $W$+Assumption U$^{*}$ & {\small{}0.05} & {\small{}0.38} & {\small{}0.15}\tabularnewline
\hline 
\end{tabular}{\small\par}
\par\end{centering}
\bigskip{}
\end{table}

\begin{figure}
\centering{}\hspace{-0.7cm}\includegraphics[scale=0.5]{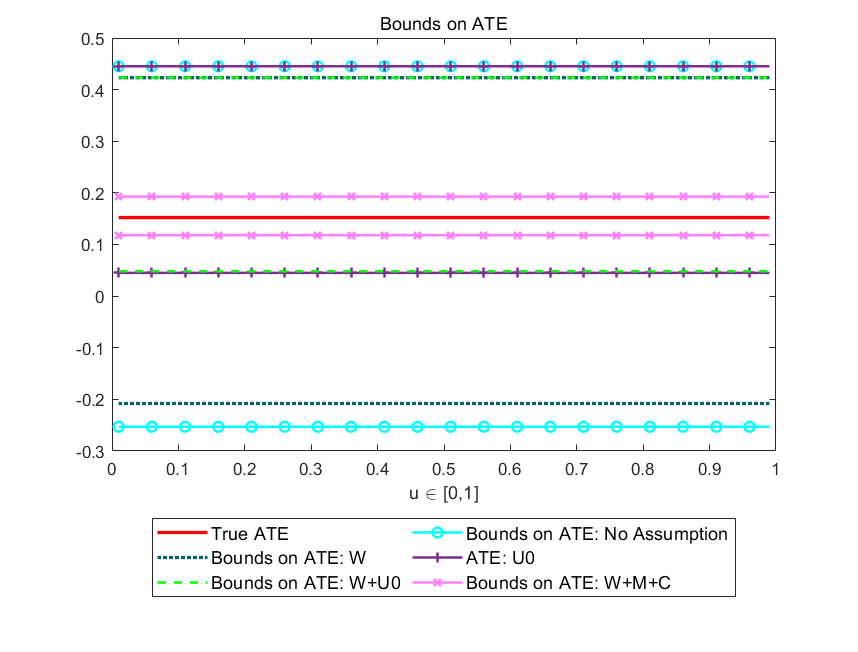}\caption{Bounds on the ATE under Different Assumptions}
\label{fig: ATE-1}
\end{figure}

Figure \ref{fig: ATE-2} presents the results under Assumption U$^{0}$,
versus under Assumption U$^{*}$ with existence of $W$. Under Assumption
U$^{*}$, the bounds become $[0.05,0.38]$. While their lower bounds
coincide, Assumption U$^{*}$ yields a lower upper bound compared
to Assumption U$^{0}$.

\begin{figure}
\centering{}\hspace{-0.7cm}\includegraphics[scale=0.5]{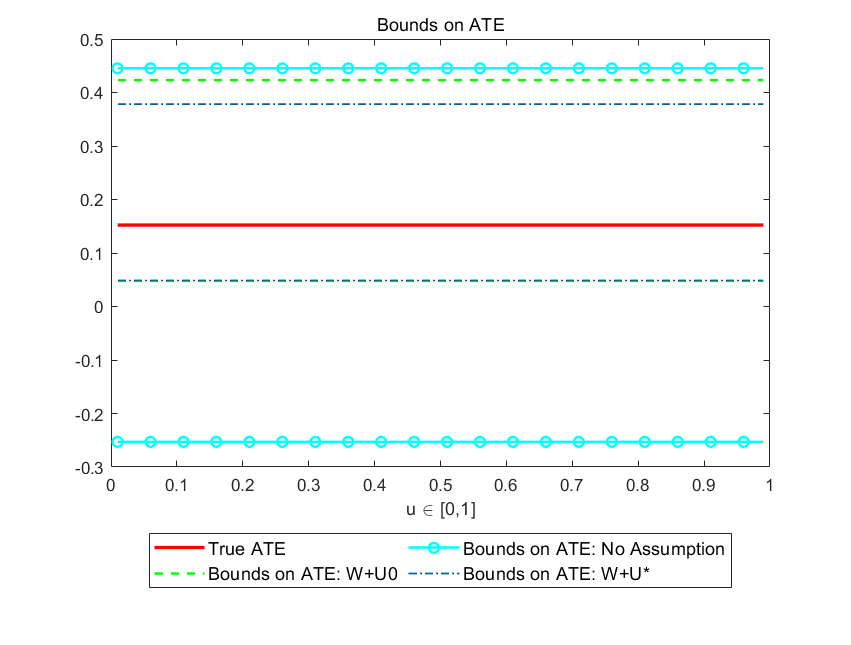}\caption{Bounds on the ATE under Different Assumptions}
\label{fig: ATE-2}
\end{figure}

\subsubsection{Generalized LATEs}

Next, we construct bounds on the generalized LATEs. Again, we focus
on binary $Y$. The original definition of the LATE is the ATE for
compliers (C). Researchers may also have interests in other local
treatment effects. We consider two other parameters---LATEs for always-takers
(AT) and never-takers (NT). Figure \ref{fig: LATEs-1} and \ref{fig: LATEs-2}
display the bounds on the LATE-AT, LATE-C, and LATE-NT under different
assumptions. This analysis is analogous to that with the ATE. Since
the covariate $X$ affects the decision of compliance, to avoid confusion
in the definition of the compliance groups, we instead establish bounds
on the LATEs conditional on $X$. We draw the conditional MTE functions
with solid red lines in both panels as a reference. 

The DGP implies constant MTE function, therefore, the LATE-AT, LATE-C
and LATE-NT are all equivalent to true ATE, equaling to $0.33,0.23,0.13$
and $0.20,0.11,0.08$, conditional on $X=0$ and $X=1$ respectively.
The feature that there exists no defiers in the DGP is known. When
there is no defier, the LATE-C is point identified, which has an analytical
expression of the two-stage least squares estimand. Therefore, even
when we add the tuning parameters\footnote{The tuning parameters are used to prevent infeasibility in the LP
due to the sampling error. The details are introduced in Section \ref{subsec:Inference}.}, the estimates remain very close to the true values throughout. And
when we do not need tuning parameters to adjust the numerical errors
or when the tuning parameters are very small, the linear programming
yields point estimates as shown in Figure \ref{fig: LATEs-1}. 
\begin{figure}
\centering{}\includegraphics[scale=0.3]{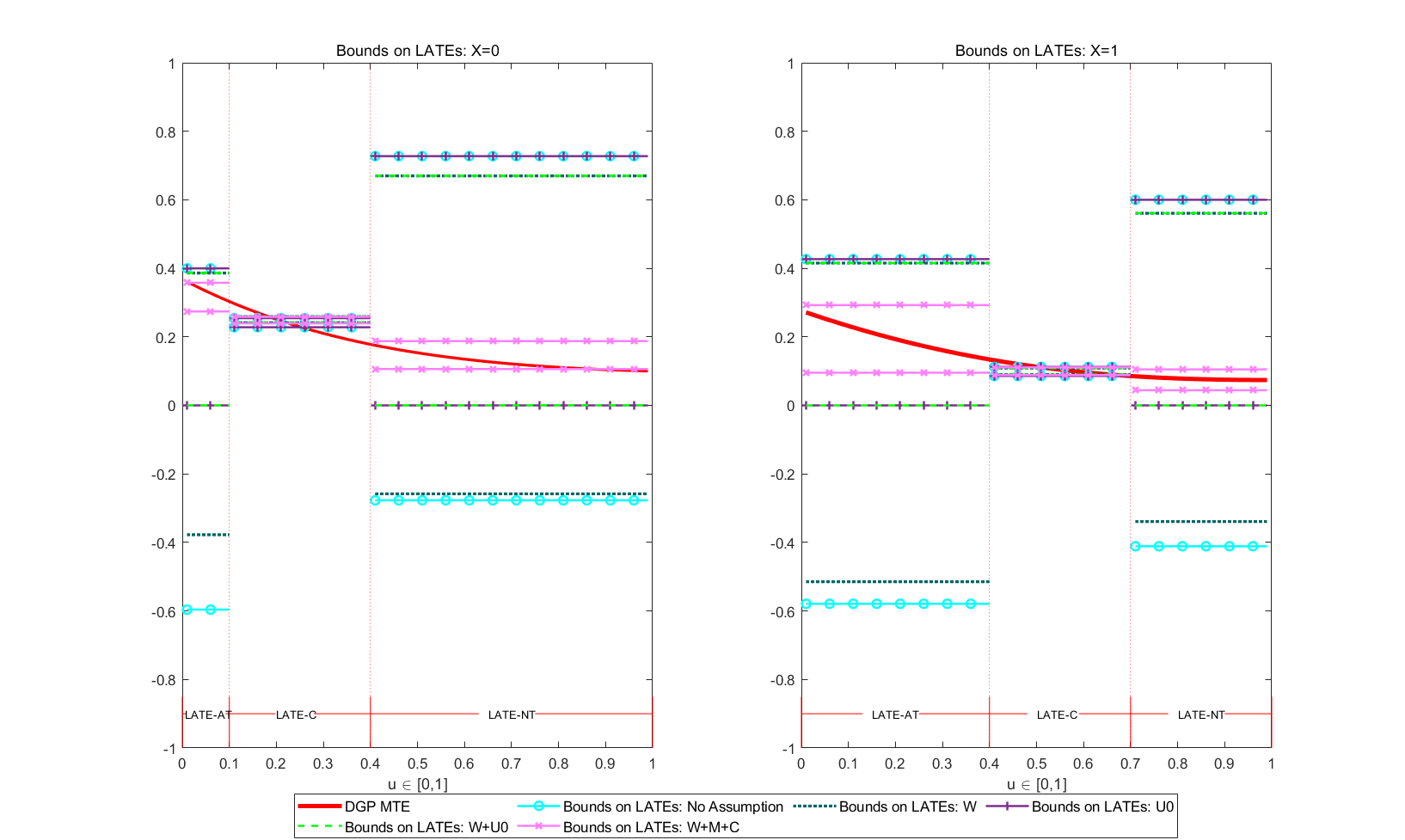}\caption{Bounds on the LATEs under Different Assumptions}
\label{fig: LATEs-1}
\end{figure}

For the LATE-AT and LATE-NT, as before, we first consider the worst-case
bounds where the existence of $W$ is ignored versus where $W$ is
taken into account. Without $W$, we get the bounds $[-0.60,0.40]$
and $[-0.28,0.73]$ on the LATE-AT and the LATE-NT conditional on
$X=0$, and $[-0.58,0.43]$ and $[-0.41,0.60]$ conditional on $X=1$;
with $W$, we get the bounds $[-0.38,0.39]$ and $[-0.26,0.67]$ on
the LATE-AT and the LATE-NT conditional on $X=0$, and $[-0.51,0.42]$
and $[-0.34,0.56]$ conditional on $X=1$.\emph{ }Incorporating information
from $W$ helps improve both the upper and lower bounds. Imposing
Assumption U$^{0}$ without $W$ helps to identify the sign by raising
up the lower bound to $0$, for the LATEs; when considering the case
with $W$, the pattern remains the same, but with a lower upper bound
and a slightly improved lower bound above $0$. We then apply M and
C with $W$ taking into account. The bounds on the LATE-AT and the
LATE-NT turn to $[0.27,0.36]$ and $[0.11,0.19]$ conditional on $X=0$,
and $[0.10,0.29]$ and $[0.04,0.11]$ conditional on $X=1$.

From Figure \ref{fig: LATEs-2}, under the Assumption U$^{*}$, the
bounds shrink to $[0,0.37]$ and $[0,0.53]$ conditional on $X=0$,
and $[0,0.40]$ and $[0,0.55]$ conditional on $X=1$, comparing with
the bounds under Assumption U$^{0}$. The improvement is from complete
order of 16 mapping types we have in this environment and is most
significant for the never-taker LATE upper bound.

\begin{figure}
\centering{}\includegraphics[scale=0.38]{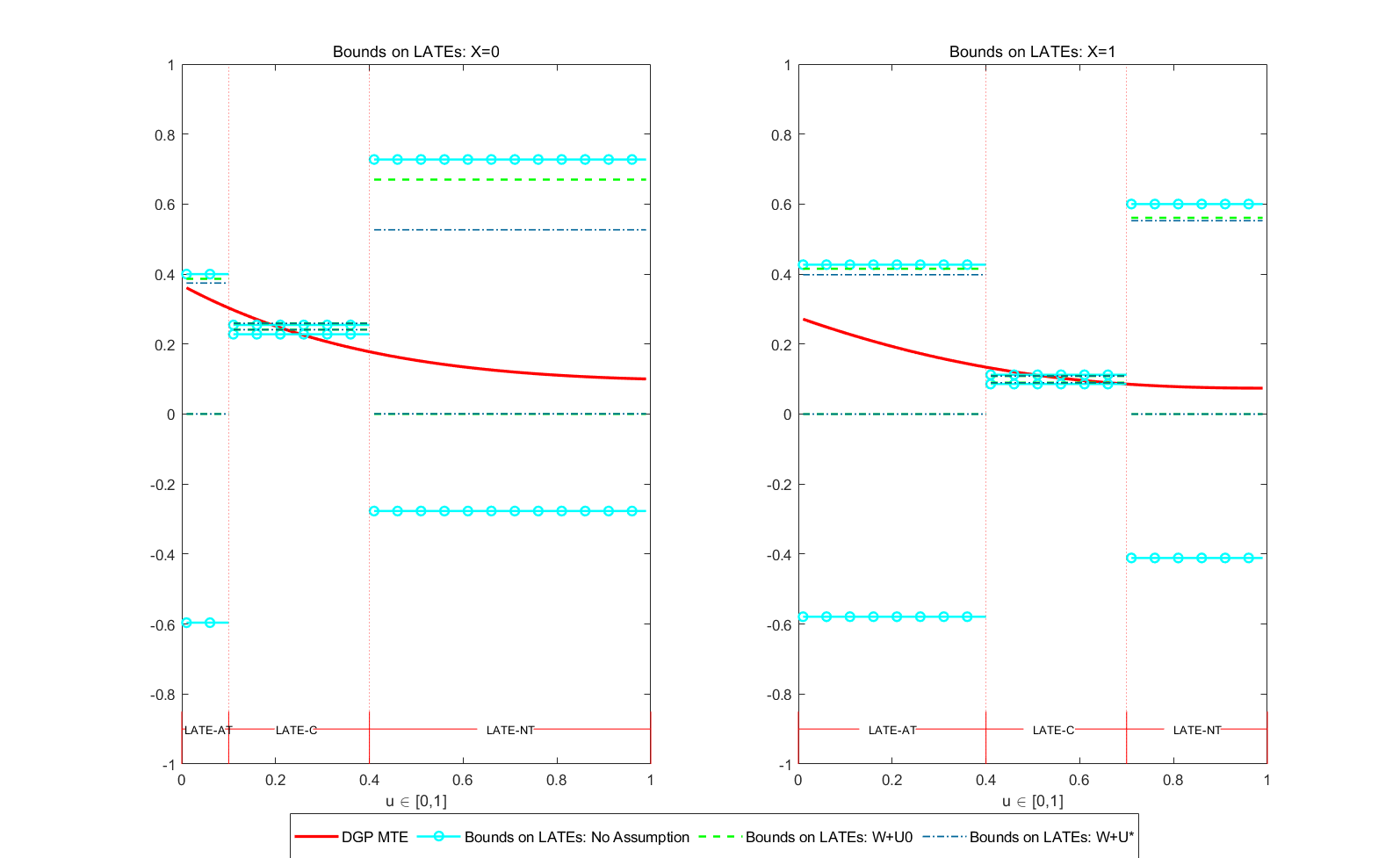}\caption{Bounds on the LATEs under Different Assumptions}
\label{fig: LATEs-2}
\end{figure}

\subsection{The Choice of $K$}

As a tuning parameter in the LP, we need to choose the order of Bernstein
polynomials, $K$. In general, $K$ should be chosen based on the
sample size and the smoothness of the function to be approximated,
in our case, $q(\cdot)$. The choice of the sieve dimension or more
generally, regularization parameters, is a difficult question (\citet{chen2007large})
and developing data-driven procedure is a subject of on-going research
in various nonparametric contexts of point identification; see, e.g.,
\citet{chen2018optimal} and \citet{han2020nonparametric}. In this
partial identification setup, we propose the following heuristic and
conservative approach, which is in spirit consistent with the very
motivation of partial identification.

First, we do not want to claim any prior knowledge about the smoothness
of $q(\cdot)$ because it is the distribution of a latent variable.
Because $K$ determines the dimension of unknown parameter $\theta$
in the linear programming, the width of the bounds tends to increase
with $K$. At the same time, the computational burden increases with
$K$. One interesting numerical finding is that, when $K$ is sufficiently
large, the increase of the width slows down and the bounds become
stable. This suggests that we may be able to conservatively choose
$K$ that acknowledges our lack of knowledge of the smoothness but,
at the same time, produces a reasonable computational task for the
linear programming. 
\begin{figure}
\centering{}\includegraphics[scale=0.26]{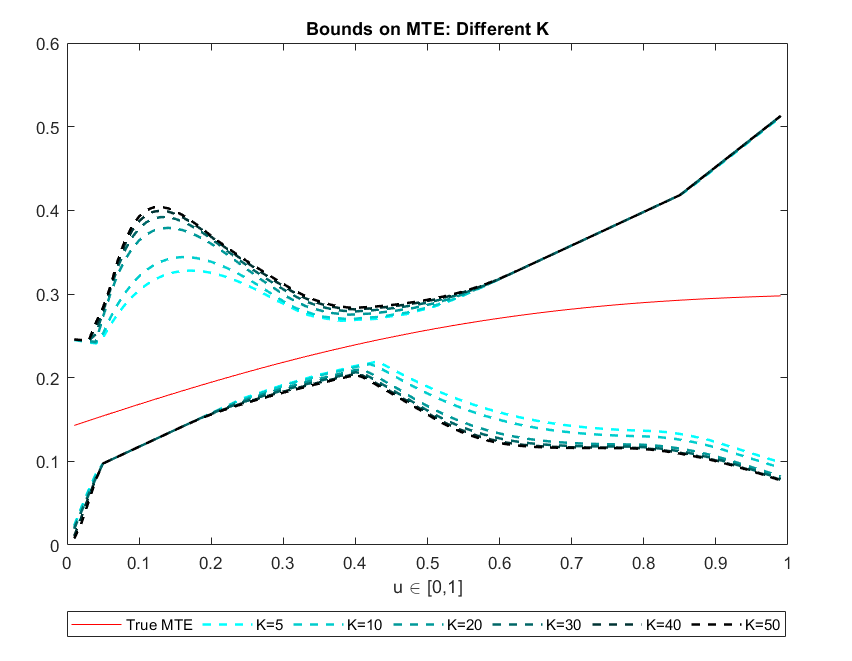}\includegraphics[scale=0.3]{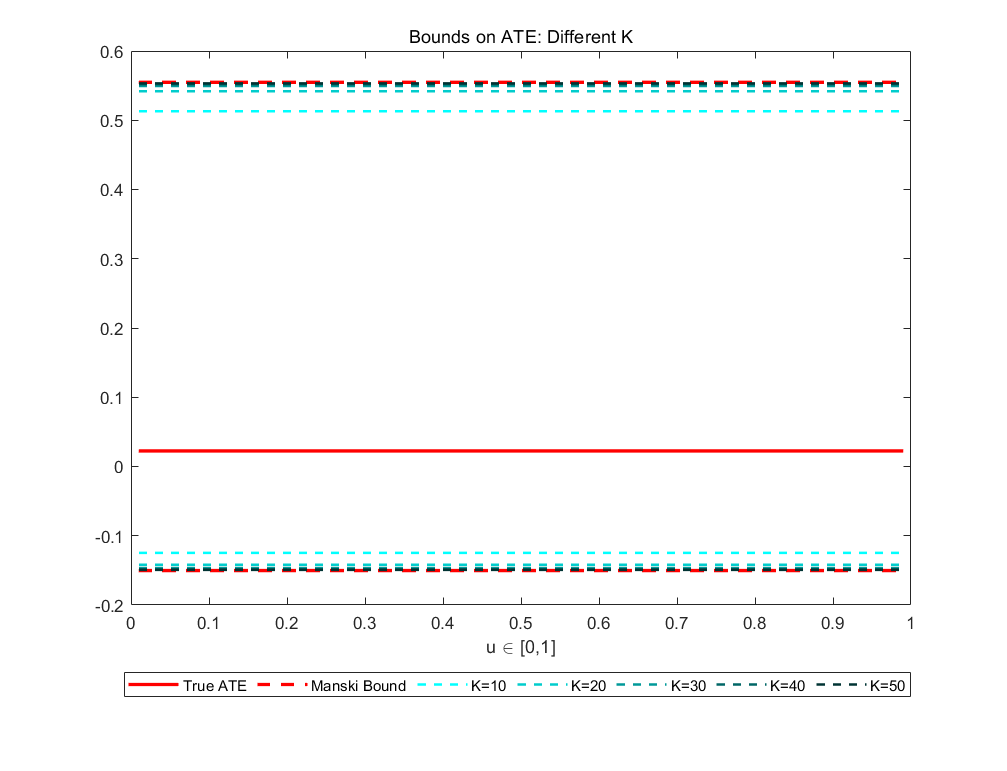}\caption{Bounds on MTE and ATE with Different $K$}
\label{fig: MTE diffK}
\end{figure}

To illustrate this point, we first consider the conditional MTE as
the target parameter and show how its bounds change as $K$ increases.
We consider the MTE because it is a fundamental parameter that generates
other target parameters, and hence, it is important to understand
the sensitivity of its bounds to $K$. The left panel of Figure \ref{fig: MTE diffK}
shows the evolution of the bounds on the MTE as $K$ grows. We use
a DGP similar to the one described earlier. When $K=5$, the bounds
are narrow. Although it may be tempting to choose this value of $K$,
this attempt should be avoided as it may be subject to the misspecification
of the true smoothness. When $K$ increases beyond $30$, the bounds
start to converge and become stable. We choose $K=50$, and this is
the choice we made in our previous numerical exercises. 

To compare this converging pattern with a known benchmark, in the
right panel of Figure \ref{fig: MTE diffK}, we depict the identified
set for the ATE relative to Manski's analytical bounds (\citet{manski1990nonparametric}).
We observe the identified set approaches to Manski's bound as $K$
increases and it almost overlaps with Manski's bounds when $K$ is
around 50.\footnote{Note that with large $K$, some LP solvers would ignore coefficients
with negligible (e.g., $10^{-13}$) values that cause a large range
of magnitude in the coefficient matrix. It may be recommended to simultaneously
rescale a column and a row to achieve a smaller range in the coefficients;
see Section \ref{subsec:Rescaling-of-Linear} for details. We found
that when $K=50$, the bounds from the rescaled LP and original LP
are very close to each other (e.g., for the ATE bounds without extra
assumptions, the difference is up to 0.01).}

As discussed in Section \ref{subsec:Point-wise-and-Uniform} in the
Appendix, it is worth mentioning that the bounds on the MTE are pointwise
sharp but \textit{not} uniformly sharp. The graph for the MTE bounds
are drawn by calculating the pointwise sharp bounds on MTE at each
point of $u$ (after properly discretizing it) and then connecting
them. Therefore, these bounds should \textit{not} be viewed as uniformly
sharp bounds. Nonetheless, this graph is still useful for the purpose
of our illustration. Given the current DGP, we find that there are
no uniformly sharp bounds for the MTE.

\section{Empirical Application\label{sec:Empirical-Application}}

It is widely recognized in the empirical literature that health insurance
coverage can be an essential factor for the utilization of medical
services (\citet{hurd1997medical}; \citet{dunlop2002gender}; \citet{finkelstein2012oregon,taubman2014medicaid}).
Prior studies on this topic typically make use of parametric econometric
models for the analysis. In their application, \citet{han2019estimation}
relax this common approach by introducing a semiparametric bivariate
probit model to measure the average effect of insurance coverage on
patients' medical visits. By applying our theoretical framework of
partial identification, we further relax the parametric and semiparametric
structures used in these studies. More importantly, we try to understand
how much we can learn about the effect of insurance that is utilized
through various counterfactual policies by learning the effect of
different compliance groups.

We use the 2010 wave of the Medical Expenditure Panel Survey (MEPS)
and focus on all the medical visits in January 2010. The sample is
restricted to contain individuals aged between 25 and 64 and exclude
those who had any kind of federal or state insurance in 2010. The
outcome $Y$ is a binary variable indicating whether or not an individual
has visited a doctor's office; the treatment $D$ is whether an individual
has private insurance. We choose whether a firm has multiple locations
as the binary instrument $Z$. This IV reflects the size of the firm,
and larger firms are more likely to provide fringe benefits, including
health insurance. On the other hand, the number of branches of a firm
does not directly affect employee decisions about medical visits.
To justify the IV, self-employed individuals are excluded. For potentially
endogenous covariates $X$, we include the age being 45 and older,
gender, income above median. Lastly, for an exogenous covariate $W$,
we use the percentage of workers who are provided with paid sick leave
benefits within each industry. Following \citet{han2019estimation},
we assume $W$ satisfies Assumptions SEL$_{W}$(b) and EX$_{W}$(b),
as $X$ is controlled. The rationale is the following: First, we assume
$W$ is exogenous (conditional on covariates) arguably because it
is determined by the employer or is in accordance with the local legislation
and thus is not correlated with individual employee's preferences.
However, due to its nature, it can influence the employee's health-related
decisions, such as enrolling in an insurance program ($D$) or utilizing
medical services ($Y$).\footnote{Since the relevance of $W$ to $D$ is slightly less plausible than
to $Y$, we test whether the propensity score is a not a function
of $W$ but cannot reject the null. Therefore, we decided to use $W$
as a common exogenous variable than a reverse IV.} We construct a categorical variable such that $W=0$ for less than
median value of the pay sick leave provision, $W=1$ for above the
median.

\begin{table}[H]
\caption{\label{tab:Summary}Summary Statistics}

\bigskip{}

\begin{centering}
{\small{}}%
\begin{tabular}{cccccc}
\hline 
 & {\small{}Variables} & {\small{}Mean} & {\small{}S.D} & {\small{}Min} & {\small{}Max}\tabularnewline
\hline 
{\small{}$Y$} & {\small{}Whether or not visit doctors} & {\small{}0.18} & {\small{}0.39} & {\small{}0} & {\small{}1}\tabularnewline
\hline 
{\small{}$D$} & {\small{}Whether or not have insurance} & {\small{}0.66} & {\small{}0.47} & {\small{}0} & {\small{}1}\tabularnewline
\hline 
{\small{}$Z$} & {\small{}Firm has multiple locations} & {\small{}0.68} & {\small{}0.47} & {\small{}0} & {\small{}1}\tabularnewline
\hline 
\multirow{3}{*}{{\small{}$X$}} & {\small{}Age above 45} & {\small{}0.41} & {\small{}0.49} & {\small{}0} & {\small{}1}\tabularnewline
 & {\small{}Gender} & {\small{}0.50} & {\small{}0.50} & {\small{}0} & {\small{}1}\tabularnewline
 & {\small{}Income above median} & {\small{}0.50} & {\small{}0.50} & {\small{}0} & {\small{}1}\tabularnewline
\hline 
$W$ & Pay sick leave provision & 0.49 & 0.50 & 0 & 1\tabularnewline
\hline 
\multicolumn{6}{c}{{\small{}Number of observations = 7,555}}\tabularnewline
\hline 
\end{tabular}{\small\par}
\par\end{centering}
\bigskip{}
\end{table}

First, as a benchmark, we report that the LATE-C estimate calculated
via our linear programming approach is equal to a singleton of 0.05,
which is in fact identical to the 2SLS estimate we separately calculate.
In what follows, we extrapolate this LATE beyond the complier group
to the ATE. The presence of covariates reduces the effective sample
size and thus leads to larger sampling errors in estimating the $p$
of the $\infty$-LP \eqref{eq:upper}--\eqref{eq:constr}. This may
create inconsistencies in the set of equality constraints \eqref{eq:constr},
resulting in no feasible solution. This is in fact what happens in
this application. To resolve this estimation problem, we introduce
a slackness parameter $\kappa$ and modify \eqref{eq:constr} so that,
with some slackness, it satisfies
\begin{align}
||\hat{R}_{0}q-\hat{p}|| & \leq\inf_{q\in\mathcal{Q}}||\hat{R}_{0}q-\hat{p}||+\kappa,\label{eq:LP3_eta}
\end{align}
where $\hat{R}_{0}$ and $\hat{p}$ are estimates of $R_{0}$ and
$p$. A similarly modified constraint can then be followed in the
finite-dimensional LP after approximation, as well as by combining
\eqref{eq:LP4}--\eqref{eq:LP5}. The appropriate value of $\kappa$
should depend on the sample size, the dimension of covariates, and
the dimension of the unknown parameter $\theta$. To explain the latter,
as $K$ increases, the dimension of $\theta$ (i.e., unknowns) increases,
while the number of constraints (i.e., simultaneous equations for
the unknowns) is fixed. Therefore, as $K$ increases, the chance that
the LP does not have a feasible solution would decrease. Based on
the method discussed in the previous section, we set $K=50$ in this
application.

We calculate worst-case bounds on the ATE, as well as bounds after
imposing Assumptions U$^{0}$ and M and after using $W$. Under Assumption
U$^{0}$, the data rules out the possibility that $Y(0)>Y(1)$, indicating
that individuals with private insurance are more likely to visit a
doctor. Assumption M imposes that the MTR function is weakly increasing
in $U=u$. Usually, $U$ is interpreted as the latent cost of obtaining
treatment. \citet{kowalski2021reconciling} interpreted $U$ as eligibility
in a similar setup for Medicaid insurance. The eligibility for Medicaid
is related to income level and age. In our setup, because the treatment
is having the private insurance, we interpret the eligibility as the
health status, which is reflected in the premium. Interpreting $U$
as a latent cost (e.g., premium) of getting private insurance, Assumption
M states that the chance of making a medical visit (with or without
insurance) increases for those with higher cost. This is a reasonable
assumption given that sicker individuals typically face higher insurance
costs and also visit doctors more often. We choose the slackness parameter
$\kappa$ to be consistently 0.01 under all assumptions for a comparable
comparison.
\begin{figure}
\centering{}\includegraphics[scale=0.7]{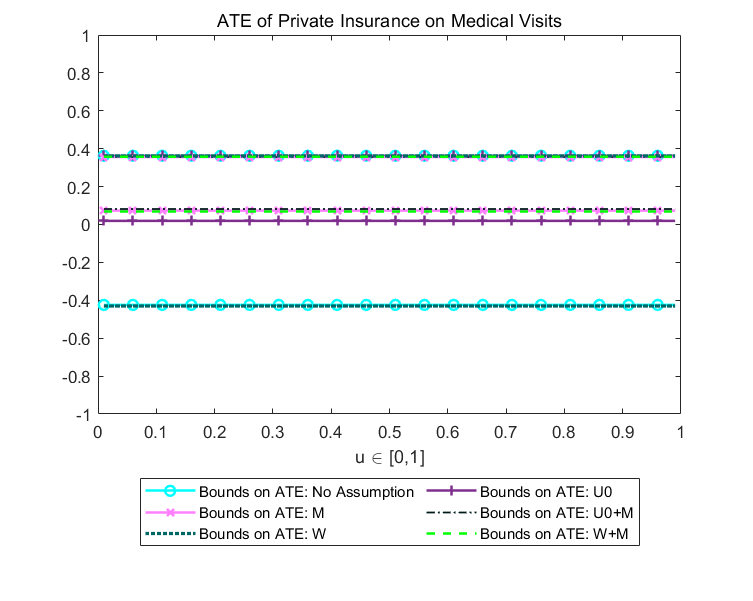}\caption{Bounds on the ATE of Private Insurance on Medical Visits}
\label{fig: application ATE}
\end{figure}

The bounds on the ATE are shown in Figure \ref{fig: application ATE}.
The worst-case bound on the ATE equals $[-0.42,0.36]$. The bounds
become $[0.02,0.36]$ under Assumption U$^{0}$ and $[0.07,0.36]$
under Assumption M. It is interesting to note that the identifying
power of the uniformity and the shape restriction is similar in this
example. When both Assumption U$^{0}$ and Assumption M are imposed,
the bounds are further tightened to $[0.08,0.36]$, although not substantially,
indicating that the two assumptions are complementary. However, we
do not see gain from incorporating $W$ in this case.

\begin{figure}
\centering{}\includegraphics[scale=0.7]{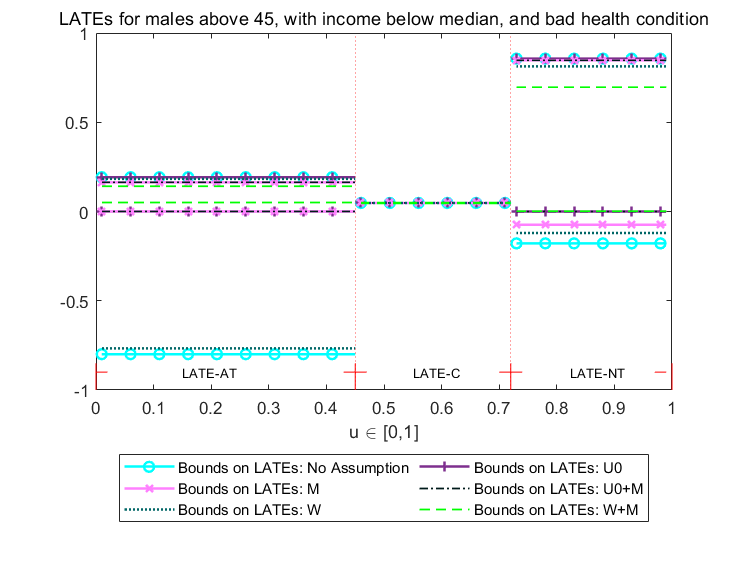}\caption{Bounds on the generalized LATEs of Private Insurance on Medical Visits
for Male Above 45, with Income Below Median}
\label{fig: application LATEs}
\end{figure}
\begin{table}
\caption{\label{tab:pe_ates-3234}Estimated Bounds on generalized LATEs for
Males Above 45, with Income Below Median}

\bigskip{}

\begin{centering}
\begin{tabular}{ccccccc}
\hline 
 & {\footnotesize{}No Assumption} & {\footnotesize{}Assumption U$^{0}$} & {\footnotesize{}M} & {\footnotesize{}Assumption U$^{0}$+M} & {\footnotesize{}W} & {\footnotesize{}M+W}\tabularnewline
\hline 
{\footnotesize{}LATE-AT} & {\footnotesize{}{[}-0.80,0.19{]}} & {\footnotesize{}{[}0,0.19{]}} & {\footnotesize{}{[}0,0.16{]}} & {\footnotesize{}{[}0,0.16{]}} & {\footnotesize{}{[}-0.77,0.18{]}} & {\footnotesize{}{[}0.05,0.14{]}}\tabularnewline
{\footnotesize{}LATE-C} & {\footnotesize{}0.05} & {\footnotesize{}0.05} & {\footnotesize{}0.05} & {\footnotesize{}0.05} & {\footnotesize{}0.05} & {\footnotesize{}0.05}\tabularnewline
{\footnotesize{}LATE-NT} & {\footnotesize{}{[}-0.18,0.86{]}} & {\footnotesize{}{[}0,0.86{]}} & {\footnotesize{}{[}-0.07,0.85{]}} & {\footnotesize{}{[}0,0.85{]}} & {\footnotesize{}{[}-0.12,0.81{]}} & {\footnotesize{}{[}0,0.70{]}}\tabularnewline
\hline 
{\footnotesize{}Slackness $\kappa$} & {\footnotesize{}0.01} & {\footnotesize{}0.01} & {\footnotesize{}0.01} & {\footnotesize{}0.01} & {\footnotesize{}0.01} & {\footnotesize{}0.01}\tabularnewline
\hline 
\multicolumn{7}{c}{{\footnotesize{}Number of observations = 7,555}}\tabularnewline
\hline 
\end{tabular}
\par\end{centering}
\bigskip{}
\end{table}

Next, we consider the always-taker, complier, and never-taker LATEs.
We consider these generalized LATEs conditional on $X=x$. Specifically,
we focus on the treatment effects for males above age 45, with income
below the median. The results are shown in Table \ref{tab:pe_ates-3234}
and depicted in Figure \ref{fig: application LATEs}. The LATE-C is
analytically calculated via TSLS.\footnote{When the alternative constraint \eqref{eq:LP3_eta} is used with the
slackness parameter, the LATE-C is no longer a singleton.} For the LATE-AT and LATE-NT, Assumption U$^{0}$ identifies the sign
of the effects, and Assumption M nearly identifies it. Using the variation
in $W$ mostly improves the bounds compared to the ones without it.\footnote{Most of the extra assumptions we impose help to determine the direction
of treatment effect, i.e., to raise the lower bound if the treatment
effect is positive. Therefore, improvements on LATE-NT are smaller
than LATE-AT after imposing extra assumptions, since the evidence
of positive treatment effect is relatively strong even with the worst-case
bounds of LATE-NT. } From the results we can conclude that, for a range of identifying
assumptions, the private insurance tends to have a large effect on
medical visits for never-takers, that is, people who face higher insurance
cost. For example, for all the cases, the upper bound on the effect
(i.e., the most optimistic scenario) is much larger for the never-takers
than the always-takers. Under Assumption W or no assumption, the lower
bounds (i.e., the most pessimistic scenario) shows a similar pattern.
This suggests a policy implication that lowering the cost of private
insurance may be important, because high costs may hinder those with
the most need from receiving enough medical services.

\begin{appendix}

\section{Examples of the Target Parameters\label{sec:Examples-of-the}}

Table \ref{tab:GLATE} contains the list of target parameters.
\begin{table}
\hspace{-1cm}%
\begin{tabular}{cccc}
\hline 
\multirow{2}{*}{{\scriptsize{}Target Parameters}} & \multirow{2}{*}{{\scriptsize{}Expressions}} & \multirow{2}{*}{{\scriptsize{}Ranges of $u$}} & {\scriptsize{}Weights}\tabularnewline
\cline{4-4} 
 &  &  & {\scriptsize{}$w_{d}(u,z,x)$}\tabularnewline
\hline 
{\scriptsize{}Average Treatment Effect} & \multirow{2}{*}{{\scriptsize{}$E[Y(1)-Y(0)]$}} & \multirow{2}{*}{{\scriptsize{}$[0,1]$}} & \multirow{2}{*}{{\scriptsize{}$1$}}\tabularnewline
{\scriptsize{}(ATE)} &  &  & \tabularnewline
{\scriptsize{}LATE for Compliers} & \multirow{2}{*}{{\scriptsize{}$E\left\{ Y(1)-Y(0)|u\in\left[P(z_{0},x),P(z_{1},x)\right]\right\} $}} & \multirow{2}{*}{{\scriptsize{}$[P(z_{0},x),P(z_{1},x)]$}} & \multirow{2}{*}{{\scriptsize{}$\frac{1\left(u\in[P(z_{0},x),P(z_{1},x)]\right)}{P(z_{1},x)-P(z_{0},x)}$}}\tabularnewline
{\scriptsize{}(LATE-C) given $x\in\mathcal{X}$} &  &  & \tabularnewline
{\scriptsize{}LATE for Always-Takers} & \multirow{2}{*}{{\scriptsize{}$E\left\{ Y(1)-Y(0)|u\in\left[0,P(z_{0},x)\right]\right\} $}} & \multirow{2}{*}{{\scriptsize{}$[0,P(z_{0},x)]$}} & \multirow{2}{*}{{\scriptsize{}$\frac{1\left(u\in[0,P(z_{0},x)]\right)}{P(z_{0},x)}$}}\tabularnewline
{\scriptsize{}(LATE-AT) given $x\in\mathcal{X}$} &  &  & \tabularnewline
\multicolumn{1}{c}{{\scriptsize{}LATE for Never-Takers}} & \multirow{2}{*}{{\scriptsize{}$E\left\{ Y(1)-Y(0)|u\in\left[P(z_{1},x),1\right]\right\} $}} & \multirow{2}{*}{{\scriptsize{}$[P(z_{1},x),1]$}} & \multirow{2}{*}{{\scriptsize{}$\frac{1\left(u\in[P(z_{1},x),1]\right)}{1-P(z_{1},x)}$}}\tabularnewline
{\scriptsize{}(LATE-NT) given $x\in\mathcal{X}$} &  &  & \tabularnewline
{\scriptsize{}LATE for $[\underline{u},\overline{u}]$} & {\scriptsize{}$E[Y(1)-Y(0)|u\in[\underline{u},\overline{u}]]$} & {\scriptsize{}$\left[P(z_{0},x),P(z_{1},x)\right]$} & {\scriptsize{}$\frac{1\left(u\in[\underline{u},\overline{u}]\right)}{\overline{u}-\underline{u}}$}\tabularnewline
{\scriptsize{}Marginal Treatment Effect} & \multirow{2}{*}{{\scriptsize{}$E[Y(1)-Y(0)|u']$}} & \multirow{2}{*}{{\scriptsize{}$u'$}} & \multirow{2}{*}{{\scriptsize{}$1(u=u')$}}\tabularnewline
{\scriptsize{}(MTE)$^{*}$} &  &  & \tabularnewline
{\scriptsize{}Policy Relevant Treatment Effect} & \multirow{2}{*}{{\scriptsize{}$\frac{E(Y')-E(Y)}{E(D')-E(D)}$}} & \multirow{2}{*}{{\scriptsize{}$[0,1]$}} & \multirow{2}{*}{{\scriptsize{}$\frac{\Pr\left[u\leq P'(z')\right]-\Pr\left[u\leq P'(z)\right]}{E\left[P(Z')\right]-E\left[P(Z)\right]}$}}\tabularnewline
{\scriptsize{}(PRTE) for a new policy $(P',Z')$} &  &  & \tabularnewline
\hline 
\end{tabular}

\smallskip{}

{\footnotesize{}{*} The MTE uses the Dirac measure at $u'$, while
the other target parameters use the Lebesgue measure on $[0,1]$.}{\footnotesize\par}

\caption{Examples of the Target Parameters}

\label{tab:GLATE}
\end{table}

\section{Further Discussions\label{sec:Discussions}}

\subsection{Rescaling of Linear Programs\label{subsec:Rescaling-of-Linear}}

Let $B\theta=p$ represents the constraints \eqref{eq:constr2} in
the LP \eqref{eq:upper2}--\eqref{eq:constr2}. In practice, the
matrix $B$ has the number of columns that grows with $K$. An important
consequence is that, when $K$ is large, the entries of $B$ (i.e.,
constraint coefficients) take values of very different orders of magnitude;
some coefficients are too small and some are too large. In this case,
many optimization algorithms do not work properly because, to address
the issue, they arbitrarily drop coefficients with small values (e.g.,
GUROBI drops coefficients that are less than $10^{-13}$). This may
arbitrarily change the bounds we obtain. In this section, we propose
a rescaling method to address this problem.

To better understand the rescaling strategy, we first express the
original LP \eqref{eq:upper2}--\eqref{eq:constr2} in terms of matrices:
\[
\max_{\theta\in\Theta_{K}}A\theta
\]
subject to 
\[
B\theta=p.
\]
Here, $\theta$ is defined as a vector of unknown parameters $\{\theta_{k}^{e,x}\}_{k,e,x}$
and $\Theta_{K}$ is redefined as 
\[
\Theta_{K}\equiv\left\{ \theta:M\theta=\boldsymbol{1},\theta\geq\boldsymbol{0}\right\} ,
\]
where $M$ is a weight matrix corresponding to $\sum_{e\in\mathcal{E}}\theta_{k}^{e,x}=1$
$\forall(k,x)$, $\boldsymbol{1}$ is a column vector of ones, and
$\boldsymbol{0}$ is a zero vector.

Because the Bernstein polynomials are only used in generating the
coefficients in the equality restrictions from the data, we focus
on rescaling of this constraint. Suppose the dimension of $B$ is
$m\times n$ with $m<n$.\footnote{$m$ is determined by the dimension of $p$, which is determined by
the cardinality of $\mathcal{Y}$, $\mathcal{Z}$ and $\mathcal{X}$,
and $n$ is determined by the order of polynomials, $K$, we choose
in sieve approximation. Usually $K$ (and thus $n$) is set to be
a large number to guarantee the accuracy of sieve approximation, and
it usually is larger than $m$. When $m=n$, theoretically, we achieve
a unique solution, but in practice, the numerical error may cause
infeasibility.} First, we show that $B$ has full rank of $m$ in our setting. To
prove this, we need to understand the structure of $B$. The number
of columns of $B$ is determined by the size of $\mathcal{E}$ and
the order of polynomials $K$. The number of rows of $B$ is determined
by the dimension of $p$. We consider an example with binary $(Y,Z,W)$
for illustration. Since $(D,W)$ are binary, $\left|\mathcal{E}\right|=16$
and $B$ takes the form of the following:
\[
\begin{array}{ccccccccccccccccc}
e= & 1 & 2 & 3 & 4 & 5 & 6 & 7 & 8 & 9 & 10 & 11 & 12 & 13 & 14 & 15 & 16\\
Z=0,D=0,W=0 &  & \text{\textifsymbol[ifgeo]{96}} &  & \text{\textifsymbol[ifgeo]{96}} &  & \text{\textifsymbol[ifgeo]{96}} &  & \text{\textifsymbol[ifgeo]{96}} &  & \text{\textifsymbol[ifgeo]{96}} &  & \text{\textifsymbol[ifgeo]{96}} &  & \text{\textifsymbol[ifgeo]{96}} &  & \text{\textifsymbol[ifgeo]{96}}\\
Z=1,D=0,W=0 &  & \text{\textifsymbol[ifgeo]{96}} &  & \text{\textifsymbol[ifgeo]{96}} &  & \text{\textifsymbol[ifgeo]{96}} &  & \text{\textifsymbol[ifgeo]{96}} &  & \text{\textifsymbol[ifgeo]{96}} &  & \text{\textifsymbol[ifgeo]{96}} &  & \text{\textifsymbol[ifgeo]{96}} &  & \text{\textifsymbol[ifgeo]{96}}\\
Z=0,D=1,W=0 &  &  &  &  & \text{\textifsymbol[ifgeo]{96}} & \text{\textifsymbol[ifgeo]{96}} & \text{\textifsymbol[ifgeo]{96}} & \text{\textifsymbol[ifgeo]{96}} &  &  &  &  & \text{\textifsymbol[ifgeo]{96}} & \text{\textifsymbol[ifgeo]{96}} & \text{\textifsymbol[ifgeo]{96}} & \text{\textifsymbol[ifgeo]{96}}\\
Z=1,D=1,W=0 &  &  &  &  & \text{\textifsymbol[ifgeo]{96}} & \text{\textifsymbol[ifgeo]{96}} & \text{\textifsymbol[ifgeo]{96}} & \text{\textifsymbol[ifgeo]{96}} &  &  &  &  & \text{\textifsymbol[ifgeo]{96}} & \text{\textifsymbol[ifgeo]{96}} & \text{\textifsymbol[ifgeo]{96}} & \text{\textifsymbol[ifgeo]{96}}\\
Z=0,D=0,W=1 &  &  & \text{\textifsymbol[ifgeo]{96}} & \text{\textifsymbol[ifgeo]{96}} &  &  & \text{\textifsymbol[ifgeo]{96}} & \text{\textifsymbol[ifgeo]{96}} &  &  & \text{\textifsymbol[ifgeo]{96}} & \text{\textifsymbol[ifgeo]{96}} &  &  & \text{\textifsymbol[ifgeo]{96}} & \text{\textifsymbol[ifgeo]{96}}\\
Z=1,D=0,W=1 &  &  & \text{\textifsymbol[ifgeo]{96}} & \text{\textifsymbol[ifgeo]{96}} &  &  & \text{\textifsymbol[ifgeo]{96}} & \text{\textifsymbol[ifgeo]{96}} &  &  & \text{\textifsymbol[ifgeo]{96}} & \text{\textifsymbol[ifgeo]{96}} &  &  & \text{\textifsymbol[ifgeo]{96}} & \text{\textifsymbol[ifgeo]{96}}\\
Z=0,D=1,W=1 &  &  &  &  &  &  &  &  & \text{\textifsymbol[ifgeo]{96}} & \text{\textifsymbol[ifgeo]{96}} & \text{\textifsymbol[ifgeo]{96}} & \text{\textifsymbol[ifgeo]{96}} & \text{\textifsymbol[ifgeo]{96}} & \text{\textifsymbol[ifgeo]{96}} & \text{\textifsymbol[ifgeo]{96}} & \text{\textifsymbol[ifgeo]{96}}\\
Z=1,D=1,W=1 &  &  &  &  &  &  &  &  & \text{\textifsymbol[ifgeo]{96}} & \text{\textifsymbol[ifgeo]{96}} & \text{\textifsymbol[ifgeo]{96}} & \text{\textifsymbol[ifgeo]{96}} & \text{\textifsymbol[ifgeo]{96}} & \text{\textifsymbol[ifgeo]{96}} & \text{\textifsymbol[ifgeo]{96}} & \text{\textifsymbol[ifgeo]{96}}
\end{array}
\]
The square represents a vector of coefficients corresponding to $\theta$'s
used in approximating the mapping types, and the blank represents
a zero vector. By construction, each entry in matrix $B$ is equivalent
to $\int_{\mathcal{U}_{z,x}^{d}}b_{k,K}(u)du$ such that the product
of $B$ and $\theta$ is equal to the data distribution. From the
matrix form above, we can guarantee that $B$ has full row rank if,
for given $(d,w)$, the row representing $Z=0$ cannot be a constant
multiplication of the row representing $Z=1$. 

\begin{lemma}\label{lem:full_rank_B}Suppose $Z\in\{z_{1},z_{2}\}$
is a binary IV. Assume that $P(z_{1}),P(z_{2})\in(0,1)$ and $P(z_{1})\ne P(z_{2})$.
For $k=0,1,...,K$, define $f(k)=\frac{\int_{0}^{P(z_{1})}b_{k,K}(u)du}{\int_{0}^{P(z_{2})}b_{k,K}(u)du}$.
Then, $f(k)$ is not a constant function, and thus $B$ has full row
rank.\end{lemma}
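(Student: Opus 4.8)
The plan is to argue by contradiction and to handle all $K+1$ equations $f(0)=\cdots=f(K)$ \emph{simultaneously} through a single generating polynomial, rather than comparing the incomplete-Beta integrals $\int_0^{P(z_i)}b_{k,K}$ one index at a time. Write $p\equiv P(z_1)$ and $q\equiv P(z_2)$, and recall $p,q\in(0,1)$ with $p\ne q$ and $K\ge 1$ (so that the domain $\{0,\dots,K\}$ of $f$ has at least two points). Since the factor $\binom{K}{k}$ cancels between numerator and denominator, I would set $I_k(t)\equiv\int_0^t u^k(1-u)^{K-k}\,du$, which is strictly positive for $t\in(0,1)$, so that $f(k)=I_k(p)/I_k(q)$ is well defined. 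The contradiction hypothesis is that $f$ is constant, say $f(k)=c$ for all $k$; equivalently $I_k(p)=c\,I_k(q)$ for every $k\in\{0,\dots,K\}$.

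First I would bundle the family $\{I_k(t)\}_{k=0}^{K}$ into the binomially weighted generating function
\[
\Phi_t(s)\equiv\sum_{k=0}^{K}\binom{K}{k}s^{k}I_k(t).
\]
Interchanging the finite sum with the integral and applying the binomial theorem to $\sum_{k}\binom{K}{k}(su)^{k}(1-u)^{K-k}=(1+(s-1)u)^{K}$ collapses this to the elementary integral $\Phi_t(s)=\int_0^t(1+(s-1)u)^{K}\,du$, which evaluates in closed form to $\dfrac{(1+(s-1)t)^{K+1}-1}{(s-1)(K+1)}$ for $s\ne1$ (and to $t$ at $s=1$ by continuity).

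Next, the hypothesis $I_k(p)=c\,I_k(q)$ for all $k$ gives $\Phi_p(s)=c\,\Phi_q(s)$ identically in $s$. Clearing the common factor $(s-1)(K+1)$ and writing $v\equiv s-1$, this becomes the polynomial identity $(1+vp)^{K+1}-1=c\bigl[(1+vq)^{K+1}-1\bigr]$ in $v$. Matching the coefficient of $v^{j}$ through the binomial expansion and cancelling the nonzero $\binom{K+1}{j}$ then forces $p^{j}=c\,q^{j}$ for every $j=1,\dots,K+1$. Reading this off at $j=1$ and $j=2$ (legitimate because $K\ge1$) yields $c=p/q$ and $c=(p/q)^{2}$, hence $p/q=(p/q)^{2}$ and therefore $p=q$, contradicting $p\ne q$. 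This establishes that no such constant $c$ exists, i.e., $f$ is not constant.

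The only genuine obstacle is the first move: recognizing that the right object is the binomially weighted generating function, whose summand telescopes under the binomial theorem into a single power and thus an elementary, closed-form integral. A direct attack comparing $I_k(p)$ and $I_k(q)$ index by index runs into incomplete-Beta recursions and is considerably messier; once the generating-function reduction is in hand, the remainder is routine coefficient matching.
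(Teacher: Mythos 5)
Your proof is correct, and it takes a genuinely different route from the paper's. The paper's own proof rests on the Farouki--Rajan identity $\int_{0}^{t}b_{k,K}(u)\,du=\frac{1}{K+1}\sum_{i=k+1}^{K+1}b_{i,K+1}(t)$, which turns each hypothesis $f(k)=c$ into an equality of tail sums of degree-$(K+1)$ Bernstein basis values at $P(z_{1})$ and $P(z_{2})$; subtracting the equations for $k=m$ and $k=m-1$ telescopes to a single term, giving $c=\frac{P(z_{1})^{m}(1-P(z_{1}))^{K+1-m}}{P(z_{2})^{m}(1-P(z_{2}))^{K+1-m}}$ for every $m\in\{1,\dots,K\}$, and the paper concludes by comparing $m=1$ with $m=K$. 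You instead bundle all $K+1$ equations at once: writing $p=P(z_{1})$, $q=P(z_{2})$ and $I_{k}(t)=\int_{0}^{t}u^{k}(1-u)^{K-k}\,du$, you form the generating function $\Phi_{t}(s)=\sum_{k}\binom{K}{k}s^{k}I_{k}(t)$, collapse it via the binomial theorem to the closed form $\int_{0}^{t}(1+(s-1)u)^{K}\,du$, and match polynomial coefficients to obtain $p^{j}=c\,q^{j}$ for $j=1,\dots,K+1$. Your version buys two things. First, it is self-contained: only the binomial theorem and coefficient matching, with no appeal to an external Bernstein integral identity. Second, the contradiction already follows from the exponents $j=1,2$, so the argument covers every $K\ge 1$; by contrast, the paper's comparison of $m=1$ with $m=K$ is vacuous when $K=1$ (the two expressions for $c$ coincide), so its argument as written needs $K\ge 2$ --- immaterial in context, since $K$ is a large sieve order, but a corner case your route avoids. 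Moreover, passing from the paper's two equated expressions to $P(z_{1})=P(z_{2})$ requires extracting a $(K-1)$-th root that the paper leaves implicit, whereas your monomial relations $c=p/q=(p/q)^{2}$ make the conclusion immediate. What the paper's route buys in exchange is brevity given the cited identity, and an explicit closed-form expression for $c$ at each $m$ rather than only the coefficient relations.
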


The proof of this lemma appears below. Our goal is to rescale the
coefficient matrix $B$ into a new matrix $\tilde{B}$ such that its
entries have balanced orders of magnitude. The most intuitive choice
of $\tilde{B}$ is the fully reduced form of $B$. Note $B$ usually
has more rows than columns (otherwise, we achieve point identification),
therefore, the fully reduced form of $B$ would take the form of 
\[
\tilde{B}=\left[I,\boldsymbol{0}\right],
\]
where $I$ is the identity matrix of rank $m$, and $\boldsymbol{0}$
is a zero matrix with dimension $m\times(n-m)$. The next step is
to find a transformation matrix $X$ such that $BX=\tilde{B}$, so
that we can rewrite the optimization problem as
\[
\max_{\tilde{\theta}\in\tilde{\Theta}_{K}}\tilde{A}\ensuremath{\tilde{\theta}}
\]
subject to
\[
\tilde{B}\tilde{\theta}=p,
\]
where $\tilde{\theta}=X^{-1}\theta$, $\tilde{A}=AX$, and $\tilde{\Theta}_{K}\equiv\left\{ \tilde{\theta}:MX\tilde{\theta}=\boldsymbol{1},X\tilde{\theta}\geq\boldsymbol{0}\right\} $.

We propose a simple algorithm to find a full rank $X$. Since $B$
is the column-reduced form of $B$, $X$ can be viewed as the matrix
of elementary operation used to reach the reduced form. To construct
$X$, we first use the transposed matrix $B'$ and apply to it Gauss-Jordan
elimination with partial pivoting to achieve a row-reduced form. We
apply the exactly same procedure to an identity matrix $I$ with dimension
$n\times n$. Then, the transpose of the resulting matrix becomes
$X$. Because simple row operations preserve the rank, $X$ is guaranteed
to have full rank. Note that there may exist multiple solutions of
$X$, which essentially makes this procedure computationally easier
than solving an LP.

\subsection{Pointwise and Uniform Sharp Bounds on MTE\label{subsec:Point-wise-and-Uniform}}

In Section \ref{sec:Preliminaries:-Observables,-Assu}, we provided
some examples of target parameters. The building block for these parameters
is the MTE, $m_{1}(u)-m_{0}(u)$ (suppressing $w$ and $x$). \citet{heckman2005structural}
show why this fundamental parameter can be of independent interest.
Unlike other target parameters proposed here, we may want to recover
the MTE as a function of $u$ (besides evaluating it at fixed $u$).
In this section, we discuss the subtle issue of pointwise and uniform
sharp bounds on $\tau_{MTE}(u)\equiv m_{1}(u)-m_{0}(u)$ as a function
of $u$.

For simplicity, suppress $W$ and $X$ and redefine $\epsilon\equiv(Y(0),Y(1))$
and $e\equiv(y(0),y(1))$. Recall $q(u)\equiv\{q(e|u)\}_{e\in\mathcal{\mathcal{E}}}$
and $\mathcal{Q}\equiv\{q(\cdot):\sum_{e}q(e|u)=1\,\forall u\text{ and }q(e|u)\ge0\,\forall(e,u)\}$.
Let $\mathcal{M}$ be the set of MTE functions, i.e.,
\[
\mathcal{M}\equiv\Big\{ m_{1}(\cdot)-m_{0}(\cdot):m_{d}(\cdot)=\sum_{e\in\mathcal{E}:y(d)=1}q(e|\cdot)\text{ }\forall d\in\{0,1\}\text{ for }q(\cdot)\in\mathcal{Q}\Big\}.
\]
The bounds on $\tau_{MTE}\in\mathcal{M}$ in the $\infty$-LP are
given by using a Dirac delta function as a weight. Therefore, given
evaluation point $u\in[0,1]$, \eqref{eq:upper}--\eqref{eq:constr}
can be simplified as follows, defining the upper and lower bounds
$\overline{\tau}(u)$ and $\underline{\tau}(u)$ (that are explicit
about the evaluation point) on $\tau_{MTE}(u)$:
\begin{align}
\overline{\tau}(u) & =\sup_{q\in\mathcal{Q}}\sum_{e\in\mathcal{E}:y(1)=1}q(e|u)-\sum_{e\in\mathcal{E}:y(0)=1}q(e|u)\label{eq:upper4}\\
\underline{\tau}(u) & =\inf_{q\in\mathcal{Q}}\sum_{e\in\mathcal{E}:y(1)=1}q(e|u)-\sum_{e\in\mathcal{E}:y(0)=1}q(e|u)\label{eq:lower4}
\end{align}
subject to
\begin{align}
\sum_{e:y(d)=1}\int_{\mathcal{U}_{z}^{d}}q(e|\tilde{u})d\tilde{u} & =p(1,d|z)\qquad\forall(d,z)\in\{0,1\}\times\mathcal{Z}.\label{eq:constr4}
\end{align}
Then, for any fixed $u\in[0,1]$,
\begin{align*}
\underline{\tau}(u) & \le\tau_{MTE}(u)\le\overline{\tau}(u).
\end{align*}
We argue that these bounds are pointwise sharp but not necessarily
uniformly sharp for $\tau_{MTE}(\cdot)$.\footnote{See \citet{firpo2019partial} for related definitions of pointwise
and uniform sharpness.}

\begin{definition}[Pointwise Sharpness]$\overline{\tau}(\cdot)$
and $\underline{\tau}(\cdot)$ are pointwise sharp if, for any $\bar{u}\in[0,1]$,
there exist $\overline{\tau}_{MTE,\bar{u}},\underline{\tau}_{MTE,\bar{u}}\in\mathcal{M}$
such that $\overline{\tau}(\bar{u})=\overline{\tau}_{MTE,\bar{u}}(\bar{u})$
and $\underline{\tau}(\bar{u})=\underline{\tau}_{MTE,\bar{u}}(\bar{u})$.\end{definition}

\begin{theorem}\label{thm:ptws_sharp}$\overline{\tau}(\cdot)$ and
$\underline{\tau}(\cdot)$ are pointwise sharp bounds on $\tau_{MTE}(\cdot)$.\end{theorem}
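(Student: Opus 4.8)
The plan is to reduce the infinite-dimensional program \eqref{eq:upper4}--\eqref{eq:constr4} at a fixed evaluation point to a trivial optimization over the probability simplex, and then to exhibit an explicit feasible $q(\cdot)$ that attains the resulting value. The key structural observation is that the objective in \eqref{eq:upper4} and \eqref{eq:lower4} depends on $q(\cdot)$ \emph{only} through the single vector $\big(q(e|\bar u)\big)_{e\in\mathcal{E}}$, whereas every constraint in \eqref{eq:constr4} is an \emph{integral} of $q$ over an interval. Since a single point has Lebesgue measure zero, the value of $q$ at $\bar u$ is entirely unconstrained by \eqref{eq:constr4}, and this decoupling is what drives the whole argument.

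First I would fix $\bar u\in[0,1]$ and, for $s\equiv(s_e)_{e\in\mathcal{E}}$ in the probability simplex $\Delta\equiv\{s:\sum_{e}s_e=1,\ s_e\ge0\}$, define the linear functional
\begin{align*}
J(s) & \equiv\sum_{e:g_e(1)=1}s_e-\sum_{e:g_e(0)=1}s_e.
\end{align*}
Because any feasible $q$ satisfies $\big(q(e|\bar u)\big)_{e}\in\Delta$, I obtain $\overline{\tau}(\bar u)\le\max_{s\in\Delta}J(s)$ and $\underline{\tau}(\bar u)\ge\min_{s\in\Delta}J(s)$. Both extrema are attained on $\Delta$ since $J$ is linear and $\Delta$ is a compact polytope; the maximizer $s^{+}$ (resp.\ minimizer $s^{-}$) puts unit mass on a state $e$ whose coefficient $1\{g_e(1)=1\}-1\{g_e(0)=1\}$ is largest (resp.\ smallest).

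Next I would show that these simplex values are achieved within $\mathcal{M}$, which is where nonemptiness enters. Assuming $\mathcal{T}^{*}$, equivalently $\mathcal{M}$, is nonempty, pick any feasible $q^{0}\in\mathcal{Q}$ with $R_0 q^{0}=p$ and define $\tilde q$ by $\tilde q(\cdot|u)=q^{0}(\cdot|u)$ for $u\ne\bar u$ and $\tilde q(\cdot|\bar u)=s^{+}$. Then $\tilde q\in\mathcal{Q}$, since the simplex constraint holds at every $u$ by construction, and $R_0\tilde q=R_0q^{0}=p$, since altering $q^{0}$ on the null set $\{\bar u\}$ leaves every integral in \eqref{eq:constr4} unchanged; hence $\tilde q$ is feasible. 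The MTE function it generates, $\overline{\tau}_{MTE,\bar u}\equiv m_1^{\tilde q}-m_0^{\tilde q}\in\mathcal{M}$ with $m_d^{\tilde q}(\cdot)\equiv\sum_{e:g_e(d)=1}\tilde q(e|\cdot)$, satisfies $\overline{\tau}_{MTE,\bar u}(\bar u)=J(s^{+})=\max_{s\in\Delta}J(s)$, so together with the upper bound from the previous step this forces $\overline{\tau}(\bar u)=\overline{\tau}_{MTE,\bar u}(\bar u)$. The identical construction with $s^{-}$ produces $\underline{\tau}_{MTE,\bar u}\in\mathcal{M}$ attaining $\underline{\tau}(\bar u)$, which is exactly point-wise sharpness.

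I do not anticipate a genuine technical obstacle; the content of the proof is the recognition that point evaluation and integral constraints decouple on null sets. The only points requiring care are (i) invoking nonemptiness to supply a base $q^{0}$ that carries the integral constraints away from $\bar u$, and (ii) treating elements of $\mathcal{Q}$ and $\mathcal{M}$ as genuine functions rather than almost-everywhere equivalence classes, so that pointwise evaluation at $\bar u$ is well defined and a single-point spike is legitimate. It is worth flagging that this same decoupling is precisely what fails for \emph{uniform} sharpness: a uniformly sharp upper bound would require one feasible $q$ with $q(\cdot|u)=s^{+}$ for \emph{all} $u$, i.e.\ a conditional law constant in $u$, which generically contradicts \eqref{eq:constr4} --- so the spiking trick that makes the point-wise claim immediate cannot be carried over to the uniform one.
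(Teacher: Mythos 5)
Your proof is correct, and it ends the same way the paper's does---by exhibiting a feasible $q$ whose induced MTE function lies in $\mathcal{M}$ and equals the bound at $\bar{u}$---but it gets there by a genuinely more constructive route. The paper's proof of Theorem \ref{thm:ptws_sharp} is essentially one line: it \emph{asserts} that the supremum in \eqref{eq:upper4}--\eqref{eq:constr4} is attained by some feasible $q_{\bar{u}}^{*}$, and then notes that the MTE function generated by $q_{\bar{u}}^{*}$ belongs to $\mathcal{M}$ and attains $\overline{\tau}(\bar{u})$. Attainment of a supremum of a linear functional over an infinite-dimensional feasible set is not automatic, so the paper leaves a small gap that your argument closes: since the objective depends on $q$ only through the vector $q(\cdot|\bar{u})$ while every constraint in \eqref{eq:constr4} is an integral, overwriting any feasible $q^{0}$ on the Lebesgue-null set $\{\bar{u}\}$ with the optimal simplex vertex $s^{+}$ produces an explicit feasible maximizer (measurability and the integral constraints are unaffected by a change on a null set). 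Your route also buys two things the paper's does not. First, it pins down the actual values: $\overline{\tau}(\bar{u})$ and $\underline{\tau}(\bar{u})$ equal the extrema of $\sum_{e:g_{e}(1)=1}s_{e}-\sum_{e:g_{e}(0)=1}s_{e}$ over the simplex, which for binary $Y$ with the saturated state space $\mathcal{E}$ are $+1$ and $-1$ at every $\bar{u}$. This makes explicit that the $\infty$-LP point-wise bounds on the MTE are vacuous; the nontrivial MTE bounds plotted in Figure \ref{fig: MTE diffK} are those of the finite-$K$ sieve LP, where the Bernstein structure ties the value of $q$ at $\bar{u}$ to the integral constraints and a single-point spike is unavailable, so the sieve is a substantive restriction for this particular target. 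Second, your two caveats are exactly the right ones: nonemptiness of the feasible set (used but never stated in the theorem, unlike in Theorem \ref{thm:sharp_bds}) is what supplies the $q^{0}$ to be spiked, and treating elements of $\mathcal{Q}$ as genuine functions rather than a.e.\ equivalence classes is what makes point evaluation, and hence the spike, legitimate. Your closing remark on why the same trick cannot deliver uniform sharpness is also correct and is precisely the content of Theorem \ref{thm:unif_sharp} and Corollary \ref{cor:unif_sharp}.
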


The proofs of this and other theorems appear later. Note that pointwise
bounds will maintain some properties of an MTE function, but not all.
For uniform sharpness, $\overline{\tau}(\cdot)$ and $\underline{\tau}(\cdot)$
themselves have to be MTE functions on $[0,1]$, i.e., $\overline{\tau}(\cdot)$
and $\underline{\tau}(\cdot)$ should be elements in $\mathcal{M}$.

\begin{definition}[Uniform Sharpness]$\overline{\tau}(\cdot)$ and
$\underline{\tau}(\cdot)$ are uniformly sharp if $\overline{\tau}(\cdot),\underline{\tau}(\cdot)\in\mathcal{M}$.\end{definition}

The following theorem is almost immediate.

\begin{theorem}\label{thm:unif_sharp}$\overline{\tau}(\cdot)$ is
uniformly sharp if and only if there exists $q^{*}(\cdot)\in\mathcal{Q}$
such that $q^{*}(\cdot)$ is in the feasible set and $\overline{\tau}(u)=\sum_{e\in\mathcal{E}:y(1)=1}q^{*}(e|u)-\sum_{e\in\mathcal{E}:y(0)=1}q^{*}(e|u)$
for all $u\in[0,1]$. Similarly, $\underline{\tau}(\cdot)$ is uniformly
sharp if and only if there exists $q^{\dagger}(\cdot)\in\mathcal{Q}$
such that $q^{\dagger}(\cdot)$ is in the feasible set and $\underline{\tau}(u)=\sum_{e\in\mathcal{E}:y(1)=1}q^{\dagger}(e|u)-\sum_{e\in\mathcal{E}:y(0)=1}q^{\dagger}(e|u)$
for all $u\in[0,1]$.\end{theorem}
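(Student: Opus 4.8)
The plan is to prove the claim by directly unwinding the definition of uniform sharpness, so that the asserted equivalence reduces to the defining property of the set $\mathcal{M}$. First I would recall that, by the definition of uniform sharpness, the upper envelope $\overline{\tau}(\cdot)$ is uniformly sharp precisely when $\overline{\tau}(\cdot)\in\mathcal{M}$, i.e. when the function $u\mapsto\overline{\tau}(u)$ is itself a marginal treatment effect function generated by a single $q(\cdot)\in\mathcal{Q}$ lying in the feasible set cut out by \eqref{eq:constr4}. The whole argument is then a matter of reading this membership condition through the definition of $\mathcal{M}$.

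For the ``if'' direction, suppose there is a feasible $q^{*}(\cdot)\in\mathcal{Q}$ with
\[
\overline{\tau}(u)=\sum_{e\in\mathcal{E}:g_{e}(1)=1}q^{*}(e|u)-\sum_{e\in\mathcal{E}:g_{e}(0)=1}q^{*}(e|u)\qquad\text{for all }u\in[0,1].
\]
Setting $m_{d}(u)=\sum_{e\in\mathcal{E}:g_{e}(d)=1}q^{*}(e|u)$, the right-hand side is exactly $m_{1}(u)-m_{0}(u)$ for an admissible $q^{*}$, so $\overline{\tau}(\cdot)$ is by construction an element of $\mathcal{M}$, hence uniformly sharp. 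For the ``only if'' direction, suppose $\overline{\tau}(\cdot)$ is uniformly sharp, i.e. $\overline{\tau}(\cdot)\in\mathcal{M}$. Unfolding the definition of $\mathcal{M}$ produces a feasible $q^{*}(\cdot)\in\mathcal{Q}$ with $m_{d}(u)=\sum_{e\in\mathcal{E}:g_{e}(d)=1}q^{*}(e|u)$ for $d\in\{0,1\}$, and therefore $\overline{\tau}(u)=m_{1}(u)-m_{0}(u)=\sum_{e\in\mathcal{E}:g_{e}(1)=1}q^{*}(e|u)-\sum_{e\in\mathcal{E}:g_{e}(0)=1}q^{*}(e|u)$ for every $u$, which is the asserted condition. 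The argument for $\underline{\tau}(\cdot)$ is identical after replacing the supremum in \eqref{eq:upper4} by the infimum in \eqref{eq:lower4} and renaming $q^{*}$ as $q^{\dagger}$; nothing in the reasoning uses the direction of optimization.

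The only point that requires care — and the reason I would phrase the characterization in terms of the feasible set rather than merely $\mathcal{Q}$ — is the contrast with point-wise sharpness (Theorem \ref{thm:ptws_sharp}). Point-wise sharpness supplies, for each fixed $\bar{u}$, a feasible MTE function $\overline{\tau}_{MTE,\bar{u}}\in\mathcal{M}$ attaining the bound at $\bar{u}$, but the optimizing $q$ generically varies with $\bar{u}$, so that the pointwise envelope $\overline{\tau}(\cdot)$ stitches together values arising from different feasible $q$'s. Uniform sharpness is the strictly stronger demand that one single feasible $q^{*}$ attain the bound simultaneously at every $u$, which is exactly the existence statement in the theorem. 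Thus the substantive content is not the logical equivalence, which is immediate from the definitions, but the recognition that such a common maximizer need not exist, so that the membership $\overline{\tau}(\cdot)\in\mathcal{M}$ can genuinely fail even when every pointwise value is separately attainable; I expect this conceptual gap, rather than any calculation, to be the real ``obstacle,'' and it is precisely what explains why the simulation's DGP admits no uniformly sharp MTE bounds.
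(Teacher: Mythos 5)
Your proof is correct and follows essentially the same route as the paper's: both simply unfold the definition of uniform sharpness as membership of $\overline{\tau}(\cdot)$ (resp.\ $\underline{\tau}(\cdot)$) in $\mathcal{M}$, so that the stated representation by a single feasible $q^{*}(\cdot)$ (resp.\ $q^{\dagger}(\cdot)$) is just the definition read out. If anything, you are more careful than the paper, whose proof asserts the equivalence in one line without separating the two directions or flagging that the generating $q$ must lie in the feasible set rather than merely in $\mathcal{Q}$.
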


The following is a more useful result that relates pointwise bounds
with uniform bounds. For each $\bar{u}$, let $q_{\bar{u}}^{*}(\cdot)$
and $q_{\bar{u}}^{\dagger}(\cdot)$ be the pointwise maximizer and
minimizer of \eqref{eq:upper4}--\eqref{eq:constr4}, respectively.

\begin{corollary}\label{cor:unif_sharp}$\overline{\tau}(\cdot)$
is uniformly sharp if and only if there exists $q^{*}(\cdot)\in\mathcal{Q}$
such that $q^{*}(\cdot)$ is in the feasible set and $q_{\bar{u}}^{*}(\bar{u})=q^{*}(\bar{u})$
for all $\bar{u}\in[0,1]$. Also, $\underline{\tau}(u)$ is uniformly
sharp if and only if there exists $q^{\dagger}(\cdot)\in\mathcal{Q}$
such that $q^{\dagger}(\cdot)$ is in the feasible set and $q_{\bar{u}}^{\dagger}(\bar{u})=q^{\dagger}(\bar{u})$
for all $\bar{u}\in[0,1]$.\end{corollary}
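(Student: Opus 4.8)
The plan is to derive the corollary directly from Theorems \ref{thm:ptws_sharp} and \ref{thm:unif_sharp}, exploiting the fact that the objective in \eqref{eq:upper4} evaluated at a single point $\bar{u}$ depends on the decision variable $q(\cdot)$ only through its slice $q(\bar{u})=\{q(e|\bar{u})\}_{e\in\mathcal{E}}$, whereas the constraints \eqref{eq:constr4} are integral constraints that couple all slices. I will argue the claim for $\overline{\tau}(\cdot)$; the argument for $\underline{\tau}(\cdot)$ is identical with the infimum replacing the supremum.

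For the ``only if'' direction, I would start from uniform sharpness and invoke Theorem \ref{thm:unif_sharp} to obtain a single feasible $q^*(\cdot)\in\mathcal{Q}$ with $\overline{\tau}(u)=\sum_{e:g_e(1)=1}q^*(e|u)-\sum_{e:g_e(0)=1}q^*(e|u)$ for every $u$. The key observation is then that, at each fixed $\bar{u}$, this $q^*$ is feasible and its objective value at $\bar{u}$ equals the point-wise optimum $\overline{\tau}(\bar{u})$; hence $q^*$ is itself a point-wise maximizer of \eqref{eq:upper4}--\eqref{eq:constr4} at $\bar{u}$. Taking the point-wise maximizer to be $q_{\bar{u}}^*(\cdot)=q^*(\cdot)$ at every $\bar{u}$ then yields $q_{\bar{u}}^*(\bar{u})=q^*(\bar{u})$ for all $\bar{u}$, as required.

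For the ``if'' direction, I would assume a feasible $q^*(\cdot)$ satisfying $q_{\bar{u}}^*(\bar{u})=q^*(\bar{u})$ for every $\bar{u}$, and use point-wise sharpness (Theorem \ref{thm:ptws_sharp}) to write $\overline{\tau}(\bar{u})=\sum_{e:g_e(1)=1}q_{\bar{u}}^*(e|\bar{u})-\sum_{e:g_e(0)=1}q_{\bar{u}}^*(e|\bar{u})$. Substituting the hypothesis $q_{\bar{u}}^*(\bar{u})=q^*(\bar{u})$, and again using that the point-wise objective reads off only the slice at $\bar{u}$, I obtain $\overline{\tau}(\bar{u})=\sum_{e:g_e(1)=1}q^*(e|\bar{u})-\sum_{e:g_e(0)=1}q^*(e|\bar{u})$ for all $\bar{u}$. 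Since $q^*$ is feasible and realizes the upper-bound function point-wise everywhere, Theorem \ref{thm:unif_sharp} delivers uniform sharpness.

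The conceptual obstacle, and the reason the corollary is more useful than Theorem \ref{thm:unif_sharp}, is the non-uniqueness of the point-wise maximizers: the objective at $\bar{u}$ pins down a candidate only on its $\bar{u}$-slice, while feasibility through \eqref{eq:constr4} constrains all slices simultaneously. The content of the corollary is therefore that uniform sharpness is equivalent to the existence of a \emph{consistent selection} of point-wise optimal slices that assembles into one globally feasible $q^*$. In writing the ``only if'' direction I must be careful to read $q_{\bar{u}}^*$ as a selection of point-wise maximizers that I am free to take equal to $q^*$, rather than as an arbitrarily pre-fixed maximizer; making this ``gluing'' explicit is exactly the step that renders the statement informative.
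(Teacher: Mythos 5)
Your proposal is correct and follows exactly the route the paper intends: the paper offers no written proof of Corollary \ref{cor:unif_sharp}, presenting it as an immediate consequence of Theorem \ref{thm:unif_sharp} together with the existence of point-wise maximizers from Theorem \ref{thm:ptws_sharp}, which is precisely the two-directional substitution argument you spell out. Your explicit remark that $q_{\bar{u}}^{*}$ must be read as a \emph{selection} of point-wise maximizers (so that in the ``only if'' direction you may take $q_{\bar{u}}^{*}=q^{*}$ for every $\bar{u}$) is the correct reading of the statement and a useful clarification of a point the paper leaves implicit.
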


Based on the Bernstein approximation we introduce, this corollary
implies that for a uniform upper bound to exist, there should exist
a common maximizer $\theta^{*}$ such that $\theta^{*}$ is in the
feasible set of the LP and $\overline{\tau}(u)=\sum_{k\in\mathcal{K}}\Big\{\sum_{e\in\mathcal{E}:y(1)=1}\theta_{k}^{e*}b_{k}(u)-\sum_{e\in\mathcal{E}:y(0)=1}\theta_{k}^{e*}b_{k}(u)\Big\}$
\textit{for all} $u$. In other words, if $\theta_{\bar{u}}^{*}$
is the maximizer of the LP for given $\bar{u}$, then there should
exist $\theta^{*}$ in the feasible set such that $\theta_{\bar{u}}^{*}=\theta^{*}$
for all $\bar{u}\in[0,1]$. Since this condition will not generally
hold, uniformly sharp bounds on the MTE may not exist. The condition
can be verified in practice by implementing the LP in a finite grid
of $u$ in $[0,1]$ and checking whether $\theta_{u}^{*}$ is constant
for all values in the grid.

The discussion of this section extends to the case with continuous
$Y$ by analogously defining the set of MTE functions generated by
$\tilde{\mathcal{Q}}$. One implication of uniform sharpness is that
$\overline{\tau}(\cdot)$ and $\underline{\tau}(\cdot)$ should be
consistent with the CDF properties of $\tilde{q}(e|u)$ embedded in
$\tilde{\mathcal{Q}}$.

\subsection{Linear Programming with Continuous $X$\label{subsec:Linear-Programming-with}}

Suppose $X$ is a vector of continuously distributed covariates and
assume $\mathcal{X}=[0,1]^{d_{X}}$ without loss of generality. Let
$q(u,x)\equiv\{q(e|u,x)\}_{e\in\mathcal{\mathcal{E}}}$ and $p(x)\equiv\{p(1,d|z,w,x)\}_{d,z,w}$.
Recall that $R_{\omega}:\mathcal{Q}\rightarrow\mathbb{R}$ and $R:\mathcal{Q}\rightarrow\mathbb{R}^{d_{p}}$
are the linear operators of $q(\cdot)$ where $d_{p}$ is the dimension
of $p$. Consider the following LP:
\begin{align}
\overline{\tau}= & \sup_{q\in\mathcal{Q}}R_{\omega}q,\label{eq:LP1_x}\\
\underline{\tau}= & \inf_{q\in\mathcal{Q}}R_{\omega}q,\label{eq:LP2_x}\\
 & s.t.\quad(Rq)(x)=p(x)\qquad\text{for all }x\in\mathcal{X},\label{eq:LP3_x}
\end{align}
where $(Rq)(x)=p(x)$ emphasizes the dependence on $x$, and thus
represents infinitely many constraints. Therefore, this LP is infinite
dimensional because of both the decision variable and the constraints.

Now, for the sieve space of $\mathcal{Q}$, we consider
\begin{align}
\bar{\mathcal{Q}}_{K} & \equiv\left\{ \Big\{\sum_{k=1}^{K}\theta_{k}^{e}b_{k}(u,x)\Big\}_{e\in\mathcal{E}}:\sum_{e\in\mathcal{E}}\theta_{k}^{e}=1\text{ and }\theta_{k}^{e}\ge0\text{ }\forall(e,k)\right\} \subseteq\mathcal{Q},\label{eq:sieve-conti}
\end{align}
where $b_{k}(u,x)$ is a bivariate Bernstein polynomial and $\mathcal{K}\equiv\{1,...,K\}$.
Then, for $q\in\bar{\mathcal{Q}}_{K}$
\begin{align}
E[\tau_{d}(Z,W,X)] & =E\left[\sum_{e:y(d,W)=1}\sum_{k\in\mathcal{K}}\theta_{k}^{e}\int b_{k}(u,X)\omega_{d}(u,Z,X)du\right]\nonumber \\
 & \equiv\sum_{w\in\mathcal{W}}\sum_{e:y(d,w)=1}\sum_{k\in\mathcal{K}}\theta_{k}^{e}\tilde{\gamma}_{k}^{d}(w),\label{eq:derive_target-1}
\end{align}
where $\tilde{\gamma}_{k}^{d}(w)\equiv E\left[\sum_{z\in\{0,1\}}p(z,w|X)\int b_{k}(u,X)\omega_{d}(u,z,X)du\right]$
with $p(z,w|x)\equiv\Pr[Z=z,W=w|X=x]$. Also,
\begin{align}
p(1,d|z,w,x) & =\sum_{e:y(d,w)=1}\sum_{k\in\mathcal{K}}\theta_{k}^{e}\int_{\mathcal{U}_{z,x}^{d}}b_{k}(u,x)du\nonumber \\
 & \equiv\sum_{e:y(d,w)=1}\sum_{k\in\mathcal{K}}\theta_{k}^{e}\tilde{\delta}_{k}^{d}(z,x),\label{eq:derive_constr-1}
\end{align}
where $\tilde{\delta}_{k}^{d}(z,x)\equiv\int_{\mathcal{U}_{z,x}^{d}}b_{k}(u,x)du$.
To deal with this infinite dimensional constraints (with respect to
$x$), we proceed as follows. For any measurable function $h:\mathcal{X}\rightarrow\mathbb{R}$,
$E\left|h(X)\right|=0$ if and only if $h(x)=0$ almost everywhere
in $\mathcal{X}$. Therefore, the equality restriction \eqref{eq:derive_constr-1}
can be replaced by
\begin{align*}
E\left|\sum_{e:y(d,w)=1}\sum_{k\in\mathcal{K}}\theta_{k}^{e}\tilde{\delta}_{k}^{d}(z,X)-p(1,d|z,w,X)\right| & =0
\end{align*}
for all $(d,z,w)\in\{0,1\}\times\mathcal{Z}\times\mathcal{W}$. Let
$\tilde{\theta}\equiv\{\theta_{k}^{e}\}_{(e,k)\in\mathcal{E}\times\mathcal{K}}$
and let
\begin{align*}
\tilde{\Theta}_{K} & \equiv\left\{ \tilde{\theta}:\sum_{e\in\mathcal{E}}\theta_{k}^{e}=1\text{ and }\theta_{k}^{e}\ge0\text{ }\forall(e,k)\in\mathcal{E}\times\mathcal{K}\right\} .
\end{align*}
Then, we can formulate the following finite-dimensional LP:
\begin{align}
\overline{\tau}_{K} & =\max_{\theta\in\Theta_{K}}\sum_{(k,w)\in\mathcal{K}\times\mathcal{W}}\Big\{\sum_{e:y(1,w)=1}\theta_{k}^{e}\tilde{\gamma}_{k}^{1}(w)-\sum_{e:y(0,w)=1}\theta_{k}^{e}\tilde{\gamma}_{k}^{0}(w)\Big\}\label{eq:LP1_x2}\\
\underline{\tau}_{K} & =\min_{\theta\in\Theta_{K}}\sum_{(k,w)\in\mathcal{K}\times\mathcal{W}}\Big\{\sum_{e:y(1,w)=1}\theta_{k}^{e}\tilde{\gamma}_{k}^{1}(w)-\sum_{e:y(0,w)=1}\theta_{k}^{e}\tilde{\gamma}_{k}^{0}(w)\Big\}\label{eq:LP2_x2}
\end{align}
subject to
\begin{align}
E\left|\sum_{e:y(d,w)=1}\sum_{k\in\mathcal{K}}\theta_{k}^{e}\tilde{\delta}_{k}^{d}(z,X)-p(1,d|z,w,X)\right| & =0\qquad\forall(d,z,w)\in\{0,1\}\times\mathcal{Z}\times\mathcal{W}.\label{eq:LP3_x2}
\end{align}

Later, we want to introduce additional constraints from some identifying
assumptions:
\begin{align}
R_{1}q & =a_{1},\label{eq:LP4-conti}\\
R_{2}q & \le a_{2}.\label{eq:LP5-conti}
\end{align}
For the equality restrictions, we can use the same approach that transforms
\eqref{eq:LP3_x}. For the inequality restrictions \eqref{eq:LP5-conti},
we can allow any identifying assumptions for which $R_{2}$ is a matrix
rather than an operator:

\begin{asMAT}$R_{2}$ is a $\dim(a_{2})\times\dim(q)$ matrix.\end{asMAT}

Assumptions M and C and the unconditional version of Assumption MTS
satisfy this condition. 

\subsection{Estimation and Inference\label{subsec:Inference}}

Although the paper's main focus is identification, we briefly discuss
estimation and inference. Assume a random sample of $\{Y_{i},D_{i},Z_{i},W_{i},X_{i}\}_{i=1}^{N}$.
The estimation of the bounds characterized by the LP \eqref{eq:upper2}--\eqref{eq:constr2}
is straightforward by replacing the population objects $(\gamma_{k}^{d},\delta_{k}^{d},p)$
with their sample counterparts $(\hat{\gamma}_{k}^{d},\hat{\delta}_{k}^{d},\hat{p})$.
To account for the statistical error arising from estimation based
on the finite sample, we replace \ref{eq:constr} with a tuned constraint:
\[
||\hat{R}_{0}q-\hat{p}||\leq\inf_{q\in\mathcal{Q}}||\hat{R}_{0}q-\hat{p}||+\kappa.
\]
Here $\kappa$ is a tuning parameter selected by the researchers.
In the finite-dimensional LP, to be more specific, we apply the Euclidean
norm and replace \eqref{eq:constr2} by
\begin{align*}
 & \sqrt{\sum_{z,w,x}\left(\sum_{e:y(d,w)=1}\sum_{k\in\mathcal{K}}\theta_{k}^{e,x}\hat{\delta}_{k}^{d}(z,x)-\hat{p}(1,d|z,w,x)\right)^{2}}\leq\\
 & \;\;\;\;\;\;\;\;\;\;\;\;\;\;\;\;\;\;\;\;\;\;\;\;\;\;\;\;\;\;\;\;\;\;\;\;\;\;\;\inf_{\theta\in\Theta_{k}}\sqrt{\sum_{z,w,x}\left(\sum_{e:y(d,w)=1}\sum_{k\in\mathcal{K}}\theta_{k}^{e,x}\hat{\delta}_{k}^{d}(z,x)-\hat{p}(1,d|z,w,x)\right)^{2}}+\kappa.
\end{align*}

With continuous $Y$ in \eqref{eq:LP1_y2}--\eqref{eq:LP3_y2}, we
replace \eqref{eq:LP3_y2} with (a slack version of) its sample counterparts:
\begin{align*}
\frac{1}{N}\sum_{i=1}^{N}\left|\sum_{\boldsymbol{k}\in\mathcal{K}^{5}}\theta_{\boldsymbol{k}}^{x}\hat{\sigma}_{\boldsymbol{k}}^{d}(Y_{i},Z_{i},X_{i})-\hat{\pi}(Y_{i},d|Z_{i},W_{i},X_{i})\right| & \le\eta,
\end{align*}
where $\hat{\pi}(y,d|z,w,x)$ is some preliminary estimate of $\pi(y,d|z,w,x)$
and $\eta$ is another slackness parameter. Since $\eta$ takes into
account the error from the finite sample, there is no need to introduce
an additional tuning parameter to account for the estimation error.
A similar idea applies to the case with continuous $X$ in \eqref{eq:LP1_x2}--\eqref{eq:LP3_x2}.

It is important to construct a confidence set for our target parameter
or its bounds in order to account for the sampling variation in measuring
treatment effectiveness. It will also be interesting to develop a
procedure to conduct a specification test for the identifying assumptions
discussed in Section \ref{sec:Additional-Assumptions}. The problem
of statistical inference when the identified set is constructed via
linear programming has been studied in, e.g., \citet{deb2023revealed},
\citet{mogstad2018using}, \citet{hsieh2022inference}, \citet{torgovitsky2019nonparametric},
and \citet{fang2023inference}. Among these papers, \citet{mogstad2018using}'s
setting is closest to our setting with discrete variables, and their
inference procedure can be directly adapted to our problem. Instead
of repeating their result here, we only briefly discuss the procedure. 

Recall $q(u)\equiv\{q(e|u,x)\}_{e\in\mathcal{\mathcal{E}},x\in\mathcal{X}}$
is the latent distribution and $p\equiv\{p(1,d|z,x)\}_{d,z,x}$ is
the distribution of the data, and $R_{\omega}$, $R_{0}$, $R_{1}$,
and $R_{2}$ denote the linear operators of $q(\cdot)$ that correspond
to the target and constraints. Consider the following hypotheses:
\begin{align*}
H_{0}: & p\in\mathcal{P}_{0},\qquad H_{1}:p\in\mathcal{P}\backslash\mathcal{P}_{0},
\end{align*}
where
\begin{align*}
\mathcal{P}_{0} & \equiv\{p\in\mathcal{P}:Rq=a\text{ for some }q\in\mathcal{Q}\}
\end{align*}
and
\begin{align*}
R & \equiv(R_{\omega}',R_{0}',R_{1}',R_{2}')',\\
a & \equiv(\tau,p',a_{1}',a_{2}')'.
\end{align*}
Suppose $\hat{R}$ and $\hat{a}$ are sample counterparts of $R$
and $a$. Then, a minimum distance test statistic can be constructed
as
\begin{align*}
T_{N}(\tau) & \equiv\inf_{q\in\mathcal{Q}_{K}}\sqrt{N}\left\Vert \hat{R}q-\hat{a}\right\Vert .
\end{align*}
Similar to \citet{mogstad2017using}, $T_{N}(\tau)$ is the solution
to a convex optimization problem that can be reformulated as an LP
using duality. A $(1-\alpha)$-confidence set for the target parameter
$\tau$ can be constructed by inverting the test:
\begin{align*}
CS_{1-\alpha} & \equiv\{\tau:T_{N}(\tau)\le\hat{c}_{1-\alpha}\}
\end{align*}
where $\hat{c}_{1-\alpha}$ is the critical value for the test. The
resulting object is of independent interest, and it can further be
used to conduct specification tests. The large sample theory for $T_{N}(\tau)$,
as well as a bootstrap procedure to calculate $\hat{c}_{1-\alpha}$,
will directly follow according to \citet{mogstad2017using}, which
is omitted for succinctness.

When $Y$ or $X$ is continuously distributed, then the resulting
LP is semi-infinite dimensional. In this case, the inference procedure
by \citet{chernozhukov2013intersection} may be applied. In this case,
the estimation of the bounds can be conducted within the framework.

\section{Proofs\label{sec:Proofs}}

\subsection{Proof of Lemma \ref{lem:redun}}

Fix $(d,z,w,x)$. By $\sum_{e\in\mathcal{E}}q(e|u,x)=1$ for $q\in\mathcal{Q}$,
we have
\begin{align*}
1 & =\sum_{e\in\mathcal{E}}q(e|u,x)=\sum_{e:y(d,w)=1}q(e|u,x)+\sum_{e:y(d,w)=0}q(e|u,x).
\end{align*}
Then, in \eqref{eq:constr}, the constraint with $p(0,d|z,w,x)$ can
be written as 
\begin{align*}
p(0,d|z,w,x) & =\int_{\mathcal{U}_{z,x}^{d}}\sum_{e:y(d,w)=0}q(e|u,x)du=\int_{\mathcal{U}_{z,x}^{d}}\Big\{1-\sum_{e:y(d,w)=1}q(e|u,x)\Big\} du\\
 & =\Pr[D=d|Z=z,W=w,X=x]-\int_{\mathcal{U}_{z,x}^{d}}\sum_{e:y(d,w)=1}q(e|u,x)du.
\end{align*}
Then by rearranging terms, this constraint becomes
\begin{align*}
p(1,d|z,w,x) & =\int_{\mathcal{U}_{z,x}^{d}}\sum_{e:y(d,w)=1}q(e|u,x)du,
\end{align*}
since $\Pr[D=d|Z=z,W=w,X=x]-p(0,d|z,w,x)=p(1,d|z,w,x)$. Therefore,
the constraint with $p(0,d|z,w,x)$ does not contribute to the restrictions
imposed by \eqref{eq:constr} and $q\in\mathcal{Q}$. $\square$

\subsection{Proof of Theorem \ref{thm:equivalence_with_mogstad}}

As $Y\in\{0,1\}$, $\mathcal{M}_{f}$ can be rewritten as:
\[
\mathcal{M}_{f}=\Big\{ m=(m_{0},m_{1}):m_{d}(u,x)=\sum_{e:y(d)=y}q(e|u,x),d=\{0,1\},q\in\mathcal{Q}_{f}\Big\}.
\]
From (\ref{eq:constr}), we can write $E\text{\ensuremath{\left[Y|D=0,z,x\right]}}$
in terms of $q(e|u,x)$ as below:
\begin{align}
E\text{\ensuremath{\left[Y|D=0,z,x\right]}=} & \Pr\left[Y=1|D=0,z,x\right]=\frac{\Pr\left[Y=1,D=0|z,x\right]}{\Pr\left[D=0|z,x\right]}\nonumber \\
= & \frac{1}{1-P(z,x)}\sum_{e:y(0)=1}\int_{P(z,x)}^{1}q(e|u,x)du\nonumber \\
= & \frac{1}{1-P(z,x)}\int_{P(z,x)}^{1}\sum_{e:y(0)=1}q(e|u,x)du\label{eq:proof1}
\end{align}
for $\forall(z,x)\in\mathcal{Z\times\mathcal{X}}$. Then, for $(m_{0},m_{1})\in\mathcal{M}_{f}$
\[
E[Y|D=0,Z,X]=\frac{1}{1-P(Z,X)}\int_{P(Z,X)}^{1}m_{0}(u,X)du
\]
almost surely. Symmetrically, 
\[
E[Y|D=1,Z,X]=\frac{1}{P(Z,X)}\int_{0}^{P(Z,X)}m_{1}(u,X)du
\]
almost surely. Therefore, $\mathcal{M}_{f}\subseteq\mathcal{\mathcal{M}}_{id}$. 

Now suppose $m\in\mathcal{\mathcal{M}}_{id}$. By (\ref{eq: infoset0})
and (\ref{eq:proof1}), for $\forall z,x$ in a set of positive measure
(and symmetrically for the other equation), 
\begin{align}
\frac{1}{1-P(z,x)}\int_{P(z,x)}^{1}m_{0}(u,x)du & =\frac{1}{1-P(z,x)}\sum_{e:y(0)=1}\int_{P(z,x)}^{1}q(e|u,x)du\label{eq:proof2_0}\\
\frac{1}{P(z,x)}\int_{P(z,x)}^{1}m_{1}(u,x)du & =\frac{1}{P(z,x)}\sum_{e:y(1)=1}\int_{0}^{P(z,x)}q(e|u,x)du\label{eq:proof2}
\end{align}
Therefore, we need to find some $q(e|u,x)\in\mathcal{Q}_{f}$ such
that (\ref{eq:proof2_0}) and (\ref{eq:proof2}) are satisfied. Recall
$e\equiv(y(0),y(1))$. We construct $q(e|u,x)$ as
\begin{equation}
q(e|u,x)=\begin{cases}
1-max\{m_{0}(u,x),m_{1}(u,x)\}, & e=(0,0)\\
max\{m_{1}(u,x)-m_{0}(u,x),0\}, & e=(0,1)\\
max\{m_{0}(u,x)-m_{1}(u,x),0\}, & e=(1,0)\\
min\{m_{0}(u,x),m_{1}(u,x)\}, & e=(1,1)
\end{cases}\label{eq:construction}
\end{equation}
By the property of binary $Y$, it can be proved that the constructed
$q(e|u,x)$ in (\ref{eq:construction}) is a function in $Q_{f}$
and also satisfies (\ref{eq:proof2_0}) and (\ref{eq:proof2}) following
the fact that $\forall(u,x)\in\mathcal{U}\times\mathcal{X}$
\begin{align*}
m_{0}(u,x) & =\sum_{e:y(0)=1}q(e|u,x),\\
m_{1}(u,x) & =\sum_{e:y(1)=1}q(e|u,x).
\end{align*}
Therefore, $\mathcal{M}_{id}\subseteq\mathcal{\mathcal{M}}_{f}$.
$\square$

\subsection{Proof of Theorem \ref{thm:better_than_mogstad}}

For this proof, we introduce a proposition adapting a result of \citet{hafsa2003interchange}.
Let $q:\mathcal{U}\rightarrow\mathcal{V}$ be a measurable function
with $\mathcal{U}$ compact in $\mathbb{R}^{L_{\mathcal{U}}}$ and
$\mathcal{V}\subset\mathbb{R}^{L_{\mathcal{V}}}$ and let $h:\mathcal{U}\times\mathcal{V}\rightarrow\mathbb{R}\cup\{-\infty\}\cup\{+\infty\}$
be a continuous function. Let $C_{c}(\mathcal{U})$ be the space of
all continuous functions from $\mathcal{U}$ whose support is compact.
Finally, let $\mathcal{D}(h)$ be the set of measurable function $q:\mathcal{U}\rightarrow\mathcal{V}$
such that $h(\cdot,q(\cdot))\in L^{1}(\mathcal{U},\mu)$.

\begin{proposition}\label{prop:interchange}Let $\mathcal{A}\subset L^{2}(\mathcal{U},\mu)$.
If $\mathcal{A}\subset C_{c}(\mathcal{U})$, $\mathcal{A}\cap\mathcal{D}(h)\neq\emptyset$,
and $h$ is convex in its second variable, then
\begin{align*}
\inf_{q\in\mathcal{A}}\int_{\mathcal{U}}h(u,q(u))d\mu(u) & =\int_{\mathcal{U}}\inf_{\xi\in\Gamma(u)}h(u,\xi)d\mu(u)
\end{align*}
with $\Gamma(u)\equiv\text{cl}\{q(u):q\in\mathcal{A\cap\mathcal{D}}(h)\}$
$\mu$-a.e.\end{proposition}

This proposition is immediate from Corollary 5.1 in \citet{hafsa2003interchange}
because $C_{c}(\mathcal{U})$ is normally decomposable and the $\mu$-essential
supremum $\Gamma$ of a subset $\mathcal{H}\subset L^{2}(\mathcal{U},\mu)$
satisfies $\Gamma(u)=\text{cl}\{q(u):q\in\mathcal{A\cap\mathcal{D}}(h)\}$
$\mu$-a.e.\footnote{For the definition of a normally decomposable set and $\mu$-essential
supremum, we refer the reader to \citet{hafsa2003interchange}.}

\bigskip{}

We prove Theorem \ref{thm:better_than_mogstad} by showing (i) $\mathcal{M}_{f}\subseteq\mathcal{M}_{id}$
and (ii) $\mathcal{M}_{id}\not\subseteq\mathcal{M}_{f}$. The proof
of (i) is straightforward. For any $(m_{0},m_{1})\in\mathcal{M}_{f}$,
\begin{align*}
m_{0}(u,x) & =\sum_{y\in\mathcal{Y}}y\sum_{e:y(0)=y}q(e|u,x),\\
m_{1}(u,x) & =\sum_{y\in\mathcal{Y}}y\sum_{e:y(1)=y}q(e|u,x),\forall(u,x)\in\mathcal{U}\times\mathcal{X}
\end{align*}
for some $q(e|u,x)\in\mathcal{Q}_{f}$. Then, for any $(z,x)\in\mathcal{Z}\times\mathcal{X}$,
\begin{align*}
\frac{1}{1-P(z,x)}\int_{P(z,x)}^{1}m_{0}(u,x)du & =\frac{1}{1-P(z,x)}\sum_{y\in\mathcal{Y}}y\sum_{e:y(0)=y}\int_{P(z,x)}^{1}q(e|u,x)du,\\
\frac{1}{P(z,x)}\int_{0}^{P(z,x)}m_{1}(u,x)du & =\frac{1}{P(z,x)}\sum_{y\in\mathcal{Y}}y\sum_{e:y(1)=y}\int_{0}^{P(z,x)}q(e|u,x)du.
\end{align*}
By definition of $\mathcal{Q}_{f}$, it holds that
\begin{align*}
\sum_{e:y(0)=y}\int_{P(z,x)}^{1}q(e|u,x)du & =p(y,0|z,x)=\Pr[Y=y|D=0,z,x]\left(1-P(z,x)\right),\\
\sum_{e:y(1)=y}\int_{P(z,x)}^{1}q(e|u,x)du & =p(y,1|z,x)=\Pr[Y=y|D=1,z,x]P(z,x),
\end{align*}
which implies
\begin{align*}
\frac{1}{1-P(Z,X)}\int_{P(Z,X)}^{1}m_{0}(u,X)du & =E\left[Y|D=0,Z,X\right],\\
\frac{1}{P(Z,X)}\int_{0}^{P(Z,X)}m_{1}(u,X)du & =E\left[Y|D=1,Z,X\right],
\end{align*}
almost surely. Therefore, we conclude that $\mathcal{M}_{f}\subseteq\mathcal{M}_{id}$.

Next, we prove (ii) by showing that there exists some $m_{d}\in\mathcal{M}_{id}$,
$m_{d}\notin\mathcal{M}_{f}$. Without loss of generality, we pick
$d=0$ and construct a $m_{0}(u,x)$ function such that $m_{0}\in\mathcal{M}_{id}$
and $m_{0}\notin\mathcal{M}_{f}$, or equivalently, $\forall q\in\mathcal{Q}_{f}$,
$\exists x\in\mathcal{X}$ such that 
\[
m_{0}(u,x)\ne\sum_{y\in\mathcal{Y}}y\sum_{e:y(0)=y}q(e|u,x)
\]
for $u$ lying in some set $A\subseteq[0,1]$ with non-zero measure.

Given $\mathcal{Z}$ is discrete and $\mathcal{X}$ is compact, let
$(z^{*},x^{*})\in\mathcal{Z}\times\mathcal{X}$ be such that $P(z^{*},x^{*})\ge P(z,x)$
$\forall(z,x)\in\mathcal{Z}\times\mathcal{X}$. For example, let $(z^{*},x^{*})=\arg\max P(z,x)$
by Assumption EC(i). Define $\mathcal{\hat{\mathcal{Q}}}$ as
\[
\hat{\mathcal{Q}}=\left\{ \text{argsup}_{q\in\mathcal{Q}_{f}}\sum_{y\in\mathcal{Y}}y\sum_{e:y(0)=y}\int_{P(z^{*},x^{*})}^{\frac{1+P(z^{*},x^{*})}{2}}q(e|u,x^{*})du\right\} .
\]
By Assumption EC(ii) and $q$ being a function on a compact set,
$\mathcal{Q}_{f}$ is uniformly equicontinuous. Also, $\{q(u):q\in\mathcal{Q}_{f}\}$
is relatively compact in $[0,1]$ for all $u$.\footnote{A subset of a topological space is relatively compact if its closure
is compact.} Therefore, $\mathcal{Q}_{f}$ is relatively compact (\citet{simon1986compact}).
Moreover, $\mathcal{Q}_{f}$ is closed by Assumption EC(ii), Arzel\`a-Ascoli
theorem, and Lebesgue's dominated convergence theorem.\footnote{This is because of the following: $\mathcal{Q}_{f}\subset L^{2}([0,1])$
is equicontinuous and uniformly bounded and thus, by Arzel\`a-Ascoli
theorem, for any $\{q_{n}\}$ such that $\left\Vert q_{n}-q\right\Vert _{2}\rightarrow0$
there exists a subsequence $\{q_{n_{k}}\}$ such that $\left\Vert q_{n_{k}}-q\right\Vert _{\infty}\rightarrow0$.
Then, $q_{n_{k}}(u)\rightarrow q(u)$ $\forall u$. Note that $0\le q(e|\cdot,x)\le1$
and $\sum_{e}q(e|\cdot,x)=1$ trivially hold as $q_{n_{k}}\in\mathcal{Q}_{f}\subset\mathcal{Q}$.
Also $q$ is continuous as $q_{n_{k}}$ is continuous and uniformly
converges to $q$. Finally, since $q_{n_{k}}\le1$, by the dominated
convergence theorem, $R_{0}q=R_{0}\lim_{k}q_{n_{k}}=\lim_{k}R_{0}q_{n_{k}}=p$.
Therefore, $q\in\mathcal{Q}_{f}$.} Therefore $\mathcal{Q}_{f}$ is compact. Finally, the integral operator
shown in $\hat{\mathcal{Q}}$ is continuous in $q$. Therefore, there
exists $\hat{q}\in\mathcal{Q}_{f}$ that achieves the supremum, or
equivalently $\hat{q}\in\hat{\mathcal{Q}}$.

Next, since $\mathcal{Y}=\{y_{1},...,y_{L}\}$ with $L\ge3$, there
exist at least two non-zero values in $\mathcal{Y}$, which we denote
as $y^{*}$ and $y^{\dagger}$. We can then construct
\[
\psi(e,u)=\begin{cases}
\frac{sgn(y^{\dagger})\cdot y^{\dagger}}{(1-P(z^{*},x^{*}))\cdot L}, & e\in\left\{ e\in\mathcal{E:}y(0)=y^{*}\right\} ,u\in[P(z^{*},x^{*}),\frac{1+P(z^{*},x^{*})}{2}]\\
-\frac{sgn(y^{\dagger})\cdot y^{*}}{(1-P(z^{*},x^{*}))\cdot L}, & e\in\left\{ e\in\mathcal{E:}y(0)=y^{\dagger}\right\} ,u\in[\frac{1+P(z^{*},x^{*})}{2},1]\\
0, & \text{otherwise},
\end{cases}
\]
where $sgn(y)=1$ if $y>0$ and $-1$ if $y<0$. We construct $m_{0}(u,x)$
as:
\[
m_{0}(u,x)=\sum_{y\in\mathcal{Y}}y\sum_{e:y(0)=y}\hat{q}(e|u,x)+\psi(e,u).
\]
Then we can show $m_{0}(u,x)\in\mathcal{M}_{id}$ as follows: $\forall(z,x)\in\mathcal{Z}\times\mathcal{X}$,
\begin{align*}
\int_{P(z,x)}^{1}m_{0}(u,x)du & =\sum_{y\in\mathcal{Y}}y\sum_{e:y(0)=y}\int_{P(z,x)}^{1}\{\hat{q}(e|u,x)+\psi(e,u)\}du\\
 & =E[Y|D=0,z,x]\left(1-P(z,x)\right)+y^{*}\cdot\sum_{e:y(0)=y^{*}}sgn(y^{\dagger})\cdot y^{\dagger}/L\\
 & \;\;\;\;\;\;\;\;\;\;\;\;\;\;\;\;\;\;\;\;\;\;\;\;\;\;\;\;\;\;\;\;\;\;\;\;\;\;\;\;\;\;\;\;\;\;\;\;-y^{\dagger}\cdot\sum_{e:y(0)=y^{\dagger}}sgn(y^{\dagger})\cdot y^{*}/L\\
 & =E[Y|D=0,z,x]\left(1-P(z,x)\right)+sgn(y^{\dagger})\cdot y^{\dagger}\cdot y^{*}-sgn(y^{\dagger})\cdot y^{\dagger}\cdot y^{*}\\
 & =E[Y|D=0,z,x]\left(1-P(z,x)\right),
\end{align*}
where the second equality holds as $\hat{q}\in\mathcal{Q}_{f}(x^{*})$,
the third equality is by the definition of $\psi$ and $P(z,x)\le P(z^{*},x^{*})$
$\forall(z,x)$ and the forth equality follows by the fact that for
any $y\in\mathcal{Y}$, the cardinality of set $\left\{ e\in\mathcal{E}:g_{e}(0)=y\right\} $
is the same as $|\mathcal{Y}|=L$. Therefore, $m_{0}(u,x)$ satisfies
\eqref{eq: infoset0} and thus $m_{0}(u,x)\in\mathcal{M}_{id}$.

Next we show $m_{0}(u,x)\notin\mathcal{M}_{f}$. Note that for $x=x^{*}$
\begin{align*}
\int_{P(z^{*},x^{*})}^{\frac{1+P(z^{*},x^{*})}{2}}m_{0}(u,x^{*})du & =\sum_{y\in\mathcal{Y}}y\sum_{e:y(0)=y}\int_{P(z^{*},x^{*})}^{\frac{1+P(z^{*},x^{*})}{2}}\hat{q}(e|u,x^{*})du+\sum_{y\in\mathcal{Y}}y\sum_{e:y(0)=y}\int_{P(z^{*},x^{*})}^{\frac{1+P(z^{*},x^{*})}{2}}\psi(e,u)du\\
 & =\sup_{q\in\mathcal{Q}_{f}}\sum_{y\in\mathcal{Y}}y\sum_{e:y(0)=y}\int_{P(z^{*},x^{*})}^{\frac{1+P(z^{*},x^{*})}{2}}q(e|u,x^{*})du+\frac{1}{2}sgn(y^{\dagger})y^{\dagger}\\
 & >\sup_{q\in\mathcal{Q}_{f}}\sum_{y\in\mathcal{Y}}y\sum_{e:y(0)=y}\int_{P(z^{*},x^{*})}^{\frac{1+P(z^{*},x^{*})}{2}}q(e|u,x^{*})du.
\end{align*}
Then, we want to show that, for any $q\in\mathcal{Q}_{f}$
\begin{align}
\sup_{\bar{q}\in\mathcal{Q}_{f}}\sum_{y\in\mathcal{Y}}y\sum_{e:y(0)=y}\int_{P(z^{*},x^{*})}^{\frac{1+P(z^{*},x^{*})}{2}}\bar{q}(e|u,x^{*})du & \ge\int_{P(z^{*},x^{*})}^{\frac{1+P(z^{*},x^{*})}{2}}\sum_{y\in\mathcal{Y}}y\sum_{e:y(0)=y}q(e|u,x^{*})du.\label{eq:interchange}
\end{align}
If this inequality is true, then we can conclude that, for any $q\in\mathcal{Q}_{f}$
\[
\int_{P(z^{*},x^{*})}^{\frac{1+P(z^{*},x^{*})}{2}}\left[m_{0}(u,x^{*})-\sum_{y\in\mathcal{Y}}y\sum_{e:y(0)=y}q(e|u,x^{*})\right]du>0.
\]
In other words, there exists some $A\subseteq[P(z^{*},x^{*}),\frac{1+P(z^{*},x^{*})}{2}]$
with non-zero measure such that for $u\in A$ and for any $q\in\mathcal{Q}_{f}$
\[
m_{0}(u,x^{*})>\sum_{y\in\mathcal{Y}}y\sum_{e:y(0)=y}q(e|u,x^{*}),
\]
which implies that $m_{0}\notin\mathcal{M}_{f}$. Therefore, we can
conclude $\mathcal{M}_{id}\not\subseteq\mathcal{M}_{f}$.

It remains to show \eqref{eq:interchange}. It is convenient to define
$q(u;x)\equiv\{q(e|u,x)\}_{e\in\mathcal{E}}$ and
\begin{align*}
\mathcal{Q}(x) & \equiv\{q(\cdot;x):\sum_{e\in\mathcal{E}}q(e|u,x)=1\text{ and }q(e|u,x)\ge0\text{ }\forall(e,u)\},\\
\mathcal{Q}_{f}(x) & \equiv\Big\{ q(\cdot;x)\in L^{2}:q(\cdot;x)\in\mathcal{Q}(x)\text{ and satisfies equation }(\ref{eq:constr})\text{ for given }x\Big\},
\end{align*}
and simplify notation as $q^{*}(u)\equiv q(u;x^{*})\equiv\{q(e|u,x^{*})\}_{e\in\mathcal{E}}$
with $q_{e}^{*}(u)\equiv q(e|u,x^{*})$ and $\mathcal{Q}_{f}^{*}\equiv\mathcal{Q}_{f}(x^{*})$.
Then,
\begin{align*}
\sup_{q\in\mathcal{Q}_{f}}\sum_{y\in\mathcal{Y}}y\sum_{e:y(0)=y}\int_{P(z^{*},x^{*})}^{\frac{1+P(z^{*},x^{*})}{2}}q(e|u,x^{*})du & \ge\sup_{q^{*}\in\mathcal{Q}_{f}^{*}}\sum_{y\in\mathcal{Y}}y\sum_{e:y(0)=y}\int_{P(z^{*},x^{*})}^{\frac{1+P(z^{*},x^{*})}{2}}q(e|u,x^{*})du.
\end{align*}
For the right-hand side term in the inequality, we want to use Proposition
\ref{prop:interchange} to interchange the supremum and the integral.
In the proposition, let $q^{*}(\cdot):\mathcal{U}\rightarrow\mathcal{V}$
with $\mathcal{U}=[P(z^{*},x^{*}),(1+P(z^{*},x^{*}))/2]$ and $\mathcal{V}=[0,1]^{L^{4}}$
and $h(u,q^{*}(u))=-\sum_{y\in\mathcal{Y}}y\sum_{e:y(0)=y}q_{e}^{*}(u)$.
First, since $q^{*}$ is assumed to be continuous (Assumption EC),
$\mathcal{Q}_{f}^{*}\subset C_{c}([0,1])$. Next, note that $-\sum_{y\in\mathcal{Y}}y\sum_{e:y(0)=y}q_{e}^{*}(\cdot)\in L^{1}(\mathcal{U},F)$
where $F$ is the CDF of $U$ (i.e., an identity function) because
$q_{e}^{*}(\cdot)\in L^{1}(\mathcal{U},F)$. Therefore, $\mathcal{Q}_{f}^{*}\cap\mathcal{D}(h)=\mathcal{Q}_{f}^{*}$.
Finally, $h$ is linear and thus trivially convex. Therefore, by Proposition
\ref{prop:interchange}, $\Gamma(u)=\text{cl}\{q^{*}(u):q^{*}\in\mathcal{Q}_{f}^{*}\}$
and
\begin{align*}
\sup_{q^{*}\in\mathcal{Q}_{f}^{*}}\int_{P(z^{*},x^{*})}^{\frac{1+P(z^{*},x^{*})}{2}}\sum_{y\in\mathcal{Y}}y\sum_{e:y(0)=y}q_{e}^{*}(u)du & =\int_{P(z^{*},x^{*})}^{\frac{1+P(z^{*},x^{*})}{2}}\sum_{y\in\mathcal{Y}}y\sum_{e:y(0)=y}\sup\{q^{*}(u):q^{*}\in\mathcal{Q}_{f}^{*}\}du.
\end{align*}
But because, for any given $u$, $\sup\{q^{*}(u):q^{*}\in\mathcal{Q}_{f}^{*}\}\ge q^{*}(u)$
for any $q^{*}\in\mathcal{Q}_{f}^{*}$, we have
\begin{align*}
\int_{P(z^{*},x^{*})}^{\frac{1+P(z^{*},x^{*})}{2}}\sum_{y\in\mathcal{Y}}y\sum_{e:y(0)=y}\sup\{q^{*}(u):q^{*}\in\mathcal{Q}_{f}^{*}\}du & \ge\int_{P(z^{*},x^{*})}^{\frac{1+P(z^{*},x^{*})}{2}}\sum_{y\in\mathcal{Y}}y\sum_{e:y(0)=y}q^{*}(u)du
\end{align*}
for any $q^{*}\in\mathcal{Q}_{f}^{*}$. This proves the desired inequality
\eqref{eq:interchange}. $\square$

\subsection{Proof of Lemma \ref{lem:direction}}

Let $\Delta_{w,w'}\equiv Y(1,w)-Y(0,w')$. We first prove (i), namely
the identifiability of $\tau_{LATE}(w,w')$. Observe that
\begin{align*}
 & E[\Delta_{w,w'}|P(0)\le U\le P(1)]\\
 & =E[\Delta_{w,w}|P(0)\le U\le P(1)]+E[Y(0,w)-Y(0,w')|P(0)\le U\le P(1)]
\end{align*}
The first term can be identified as
\begin{align*}
E[\Delta_{w,w}|P(0)\le U\le P(1)] & =\frac{E[Y|Z=1,W=w]-E[Y|Z=0,W=w]}{P(1)-P(0)}
\end{align*}
under Assumptions EX and SEL(a) and by a simple extension of the proof
in \citet{imbens1994identification}. For the second term, it suffices
to identify $E[Y(0,w)|P(0)\le U\le P(1)]$ for any given $w$. Note
that
\begin{align*}
(1-P(z))E[Y|D=0,Z=z,W=w] & =(1-P(z))E[Y(0,w)|U>P(z)]\\
 & =\int_{P(z)}^{1}E[Y(0,w)|U=u]du
\end{align*}
by Assumptions EX and SEL(a). Therefore,
\begin{align*}
 & E[Y(0,w)|P(0)\le U\le P(1)]\\
 & =\frac{1}{P(1)-P(0)}\int_{P(0)}^{P(1)}E[Y(0,w)|U=u]du\\
 & =\frac{(1-P(0))E[Y|D=0,Z=0,W=w]-(1-P(1))E[Y|D=0,Z=1,W=w]}{P(1)-P(0)}.
\end{align*}
We prove (ii) for Assumption U. Let $\Delta_{w}\equiv\Delta_{w,w}\equiv Y(1,w)-Y(0,w)$.
Suppose $\tau_{LATE}(w,w)>0$. Then, because $P(1)-P(0)>0$,
\begin{align}
0 & <\int_{P(0)}^{P(1)}E[\Delta_{w}|U=u]du\nonumber \\
 & =\int_{P(0)}^{P(1)}\{E[\Delta_{w}|\Delta_{w}>0,U=u]\Pr[\Delta_{w}>0|U=u]\nonumber \\
 & \qquad\qquad+E[\Delta_{w}|\Delta_{w}\le0,U=u]\Pr[\Delta_{w}\le0|U=u]\}du.\label{eq:late}
\end{align}
Then it cannot be that $\Pr[\Delta_{w}\le0|U=u]=1$ $\forall u$ because
$E[\Delta_{w}|\Delta_{w}\le0,U=u]\le0$ $\forall u$, which contradicts
$\tau_{LATE}(w,w)>0$. A symmetric argument proves (ii) for Assumption
U$^{*}$ by using $(w,w')$ instead of $(w,w)$. To prove (iii), note
that \eqref{eq:late} yields 
\[
\int_{P(0)}^{P(1)}\left\{ \Pr[\Delta_{w}=1|U=u]-\Pr[\Delta_{w}=-1|U=u]\right\} du
\]
because $Y(d,w)\in\{0,1\}$. Then it cannot be that $\Pr[\Delta_{w}\ge0|U=u]\le\Pr[\Delta_{w}\le0|U=u]$
$\forall u$ because this inequality is equivalent to $\Pr[\Delta_{w}=1|U=u]\le\Pr[\Delta_{w}=-1|U=u]$
as $\Pr[\Delta_{w}=0|U=u]$ cancels out, which then contradicts $\tau_{LATE}(w,w)>0$.
The proof with Assumptions SEL(b) and EX(b) is analogous, and thus
omitted. $\square$

\subsection{Proof of Theorem \ref{lem:nonredundancy}}

Suppress $X$ for simplicity. In proving the claim of the theorem
for $W$, we fix $Z=z$. We first prove with Case (a). To simplify
notation, define $\tilde{e}$ be the decimal transform of $e\equiv(y(0,0),y(0,1),y(1,0),y(1,1))$
defined in the text, where its value corresponds to ``\#'' in Table
\ref{tab:16maps}. Define the r.v. $\tilde{\epsilon}$ accordingly.
Also, let $q(\tilde{e}_{1},...,\tilde{e}_{J}|u)\equiv\Pr[\tilde{\epsilon}\in\{\tilde{e}_{1},...,\tilde{e}_{J}\}|u]=\sum_{j=1}^{J}q(\tilde{e}_{j}|u)$.
Based on Table \eqref{tab:16maps}, we can easily derive
\begin{align*}
p(1,1|z,1) & =\int_{0}^{P(z)}\sum_{\tilde{e}:y(1,1)=1}q(\tilde{e}|u)du=\int_{0}^{P(z)}q(9,...,16|u)du,\\
p(1,1|z,0) & =\int_{0}^{P(z)}\sum_{\tilde{e}:y(1,0)=1}q(\tilde{e}|u)du=\int_{0}^{P(z)}q(5,...,8,13,...,16|u)du,\\
p(1,0|z,1) & =\int_{P(z)}^{1}\sum_{\tilde{e}:y(0,1)=1}q(\tilde{e}|u)du=\int_{P(z)}^{1}q(3,4,7,8,11,12,15,16|u)du,\\
p(1,0|z,0) & =\int_{P(z)}^{1}\sum_{\tilde{e}:y(0,0)=1}q(\tilde{e}|u)du=\int_{P(z)}^{1}q(2,4,6,8,10,12,14,16|u)du.
\end{align*}
Define the operator
\begin{align*}
T_{z}^{d}q^{\tilde{e}} & \equiv\int_{\mathcal{U}_{z}^{d}}q(\tilde{e}|u)du.
\end{align*}
Then, for the r.h.s. $(p_{11|z1},p_{11|z0},p_{10|z1},p_{10|z0})'$
of the constraints in \eqref{eq:constr2} that correspond to $Z=z$,
the corresponding l.h.s. is
\begin{align*}
 & \left(\begin{array}{c}
\int_{0}^{P(z)}q(9,...,16|u)du\\
\int_{0}^{P(z)}q(5,...,8,13,...,16|u)du\\
\int_{P(z)}^{1}q(3,4,7,8,11,12,15,16|u)du\\
\int_{P(z)}^{1}q(2,4,6,8,10,12,14,16|u)du
\end{array}\right)\\
 & =\left(\begin{array}{cccccccccccccccc}
0 & 0 & 0 & 0 & 0 & 0 & 0 & 0 & T_{z}^{1} & T_{z}^{1} & T_{z}^{1} & T_{z}^{1} & T_{z}^{1} & T_{z}^{1} & T_{z}^{1} & T_{z}^{1}\\
0 & 0 & 0 & 0 & T_{z}^{1} & T_{z}^{1} & T_{z}^{1} & T_{z}^{1} & 0 & 0 & 0 & 0 & T_{z}^{1} & T_{z}^{1} & T_{z}^{1} & T_{z}^{1}\\
0 & 0 & T_{z}^{0} & T_{z}^{0} & 0 & 0 & T_{z}^{0} & T_{z}^{0} & 0 & 0 & T_{z}^{0} & T_{z}^{0} & 0 & 0 & T_{z}^{0} & T_{z}^{0}\\
0 & T_{z}^{0} & 0 & T_{z}^{0} & 0 & T_{z}^{0} & 0 & T_{z}^{0} & 0 & T_{z}^{0} & 0 & T_{z}^{0} & 0 & T_{z}^{0} & 0 & T_{z}^{0}
\end{array}\right)q\\
 & \equiv Tq,
\end{align*}
where $T$ is a matrix of operators implicitly defined and $q(u)\equiv(q(1|u),....,q(16|u))$.
Now for $q\in\mathcal{Q}_{K}$, define a $16K$-vector
\begin{align*}
\theta & \equiv\left(\begin{array}{c}
\theta^{1}\\
\vdots\\
\theta^{16}
\end{array}\right)
\end{align*}
where, for each $\tilde{e}\in\{1,...,16\}$, $\theta^{\tilde{e}}\equiv(\theta_{1}^{\tilde{e}},...,\theta_{K}^{\tilde{e}})'$.
Similarly, let $b(u)\equiv(b_{1}(u),...,b_{K}(u))'$. Then, we have
$q(\tilde{e}|u)=b(u)'\theta^{\tilde{e}}$. Let $H$ be a $16\times16$
diagonal matrix of 1's and 0's that imposes additional identifying
assumptions on the outcome data-generating process. In this proof,
$H$ is used to incorporate Assumption R(i). Given $H$, the constraints
in \eqref{eq:constr2} (that correspond to $Z=z$) can be written
as
\begin{align*}
THq & =\left\{ TH\otimes b'\right\} \theta=(p_{11|z1},p_{11|z0},p_{10|z1},p_{10|z0})'
\end{align*}
for $q\in\mathcal{Q}_{K}$ and $\theta\in\Theta_{K}$.

Now, we prove the claim of the theorem. Suppose the claim is not true,
i.e., the even rows are linearly dependent to odd rows in $TH$. Given
the form of $T$, which has full rank under Assumption R(ii)(a), this
linear dependence only occurs when $H$ is such that $H_{jj}=1$ for
$j\in\{1,4,13,16\}$ and $0$ otherwise. But, according to Table \ref{tab:16maps},
this implies that $\Pr[Y(d,w)\neq Y(d,w')]=0$ for all $d$ and $w\neq w'$,
which contradicts Assumption R(i). This proves the theorem for Case
(a).

Now we move to prove the theorem for Case (b), analogous to the previous
case. For every $z$, we can derive
\begin{align*}
p(1,1|z,1) & =\int_{0}^{P(z,1)}\sum_{\tilde{e}:y(1,1)=1}q(\tilde{e}|u)du=\int_{0}^{P(z,1)}q(9,...,16|u)du,\\
p(1,1|z,0) & =\int_{0}^{P(z,0)}\sum_{\tilde{e}:y(1,0)=1}q(\tilde{e}|u)du=\int_{0}^{P(z,0)}q(5,...,8,13,...,16|u)du,\\
p(1,0|z,1) & =\int_{P(z,1)}^{1}\sum_{\tilde{e}:y(0,1)=1}q(\tilde{e}|u)du=\int_{P(z,1)}^{1}q(3,4,7,8,11,12,15,16|u)du,\\
p(1,0|z,0) & =\int_{P(z,0)}^{1}\sum_{\tilde{e}:y(0,0)=1}q(\tilde{e}|u)du=\int_{P(z,0)}^{1}q(2,4,6,8,10,12,14,16|u)du.
\end{align*}
Define
\begin{align*}
T_{z,w}^{d}q^{\tilde{e}} & \equiv\int_{\mathcal{U}_{z,w}^{d}}q(\tilde{e}|u)du
\end{align*}
where $\mathcal{U}_{z,w}^{d}$ can be analogously defined. Then,
\begin{align*}
 & \left(\begin{array}{c}
\int_{0}^{P(z,w)}q(9,...,16|u)du\\
\int_{0}^{P(z,w')}q(5,...,8,13,...,16|u)du\\
\int_{P(z,w)}^{1}q(3,4,7,8,11,12,15,16|u)du\\
\int_{P(z,w')}^{1}q(2,4,6,8,10,12,14,16|u)du
\end{array}\right)\\
 & =\tiny\left(\begin{array}{cccccccccccccccc}
0 & 0 & 0 & 0 & 0 & 0 & 0 & 0 & T_{z,w}^{1} & T_{z,w}^{1} & T_{z,w}^{1} & T_{z,w}^{1} & T_{z,w}^{1} & T_{z,w}^{1} & T_{z,w}^{1} & T_{z,w}^{1}\\
0 & 0 & 0 & 0 & T_{z,w'}^{1} & T_{z,w'}^{1} & T_{z,w'}^{1} & T_{z,w'}^{1} & 0 & 0 & 0 & 0 & T_{z,w'}^{1} & T_{z,w'}^{1} & T_{z,w'}^{1} & T_{z,w'}^{1}\\
0 & 0 & T_{z,w}^{0} & T_{z,w}^{0} & 0 & 0 & T_{z,w}^{0} & T_{z,w}^{0} & 0 & 0 & T_{z,w}^{0} & T_{z,w}^{0} & 0 & 0 & T_{z,w}^{0} & T_{z,w}^{0}\\
0 & T_{z,w'}^{0} & 0 & T_{z,w'}^{0} & 0 & T_{z,w'}^{0} & 0 & T_{z,w'}^{0} & 0 & T_{z,w'}^{0} & 0 & T_{z,w'}^{0} & 0 & T_{z,w'}^{0} & 0 & T_{z,w'}^{0}
\end{array}\right)q\\
 & \equiv\tilde{T}q,
\end{align*}
where $\tilde{T}$ is a matrix of operators implicitly defined. Then,
inserting $H$, the constraint becomes
\begin{align*}
\tilde{T}Hq & =\left\{ \tilde{T}H\otimes b'\right\} \theta=(p_{11|z1},p_{11|z0},p_{10|z1},p_{10|z0})'
\end{align*}
for $q\in\mathcal{Q}_{K}$ and $\theta\in\Theta_{K}$. Then the remaining
argument is similar to the previous case by adopting the idea of Lemma
\ref{lem:full_rank_B} (and taking $B=\left\{ \tilde{T}H\otimes b'\right\} $)
to account for the change in the range of integral due to $P(z,w)$
being a function of $w$. This completes the proof for $W$. The proof
for $Z$ can be analogously done and shown in Lemma \ref{lem:full_rank_B}.
$\square$

\subsection{Proof of Lemma \ref{lem:full_rank_B}}

Suppose $f(k)$ is a constant function: $f(k)=c$ for all $k$ for
some $c$. By properties of Bernstein polynomials, it satisfies that
\begin{align*}
\int_{0}^{P(z_{1})}b_{k,K}(u)du=\sum_{i=k+1}^{K+1}\frac{b_{i,K+1}\left(P(z_{1})\right)}{K+1} & =\sum_{i=k+1}^{K+1}\left(\begin{array}{c}
K+1\\
i
\end{array}\right)\frac{P(z_{1})^{i}\left(1-P(z_{1})\right)^{K+1-i}}{K+1},\\
\int_{0}^{P(z_{2})}b_{k,K}(u)du=\sum_{i=k+1}^{K+1}\frac{b_{i,K+1}\left(P(z_{2})\right)}{K+1} & =\sum_{i=k+1}^{K+1}\left(\begin{array}{c}
K+1\\
i
\end{array}\right)\frac{P(z_{2})^{i}\left(1-P(z_{2})\right)^{K+1-i}}{K+1}.
\end{align*}
Let $k=m$ for some $m\in\{1,...,K\}$. Then, $f(m)=c$ is equivalent
to
\begin{align}
\sum_{i=m+1}^{K+1}\left(\begin{array}{c}
K+1\\
i
\end{array}\right)P(z_{1})^{i}\left(1-P(z_{1})\right)^{K+1-i} & =c\sum_{i=m+1}^{K+1}\left(\begin{array}{c}
K+1\\
i
\end{array}\right)P(z_{2})^{i}\left(1-P(z_{2})\right)^{K+1-i}.\label{eq:k_m}
\end{align}
Similarly let $k=m-1$, then $f(m-1)=c$ is equivalent to
\begin{align}
\sum_{i=m}^{K+1}\left(\begin{array}{c}
K+1\\
i
\end{array}\right)P(z_{1})^{i}\left(1-P(z_{1})\right)^{K+1-i} & =c\sum_{i=m}^{K+1}\left(\begin{array}{c}
K+1\\
i
\end{array}\right)P(z_{2})^{i}\left(1-P(z_{2})\right)^{K+1-i}.\label{eq:k_m-1}
\end{align}
By subtracting \eqref{eq:k_m-1} from \eqref{eq:k_m}, we have
\begin{align*}
\left(\begin{array}{c}
K+1\\
m
\end{array}\right)P(z_{1})^{m}\left(1-P(z_{1})\right)^{K+1-m} & =c\left(\begin{array}{c}
K+1\\
m
\end{array}\right)P(z_{2})^{m}\left(1-P(z_{2})\right)^{K+1-m}
\end{align*}
or equivalently,
\begin{align*}
c & =\frac{P(z_{1})^{m}\left(1-P(z_{1})\right)^{K+1-m}}{P(z_{2})^{m}\left(1-P(z_{2})\right)^{K+1-m}}.
\end{align*}
Because this equation holds for any $m\in\{1,...,K\}$, take $m=1$,
then
\begin{equation}
c=\frac{P(z_{1})\left(1-P(z_{1})\right)^{K}}{P(z_{2})\left(1-P(z_{2})\right)^{K}},\label{eq:k_m2}
\end{equation}
and take $m=K$, then
\begin{equation}
c=\frac{P(z_{1})^{K}\left(1-P(z_{1})\right)}{P(z_{2})^{K}(1-P(z_{2}))}.\label{eq:k_m-12}
\end{equation}
But \eqref{eq:k_m2} and \eqref{eq:k_m-12} hold if and only if $P(z_{1})=P(z_{2})$,
which is a contradiction. Analogously, we can show that $\tilde{f}(k)=\frac{\int_{P(z_{1})}^{1}b_{k,K}(u)du}{\int_{P(z_{2})}^{1}b_{k,K}(u)du}$
is not a constant function. Therefore, the coefficient matrix has
full rank.

\subsection{Proof of Theorem \ref{thm:ptws_sharp}}

For any given $\bar{u}\in[0,1]$, $\overline{\tau}(\bar{u})=\sum_{e\in\mathcal{E}:y(1)=1}q_{\bar{u}}^{*}(e|\bar{u})-\sum_{e\in\mathcal{E}:y(0)=1}q_{\bar{u}}^{*}(e|\bar{u})$
for some $q_{\bar{u}}^{*}(\cdot)\equiv\{q_{\bar{u}}^{*}(e|\cdot)\}_{e\in\mathcal{\mathcal{E}}}$
in the feasible set of the LP, \eqref{eq:upper4} and \eqref{eq:constr4}.
Therefore, $\overline{\tau}(\bar{u})=\overline{\tau}_{MTE,\bar{u}}(\bar{u})$
for $\overline{\tau}_{MTE,\bar{u}}(\bar{u})=\sum_{e\in\mathcal{E}:y(1)=1}q_{\bar{u}}^{*}(e|\bar{u})-\sum_{e\in\mathcal{E}:y(0)=1}q_{\bar{u}}^{*}(e|\bar{u})$,
which is in $\mathcal{M}$ by definition. We can have a symmetric
proof for $\underline{\tau}(\cdot)$. $\square$

\subsection{Proof of Theorem \ref{thm:unif_sharp}}

Again, by the fact that $\tau_{MTE}(\cdot)=\sum_{e\in\mathcal{E}:y(1)=1}q(e|\cdot)-\sum_{e\in\mathcal{E}:y(0)=1}q(e|\cdot)$
in general, $\overline{\tau}(u)=\sum_{e\in\mathcal{E}:y(1)=1}q^{*}(e|u)-\sum_{e\in\mathcal{E}:y(0)=1}q^{*}(e|u)$
for all $u\in[0,1]$ is equivalent to $\overline{\tau}(\cdot)$ being
contained in $\mathcal{M}$, and similarly for $\underline{\tau}(\cdot)$.
$\square$

\section{\label{sec:Additional-Numerical-Exercise}Additional Numerical Exercises}

We present several additional numerical simulation results in this
section.

\subsection{\label{subsec:More-results-on}More results on the varying cardinality
of $\mathcal{Z}$ and $\mathcal{Y}$}

As showed in Section \ref{subsec:Identifying-Power-of}, the identifying
power of statistical independence of IVs becomes increasingly salient
as the cardinality of $\mathcal{Y}$ increases. An extreme case is
when $Y$ is continuous. In this scenario, we present additional simulation
results that compare our bounds with \citet{mogstad2018using}'s.
To reduce the computational burden arising from the dimension of $Y$,
we fix the order of Bernstein approximation at 5 for both our and
\citet{mogstad2018using}'s approach. Both when the endpoints of $\mathcal{Z}$
stay fixed or vary, our bounds are significantly narrower, which is
consistent with the results in the discrete scenario; see Figure \ref{fig:conti_Y}.

\begin{figure}
\begin{centering}
\includegraphics[scale=0.32]{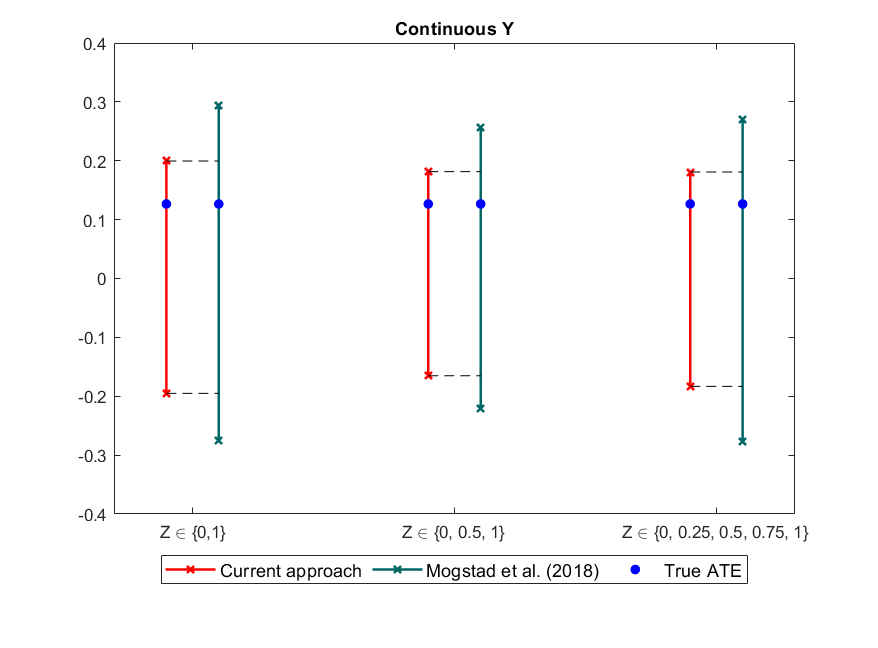}\includegraphics[scale=0.31]{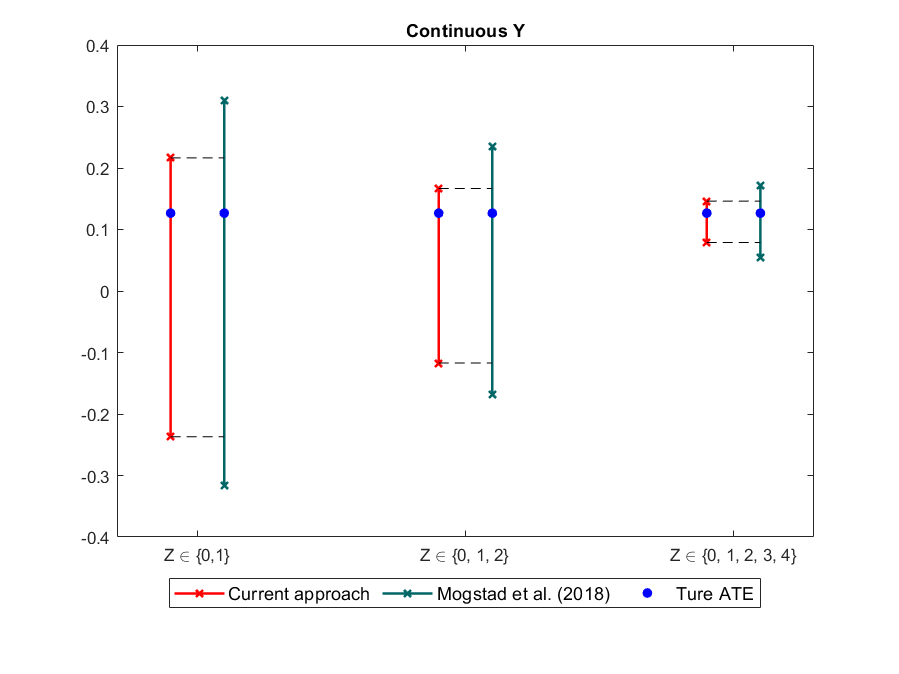}
\par\end{centering}
\caption{Full vs. Mean Independence: Bounds on ATE with Continuous $Y$}
\label{fig:conti_Y}
\end{figure}

Next, the manner in which the bounds are affected by the cardinality
of $\mathcal{Z}$ is unclear in Section \ref{subsec:Identifying-Power-of}.
This is because we fix the endpoints of $Z$. Intuitively, if the
endpoints of $\mathcal{Z}$ move further away, the bound would shrink.
Figure \ref{fig:Identification-power-from} shows that this is in
fact the case.

\begin{figure}
\begin{centering}
\includegraphics[scale=0.33]{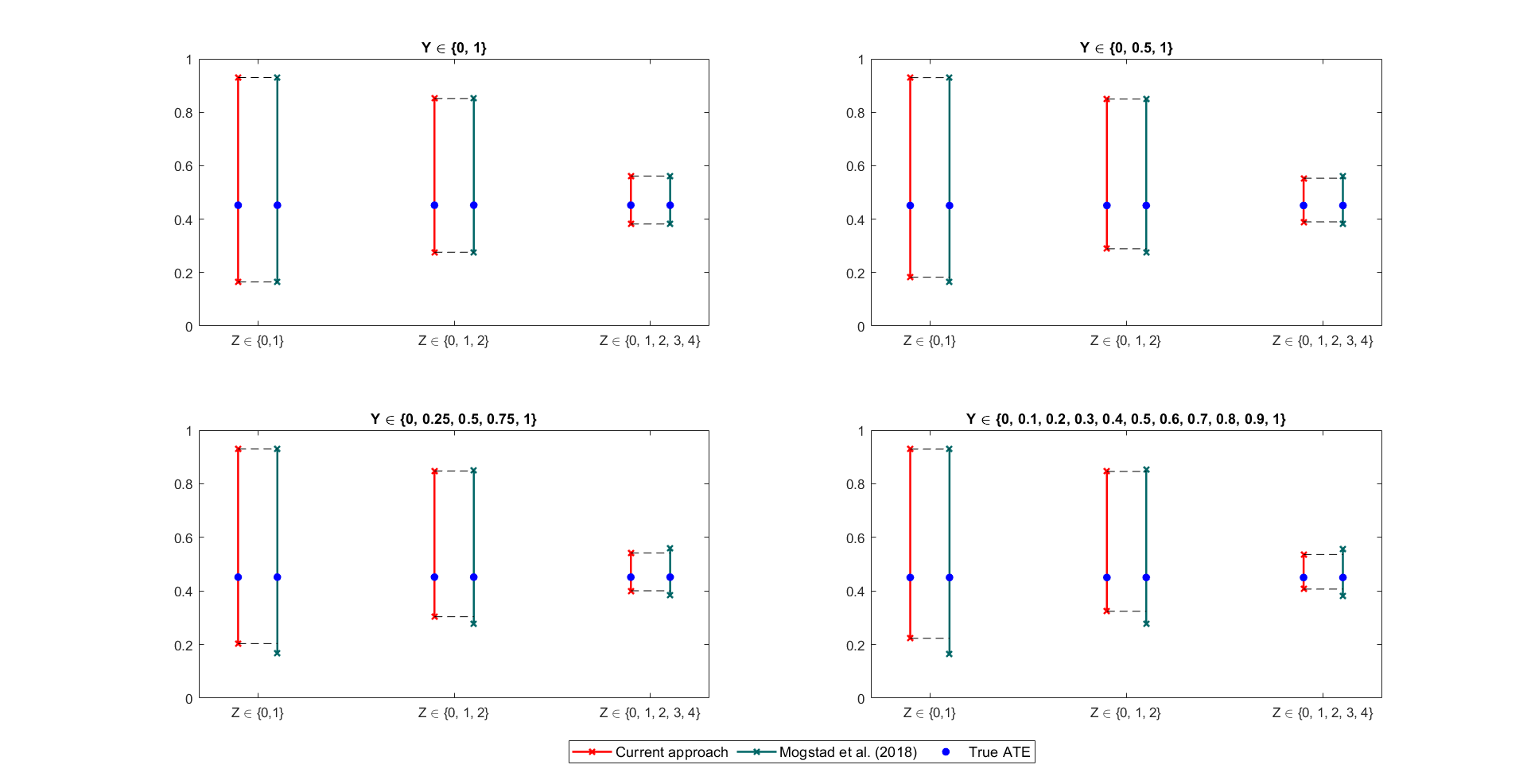}
\par\end{centering}
\caption{\label{fig:Identification-power-from}Bounds on ATE for Different
Support of $\mathcal{Z}$}
\end{figure}

\subsection{\label{subsec:Common-Exogeneous-Variable}Common Exogenous Variable}

In Section \ref{sec:Simulation}, we set $W$ to be a reverse IV in
our main DGP. It is important to note that what matters for identifying
power is the exogeneity of $W$ and \emph{not} the reverse exclusion
of $W$. To clarify this point, we compare the results derived from
a reverse IV with those from a common exogenous variable. For the
latter, we use the main DGP in Section \ref{sec:Simulation}.

Figure \ref{fig:Comparison-between-results} exhibits very similar
results between the reverse IV and the common exogenous variable.
This evidence suggests that the crucial role in improving the bounds
lies in the exogeneity of $W$, rather than its reverse exclusion. 

\begin{figure}
\begin{centering}
\includegraphics[scale=0.4]{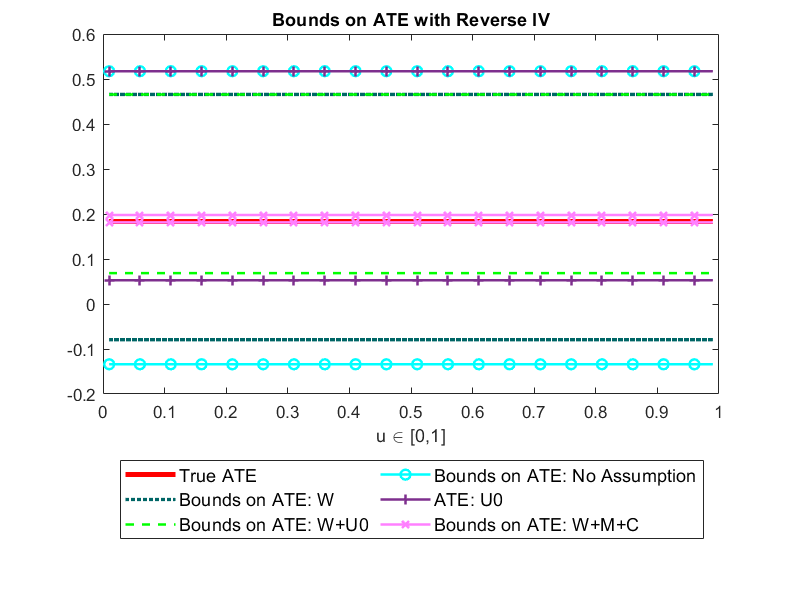}\includegraphics[scale=0.4]{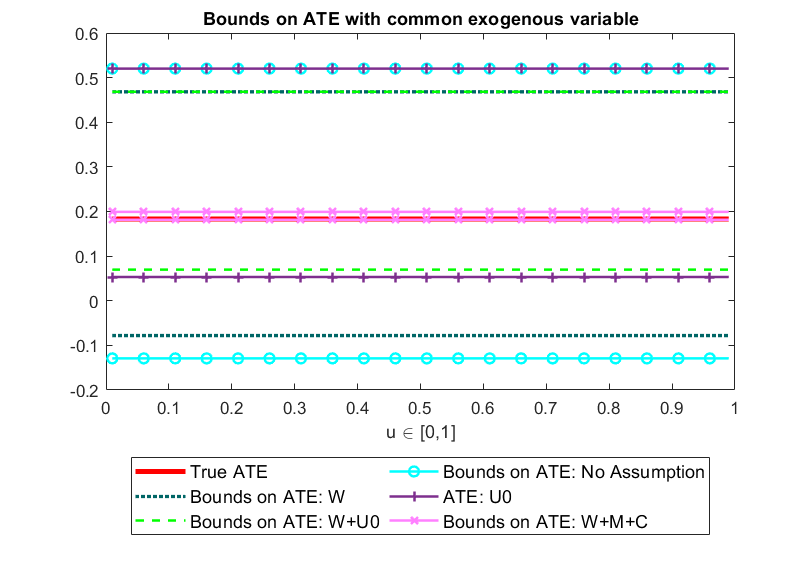}
\par\end{centering}
\caption{\label{fig:Comparison-between-results}Comparison with Common Exogenous
Variable vs. Reverse IV}
\end{figure}

\subsection{\label{subsec:Discussion-on-misspecification}Discussions on Misspecification}

For sieve approximations, implicit smoothness assumptions are inherent.
It would be crucial to evaluate the extent of misspecification using
a DGP that exhibits the lack of smoothness. To demonstrate the possibility
of misspecification within our approach and further illustrate the
choice of $K$, we select a DGP where the MTE function is defined
as $m_{1}(u)-m_{0}(u)=0.7\cdot|sin(2\pi u)|$. Note that the sine
function is difficult to approximate in general. Moreover, the MTE
has kink points, which the Bernstein approximation would fail to capture.
Under this DGP, the pointwise bound for the MTE function with varying
polynomial orders $K$ is depicted in Figure \ref{fig:Bounds-on-MTE}.
When $K$ is set as low as 5, the sieve approximation incurs severe
misspecification in that the bound only covers the true MTE over a
brief interval among the compliers. From $K=10$ onwards, the bounds
cover most of the true MTE except the middle part. While all polynomials
orders fail to capture the kink point, the magnitude of misspecification
diminishes when $K$ is large.

\begin{figure}
\begin{centering}
\includegraphics[scale=0.4]{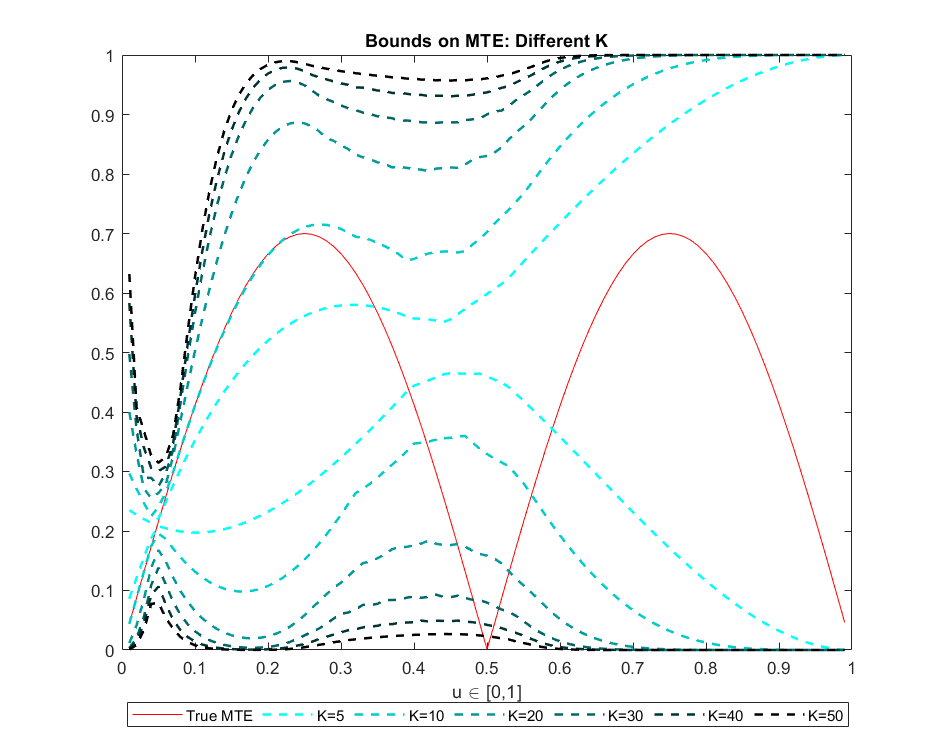}
\par\end{centering}
\caption{\label{fig:Bounds-on-MTE}Bounds on Non-Smooth MTE with Different
$K$}
\end{figure}

To more clearly compare the magnitude of misspecification across the
choices of $K$, we select 100 evenly-spaced points on $\mathcal{U}=[0,1]$,
and compute the Hausdorff distance for the set of $u$ where the MTE
falls outside the identified set. Figure \ref{fig:Misspecification-with-Different}
show a significant reduction in the magnitude of misspecification
when $K$ takes larger values.

\begin{figure}
\begin{centering}
\includegraphics[scale=0.55]{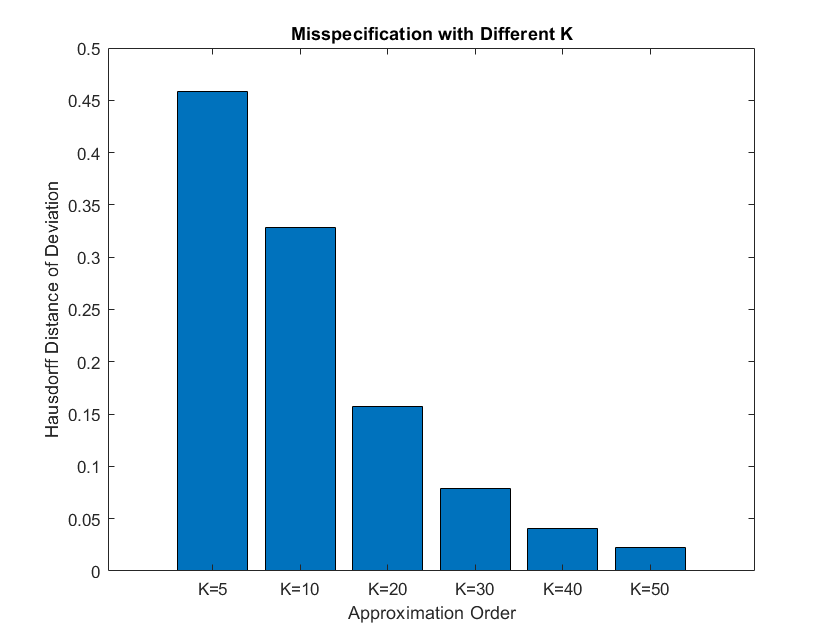}
\par\end{centering}
\caption{\label{fig:Misspecification-with-Different}Misspecification with
Different $K$}
\end{figure}

\end{appendix}

\bibliographystyle{ecta}
\bibliography{genLATE_080323}

\begin{thebibliography}{70}
\newcommand{\enquote}[1]{``#1''}
\expandafter\ifx\csname natexlab\endcsname\relax\def\natexlab#1{#1}\fi

\bibitem[\protect\citeauthoryear{Angrist and Fernandez-Val}{Angrist and
  Fernandez-Val}{2010}]{angrist2010extrapolate}
\textsc{Angrist, J. and I.~Fernandez-Val} (2010): \enquote{Extrapolate-ing:
  External validity and overidentification in the late framework,} Tech. rep.,
  National Bureau of Economic Research.

\bibitem[\protect\citeauthoryear{Balat and Han}{Balat and
  Han}{2020}]{balat2020multiple}
\textsc{Balat, J. and S.~Han} (2020): \enquote{Multiple treatments with
  strategic substitutes,} \emph{Forthcoming in Journal of Econometrics}.

\bibitem[\protect\citeauthoryear{Balke and Pearl}{Balke and
  Pearl}{1997}]{balke1997bounds}
\textsc{Balke, A. and J.~Pearl} (1997): \enquote{Bounds on treatment effects
  from studies with imperfect compliance,} \emph{Journal of the American
  Statistical Association}, 92, 1171--1176.

\bibitem[\protect\citeauthoryear{Beresteanu, Molchanov, and
  Molinari}{Beresteanu et~al.}{2011}]{beresteanu2011sharp}
\textsc{Beresteanu, A., I.~Molchanov, and F.~Molinari} (2011): \enquote{Sharp
  identification regions in models with convex moment predictions,}
  \emph{Econometrica}, 79, 1785--1821.

\bibitem[\protect\citeauthoryear{Bertanha and Imbens}{Bertanha and
  Imbens}{2019}]{bertanha2019external}
\textsc{Bertanha, M. and G.~W. Imbens} (2019): \enquote{External validity in
  fuzzy regression discontinuity designs,} \emph{Journal of Business \&
  Economic Statistics}, 1--39.

\bibitem[\protect\citeauthoryear{Bhattacharya, Shaikh, and
  Vytlacil}{Bhattacharya et~al.}{2008}]{bhattacharya2008treatment}
\textsc{Bhattacharya, J., A.~M. Shaikh, and E.~Vytlacil} (2008):
  \enquote{Treatment effect bounds under monotonicity assumptions: an
  application to Swan-Ganz catheterization,} \emph{American Economic Review},
  98, 351--56.

\bibitem[\protect\citeauthoryear{Brinch, Mogstad, and Wiswall}{Brinch
  et~al.}{2017}]{brinch2017beyond}
\textsc{Brinch, C.~N., M.~Mogstad, and M.~Wiswall} (2017): \enquote{Beyond LATE
  with a discrete instrument,} \emph{Journal of Political Economy}, 125,
  985--1039.

\bibitem[\protect\citeauthoryear{Carnicer and Pena}{Carnicer and
  Pena}{1993}]{carnicer1993shape}
\textsc{Carnicer, J.~M. and J.~M. Pena} (1993): \enquote{Shape preserving
  representations and optimality of the Bernstein basis,} \emph{Advances in
  Computational Mathematics}, 1, 173--196.

\bibitem[\protect\citeauthoryear{Chen}{Chen}{2007}]{chen2007large}
\textsc{Chen, X.} (2007): \enquote{Large sample sieve estimation of
  semi-nonparametric models,} \emph{Handbook of econometrics}, 6, 5549--5632.

\bibitem[\protect\citeauthoryear{Chen and Christensen}{Chen and
  Christensen}{2018}]{chen2018optimal}
\textsc{Chen, X. and T.~M. Christensen} (2018): \enquote{Optimal sup-norm rates
  and uniform inference on nonlinear functionals of nonparametric IV
  regression,} \emph{Quantitative Economics}, 9, 39--84.

\bibitem[\protect\citeauthoryear{Chen, Tamer, and Torgovitsky}{Chen
  et~al.}{2011}]{chen2011sensitivity}
\textsc{Chen, X., E.~T. Tamer, and A.~Torgovitsky} (2011): \enquote{Sensitivity
  analysis in semiparametric likelihood models,} .

\bibitem[\protect\citeauthoryear{Chen, Tan, Liu, and Xie}{Chen
  et~al.}{2017}]{chen2017approximation}
\textsc{Chen, X., J.~Tan, Z.~Liu, and J.~Xie} (2017): \enquote{Approximation of
  functions by a new family of generalized Bernstein operators,} \emph{Journal
  of Mathematical Analysis and Applications}, 450, 244--261.

\bibitem[\protect\citeauthoryear{Chernozhukov and Hansen}{Chernozhukov and
  Hansen}{2005}]{chernozhukov2005iv}
\textsc{Chernozhukov, V. and C.~Hansen} (2005): \enquote{An {IV} model of
  quantile treatment effects,} \emph{Econometrica}, 73, 245--261.

\bibitem[\protect\citeauthoryear{Chernozhukov, Lee, and Rosen}{Chernozhukov
  et~al.}{2013}]{chernozhukov2013intersection}
\textsc{Chernozhukov, V., S.~Lee, and A.~M. Rosen} (2013):
  \enquote{Intersection bounds: estimation and inference,} \emph{Econometrica},
  81, 667--737.

\bibitem[\protect\citeauthoryear{Chiburis}{Chiburis}{2010{\natexlab{a}}}]{Chi10}
\textsc{Chiburis, R.~C.} (2010{\natexlab{a}}): \enquote{Semiparametric bounds
  on treatment effects,} \emph{Journal of Econometrics}, 159, 267--275.

\bibitem[\protect\citeauthoryear{Chiburis}{Chiburis}{2010{\natexlab{b}}}]{chiburis2010semiparametric}
---\hspace{-.1pt}---\hspace{-.1pt}--- (2010{\natexlab{b}}):
  \enquote{Semiparametric bounds on treatment effects,} \emph{Journal of
  Econometrics}, 159, 267--275.

\bibitem[\protect\citeauthoryear{Coolidge}{Coolidge}{1949}]{coolidge1949story}
\textsc{Coolidge, J.~L.} (1949): \enquote{The story of the binomial theorem,}
  \emph{The American Mathematical Monthly}, 56, 147--157.

\bibitem[\protect\citeauthoryear{Cornelissen, Dustmann, Raute, and
  Sch{\"o}nberg}{Cornelissen et~al.}{2016}]{cornelissen2016late}
\textsc{Cornelissen, T., C.~Dustmann, A.~Raute, and U.~Sch{\"o}nberg} (2016):
  \enquote{From LATE to MTE: Alternative methods for the evaluation of policy
  interventions,} \emph{Labour Economics}, 41, 47--60.

\bibitem[\protect\citeauthoryear{Deb, Kitamura, Quah, and Stoye}{Deb
  et~al.}{2023}]{deb2023revealed}
\textsc{Deb, R., Y.~Kitamura, J.~K. Quah, and J.~Stoye} (2023):
  \enquote{Revealed price preference: theory and empirical analysis,} \emph{The
  Review of Economic Studies}, 90, 707--743.

\bibitem[\protect\citeauthoryear{Dehejia, Pop-Eleches, and Samii}{Dehejia
  et~al.}{2019}]{dehejia2019local}
\textsc{Dehejia, R., C.~Pop-Eleches, and C.~Samii} (2019): \enquote{From local
  to global: External validity in a fertility natural experiment,}
  \emph{Journal of Business \& Economic Statistics}, 1--27.

\bibitem[\protect\citeauthoryear{Dunlop, Manheim, Song, and Chang}{Dunlop
  et~al.}{2002}]{dunlop2002gender}
\textsc{Dunlop, D.~D., L.~M. Manheim, J.~Song, and R.~W. Chang} (2002):
  \enquote{Gender and ethnic/racial disparities in health care utilization
  among older adults,} \emph{The Journals of Gerontology Series B:
  Psychological sciences and social sciences}, 57, S221--S233.

\bibitem[\protect\citeauthoryear{Eisenhauer, Heckman, and Vytlacil}{Eisenhauer
  et~al.}{2015}]{eisenhauer2015generalized}
\textsc{Eisenhauer, P., J.~J. Heckman, and E.~Vytlacil} (2015): \enquote{The
  generalized Roy model and the cost-benefit analysis of social programs,}
  \emph{Journal of Political Economy}, 123, 413--443.

\bibitem[\protect\citeauthoryear{Fang, Santos, Shaikh, and Torgovitsky}{Fang
  et~al.}{2020}]{fang2020inference}
\textsc{Fang, Z., A.~Santos, A.~Shaikh, and A.~Torgovitsky} (2020):
  \enquote{Inference for large-scale linear systems with known coefficients,}
  \emph{University of Chicago, Becker Friedman Institute for Economics Working
  Paper}.

\bibitem[\protect\citeauthoryear{Finkelstein, Taubman, Wright, Bernstein,
  Gruber, Newhouse, Allen, Baicker, and Group}{Finkelstein
  et~al.}{2012}]{finkelstein2012oregon}
\textsc{Finkelstein, A., S.~Taubman, B.~Wright, M.~Bernstein, J.~Gruber, J.~P.
  Newhouse, H.~Allen, K.~Baicker, and O.~H.~S. Group} (2012): \enquote{The
  Oregon health insurance experiment: evidence from the first year,} \emph{The
  Quarterly journal of economics}, 127, 1057--1106.

\bibitem[\protect\citeauthoryear{Firpo and Ridder}{Firpo and
  Ridder}{2019}]{firpo2019partial}
\textsc{Firpo, S. and G.~Ridder} (2019): \enquote{Partial identification of the
  treatment effect distribution and its functionals,} \emph{Journal of
  Econometrics}, 213, 210--234.

\bibitem[\protect\citeauthoryear{Freyberger and Horowitz}{Freyberger and
  Horowitz}{2015}]{freyberger2015identification}
\textsc{Freyberger, J. and J.~L. Horowitz} (2015): \enquote{Identification and
  shape restrictions in nonparametric instrumental variables estimation,}
  \emph{Journal of Econometrics}, 189, 41--53.

\bibitem[\protect\citeauthoryear{Goodman}{Goodman}{1989}]{goodman1989shape}
\textsc{Goodman, T.} (1989): \enquote{Shape preserving representations,} in
  \emph{Mathematical methods in computer aided geometric design}, Elsevier,
  333--351.

\bibitem[\protect\citeauthoryear{Goodman, Oru{\c{c}}, and Phillips}{Goodman
  et~al.}{1999}]{goodman1999convexity}
\textsc{Goodman, T.~N., H.~Oru{\c{c}}, and G.~M. Phillips} (1999):
  \enquote{Convexity and generalized Bernstein polynomials,} \emph{Proceedings
  of the Edinburgh Mathematical Society}, 42, 179--190.

\bibitem[\protect\citeauthoryear{Gu and Russell}{Gu and
  Russell}{2022}]{gu2022partial}
\textsc{Gu, J. and T.~M. Russell} (2022): \enquote{Partial identification in
  nonseparable binary response models with endogenous regressors,}
  \emph{Journal of Econometrics}.

\bibitem[\protect\citeauthoryear{Gunsilius}{Gunsilius}{2019}]{gunsilius2019bounds}
\textsc{Gunsilius, F.} (2019): \enquote{Bounds in continuous instrumental
  variable models,} \emph{arXiv preprint arXiv:1910.09502}.

\bibitem[\protect\citeauthoryear{Hafsa and Mandallena}{Hafsa and
  Mandallena}{2003}]{hafsa2003interchange}
\textsc{Hafsa, O.~A. and J.-P. Mandallena} (2003): \enquote{Interchange of
  infimum and integral,} \emph{Calculus of Variations and Partial Differential
  Equations}, 18, 433--449.

\bibitem[\protect\citeauthoryear{Han}{Han}{2019}]{han2019optimal}
\textsc{Han, S.} (2019): \enquote{Optimal dynamic treatment regimes and partial
  welfare ordering,} \emph{arXiv preprint arXiv:1912.10014}.

\bibitem[\protect\citeauthoryear{Han}{Han}{2020}]{han2020nonparametric}
---\hspace{-.1pt}---\hspace{-.1pt}--- (2020): \enquote{Nonparametric estimation
  of triangular simultaneous equations models under weak identification,}
  \emph{Quantitative Economics}, 11, 161--202.

\bibitem[\protect\citeauthoryear{Han}{Han}{2021}]{han2021identification}
---\hspace{-.1pt}---\hspace{-.1pt}--- (2021): \enquote{Identification in
  nonparametric models for dynamic treatment effects,} \emph{Journal of
  Econometrics}, 225, 132--147.

\bibitem[\protect\citeauthoryear{Han and Lee}{Han and
  Lee}{2019}]{han2019estimation}
\textsc{Han, S. and S.~Lee} (2019): \enquote{Estimation in a generalization of
  bivariate probit models with dummy endogenous regressors,} \emph{Journal of
  Applied Econometrics}, 34, 994--1015.

\bibitem[\protect\citeauthoryear{Han and Vytlacil}{Han and
  Vytlacil}{2017}]{han2017identification}
\textsc{Han, S. and E.~J. Vytlacil} (2017): \enquote{Identification in a
  generalization of bivariate probit models with dummy endogenous regressors,}
  \emph{Journal of Econometrics}, 199, 63--73.

\bibitem[\protect\citeauthoryear{Han and Xu}{Han and
  Xu}{2023}]{han2023quantile}
\textsc{Han, S. and H.~Xu} (2023): \enquote{On Quantile Treatment Effects, Rank
  Similarity, and Multiple Instrumental Variables,} .

\bibitem[\protect\citeauthoryear{Heckman}{Heckman}{2010}]{heckman2010building}
\textsc{Heckman, J.~J.} (2010): \enquote{Building bridges between structural
  and program evaluation approaches to evaluating policy,} \emph{Journal of
  Economic literature}, 48, 356--98.

\bibitem[\protect\citeauthoryear{Heckman and Vytlacil}{Heckman and
  Vytlacil}{2005}]{heckman2005structural}
\textsc{Heckman, J.~J. and E.~Vytlacil} (2005): \enquote{Structural equations,
  treatment effects, and econometric policy evaluation1,} \emph{Econometrica},
  73, 669--738.

\bibitem[\protect\citeauthoryear{Heckman and Vytlacil}{Heckman and
  Vytlacil}{1999}]{heckman1999local}
\textsc{Heckman, J.~J. and E.~J. Vytlacil} (1999): \enquote{Local instrumental
  variables and latent variable models for identifying and bounding treatment
  effects,} \emph{Proceedings of the national Academy of Sciences}, 96,
  4730--4734.

\bibitem[\protect\citeauthoryear{Heckman and Vytlacil}{Heckman and
  Vytlacil}{2001}]{heckman2001instrumental}
---\hspace{-.1pt}---\hspace{-.1pt}--- (2001): \emph{Instrumental variables,
  selection models, and tight bounds on the average treatment effect},
  Springer.

\bibitem[\protect\citeauthoryear{Honor{\'e} and Lleras-Muney}{Honor{\'e} and
  Lleras-Muney}{2006}]{honore2006bounds2}
\textsc{Honor{\'e}, B.~E. and A.~Lleras-Muney} (2006): \enquote{Bounds in
  competing risks models and the war on cancer,} \emph{Econometrica}, 74,
  1675--1698.

\bibitem[\protect\citeauthoryear{Honor{\'e} and Tamer}{Honor{\'e} and
  Tamer}{2006}]{honore2006bounds}
\textsc{Honor{\'e}, B.~E. and E.~Tamer} (2006): \enquote{Bounds on parameters
  in panel dynamic discrete choice models,} \emph{Econometrica}, 74, 611--629.

\bibitem[\protect\citeauthoryear{Hsieh, Shi, and Shum}{Hsieh
  et~al.}{2022}]{hsieh2022inference}
\textsc{Hsieh, Y.-W., X.~Shi, and M.~Shum} (2022): \enquote{Inference on
  estimators defined by mathematical programming,} \emph{Journal of
  Econometrics}, 226, 248--268.

\bibitem[\protect\citeauthoryear{Hurd and McGarry}{Hurd and
  McGarry}{1997}]{hurd1997medical}
\textsc{Hurd, M.~D. and K.~McGarry} (1997): \enquote{Medical insurance and the
  use of health care services by the elderly,} \emph{Journal of Health
  Economics}, 16, 129--154.

\bibitem[\protect\citeauthoryear{Imbens and Angrist}{Imbens and
  Angrist}{1994}]{imbens1994identification}
\textsc{Imbens, G.~W. and J.~D. Angrist} (1994): \enquote{Identification and
  Estimation of Local Average Treatment Effects,} \emph{Econometrica}, 62,
  467--475.

\bibitem[\protect\citeauthoryear{Joy}{Joy}{2000}]{joy2000bernstein}
\textsc{Joy, K.~I.} (2000): \enquote{Bernstein polynomials,} \emph{On-Line
  Geometric Modeling Notes}, 13.

\bibitem[\protect\citeauthoryear{Kamat}{Kamat}{2019}]{kamat2017identification}
\textsc{Kamat, V.} (2019): \enquote{Identification with latent choice sets: The
  case of the head start impact study,} \emph{arXiv preprint arXiv:1711.02048}.

\bibitem[\protect\citeauthoryear{Kowalski}{Kowalski}{2021}]{kowalski2021reconciling}
\textsc{Kowalski, A.~E.} (2021): \enquote{Reconciling seemingly contradictory
  results from the Oregon health insurance experiment and the Massachusetts
  health reform,} \emph{The Review of Economics and Statistics}, 1--45.

\bibitem[\protect\citeauthoryear{Liu, Mourifi{\'e}, and Wan}{Liu
  et~al.}{2020}]{liu2020two}
\textsc{Liu, S., I.~Mourifi{\'e}, and Y.~Wan} (2020): \enquote{Two-way
  exclusion restrictions in models with heterogeneous treatment effects,}
  \emph{The Econometrics Journal}, 23, 345--362.

\bibitem[\protect\citeauthoryear{Machado, Shaikh, and Vytlacil}{Machado
  et~al.}{2019}]{machado2018instrumental}
\textsc{Machado, C., A.~Shaikh, and E.~Vytlacil} (2019): \enquote{Instrumental
  variables and the sign of the average treatment effect,} \emph{Journal of
  Econometrics}, 212, 522--555.

\bibitem[\protect\citeauthoryear{Manski}{Manski}{1990}]{manski1990nonparametric}
\textsc{Manski, C.~F.} (1990): \enquote{Nonparametric bounds on treatment
  effects,} \emph{The American Economic Review}, 80, 319--323.

\bibitem[\protect\citeauthoryear{Manski}{Manski}{1997}]{manski1997monotone}
---\hspace{-.1pt}---\hspace{-.1pt}--- (1997): \enquote{Monotone treatment
  response,} \emph{Econometrica: Journal of the Econometric Society},
  1311--1334.

\bibitem[\protect\citeauthoryear{Manski and Pepper}{Manski and
  Pepper}{2000}]{MP00}
\textsc{Manski, C.~F. and J.~V. Pepper} (2000): \enquote{Monotone instrumental
  variables: With an application to the returns to schooling,}
  \emph{Econometrica}, 68, 997--1010.

\bibitem[\protect\citeauthoryear{Marx}{Marx}{2020}]{marx2020sharp}
\textsc{Marx, P.} (2020): \enquote{Sharp Bounds in the Latent Index Selection
  Model,} \emph{arXiv preprint arXiv:2012.02390}.

\bibitem[\protect\citeauthoryear{Masten and Poirier}{Masten and
  Poirier}{2021}]{masten2021salvaging}
\textsc{Masten, M.~A. and A.~Poirier} (2021): \enquote{Salvaging falsified
  instrumental variable models,} \emph{Econometrica}, 89, 1449--1469.

\bibitem[\protect\citeauthoryear{Mogstad, Santos, and Torgovitsky}{Mogstad
  et~al.}{2017}]{mogstad2017using}
\textsc{Mogstad, M., A.~Santos, and A.~Torgovitsky} (2017): \enquote{Using
  Instrumental Variables for Inference about Policy Relevant Treatment
  Effects,} Tech. rep., National Bureau of Economic Research.

\bibitem[\protect\citeauthoryear{Mogstad, Santos, and Torgovitsky}{Mogstad
  et~al.}{2018}]{mogstad2018using}
---\hspace{-.1pt}---\hspace{-.1pt}--- (2018): \enquote{Using instrumental
  variables for inference about policy relevant treatment parameters,}
  \emph{Econometrica}, 86, 1589--1619.

\bibitem[\protect\citeauthoryear{Mogstad, Torgovitsky, and Walters}{Mogstad
  et~al.}{2021}]{mogstad2021causal}
\textsc{Mogstad, M., A.~Torgovitsky, and C.~R. Walters} (2021): \enquote{The
  causal interpretation of two-stage least squares with multiple instrumental
  variables,} \emph{American Economic Review}, 111, 3663--98.

\bibitem[\protect\citeauthoryear{Mourifi{\'e}}{Mourifi{\'e}}{2015}]{mourifie2015sharp}
\textsc{Mourifi{\'e}, I.} (2015): \enquote{Sharp bounds on treatment effects in
  a binary triangular system,} \emph{Journal of Econometrics}, 187, 74--81.

\bibitem[\protect\citeauthoryear{Muralidharan, Singh, and
  Ganimian}{Muralidharan et~al.}{2019}]{muralidharan2019disrupting}
\textsc{Muralidharan, K., A.~Singh, and A.~J. Ganimian} (2019):
  \enquote{Disrupting education? Experimental evidence on technology-aided
  instruction in India,} \emph{American Economic Review}, 109, 1426--60.

\bibitem[\protect\citeauthoryear{Russell}{Russell}{2021}]{russell2021sharp}
\textsc{Russell, T.~M.} (2021): \enquote{Sharp bounds on functionals of the
  joint distribution in the analysis of treatment effects,} \emph{Journal of
  Business \& Economic Statistics}, 39, 532--546.

\bibitem[\protect\citeauthoryear{Shaikh and Vytlacil}{Shaikh and
  Vytlacil}{2011}]{SV11}
\textsc{Shaikh, A.~M. and E.~J. Vytlacil} (2011): \enquote{Partial
  identification in triangular systems of equations with binary dependent
  variables,} \emph{Econometrica}, 79, 949--955.

\bibitem[\protect\citeauthoryear{Simon}{Simon}{1986}]{simon1986compact}
\textsc{Simon, J.} (1986): \enquote{Compact sets in the space L p (O, T; B),}
  \emph{Annali di Matematica pura ed applicata}, 146, 65--96.

\bibitem[\protect\citeauthoryear{Taubman, Allen, Wright, Baicker, and
  Finkelstein}{Taubman et~al.}{2014}]{taubman2014medicaid}
\textsc{Taubman, S.~L., H.~L. Allen, B.~J. Wright, K.~Baicker, and A.~N.
  Finkelstein} (2014): \enquote{Medicaid increases emergency-department use:
  evidence from Oregon's Health Insurance Experiment,} \emph{Science}, 343,
  263--268.

\bibitem[\protect\citeauthoryear{Torgovitsky}{Torgovitsky}{2019{\natexlab{a}}}]{torgovitsky2019nonparametric}
\textsc{Torgovitsky, A.} (2019{\natexlab{a}}): \enquote{Nonparametric inference
  on state dependence in unemployment,} \emph{Econometrica}, 87, 1475--1505.

\bibitem[\protect\citeauthoryear{Torgovitsky}{Torgovitsky}{2019{\natexlab{b}}}]{torgovitsky2019partial}
---\hspace{-.1pt}---\hspace{-.1pt}--- (2019{\natexlab{b}}): \enquote{Partial
  identification by extending subdistributions,} \emph{Quantitative Economics},
  10, 105--144.

\bibitem[\protect\citeauthoryear{Vuong and Xu}{Vuong and
  Xu}{2017}]{vuong2017counterfactual}
\textsc{Vuong, Q. and H.~Xu} (2017): \enquote{Counterfactual mapping and
  individual treatment effects in nonseparable models with binary endogeneity,}
  \emph{Quantitative Economics}, 8, 589--610.

\bibitem[\protect\citeauthoryear{Vytlacil}{Vytlacil}{2002}]{vytlacil2002independence}
\textsc{Vytlacil, E.} (2002): \enquote{Independence, monotonicity, and latent
  index models: An equivalence result,} \emph{Econometrica}, 70, 331--341.

\bibitem[\protect\citeauthoryear{Vytlacil and Yildiz}{Vytlacil and
  Yildiz}{2007}]{VY07}
\textsc{Vytlacil, E. and N.~Yildiz} (2007): \enquote{Dummy endogenous variables
  in weakly separable models,} \emph{Econometrica}, 75, 757--779.

\end{thebibliography}

\end{document}